\DeclareMathOperator*{\argmax}{arg\,max}
\DeclareMathOperator*{\argmin}{arg\,min}
\newcommand{\E}{\mathcal{E}}
\newcommand{\e}{e}
\newcommand{\N}{N}
\newcommand{\Psa}{\Pi_{A}}
\newcommand{\Psb}{\Pi_{B}}
\newcommand{\ex}{\mathbb{E}}
\newcommand{\s}{\mathcal{S}}
\newcommand{\A}{\mathcal{A}}
\newcommand{\B}{\mathcal{B}}
\newcommand{\HH}{\mathbb{H}}
\newcommand{\T}{\mathcal{T}}
\newcommand{\PP}{\mathbb{P}}
\newcommand{\I}{\mathbb{I}}
\newcommand{\D}{\mathcal{D}}
\newcommand{\DD}{\mathbb{D}}
\newcommand{\te}{\tilde{\mathcal{E}} }
\newcommand{\shk}{s_h^k}
\newcommand{\ahk}{a_h^k}
\newcommand{\bhk}{b_h^k}
\newcommand{\rhk}{r_h^k}
\newcommand{\shhk}{s_{h+1}^k}
\newcommand{\brmu}{\nu^{\dagger}(\mu)}
\newcommand{\brnu}{\mu^{\dagger}(\nu)}
\newcommand{\reg}{\mathsf{Reg}}
\newcommand{\mui}{\mu_{\text{IDS}}}
\newcommand{\nui}{\nu_{\text{IDS}}}
\newcommand{\muids}{\mu_{\text{IDS}}^k}
\newcommand{\muri}{\mu_{\text{R-IDS}}^k}
\newcommand{\murii}{\mu_{\text{R-IDS}}}
\newcommand{\nurii}{\nu_{\text{R-IDS}}}
\newcommand{\muci}{\mu_{\text{C-IDS}}^k}
\newcommand{\nuci}{\nu_{\text{C-IDS}}^k}
\newcommand{\mucii}{\mu_{\text{C-IDS}}}
\newcommand{\nucii}{\nu_{\text{C-IDS}}}
\newcommand{\pits}{\pi_{\text{TS}}^{(i)}}
\newcommand{\pids}{\pi_{\text{R-IDS}}}
\newcommand{\pidskne}{\pi^{\text{NE},k}_{\text{R-IDS}}}
\newcommand{\pidskcce}{\pi^{\text{CCE},k}_{\text{R-IDS}}}
\newcommand{\nuri}{\nu_{\text{R-IDS}}^k}
\newcommand{\nuids}{\nu_{\text{IDS}}^k}
\newcommand{\muts}{\mu_{\text{TS}}^k}
\newcommand{\nuts}{\nu_{\text{TS}}^k}
\newcommand{\me}{\bar{e}_k}
\newcommand{\met}{\tilde{e}_k}
\newcommand{\meh}{\hat{e}_k}
\newcommand{\tl}{\widetilde{\lambda}}
\newcommand{\trh}{r^{\me}_h}
\newcommand{\rhdd}{r_h^{\me'}}
\newcommand{\C}{C_{\epsilon}}
\newcommand{\Pipi}{\Pi_{i}^{\text{pure}}}
\newcommand{\mc}[1]{{\color{black}#1}}
\newtheorem{theorem}{Theorem}
\newtheorem{lemma}{Lemma}
\newtheorem{remark}{Remark}
\newtheorem{definition}{Definition}
\newtheorem{example}{Example}
\newtheorem{corollary}{Corollary}
\title{Provably Efficient Information-Directed Sampling Algorithms for Multi-Agent Reinforcement Learning}
\author{Qiaosheng Zhang\thanks{Shanghai Artificial Intelligence Laboratory} \ \ 
	Chenjia Bai\footnotemark[1] \ \ Shuyue Hu\footnotemark[1] \ \ Zhen Wang\thanks{Northwestern Polytechnical University} \ \ Xuelong Li\thanks{Institute of Artificial Intelligence (TeleAI), China Telecom Corp Ltd} 
	} 
\begin{document}

\maketitle

\begin{abstract}

This work designs and analyzes a novel set of algorithms for multi-agent reinforcement learning (MARL) based on the principle of information-directed sampling (IDS). These algorithms draw inspiration from foundational concepts in information theory, and are proven to be sample efficient in  MARL settings such as two-player zero-sum Markov games (MGs) and multi-player general-sum MGs. For episodic two-player zero-sum MGs, we present three sample-efficient algorithms for learning Nash equilibrium. The basic algorithm, referred to as {\sc MAIDS}, employs an asymmetric learning structure where the max-player first solves a minimax optimization problem based on the \emph{joint information ratio} of the joint policy, and the min-player then minimizes the \emph{marginal information ratio} with the max-player’s policy fixed. Theoretical analyses show that it achieves a Bayesian regret of $\tilde{O}(\sqrt{K})$ for $K$ episodes. To reduce the computational load of {\sc MAIDS}, we develop an improved algorithm called {\sc Reg-MAIDS}, which has the same Bayesian regret bound while enjoying less computational complexity. Moreover, by leveraging the flexibility of IDS principle in choosing the learning target, we propose two methods for constructing compressed environments based on rate-distortion theory, upon which we develop an algorithm {\sc Compressed-MAIDS} wherein the learning target is a compressed environment. Finally, we extend {\sc Reg-MAIDS} to multi-player general-sum MGs and prove that it can learn either the Nash equilibrium or coarse correlated equilibrium in a sample efficient manner.

\end{abstract}

\section{Introduction}

The problem of multi-agent reinforcement learning (MARL), where multiple agents  learn and make sequential decisions in a shared environment to optimize their individual or collective rewards, has become increasingly relevant in real-world applications such as robot systems~\cite{brambilla2013swarm}, autonomous driving~\cite{shalev2016safe}, multi-player games~\cite{silver2016mastering}, etc. A crucial consideration in MARL is the \emph{sample efficiency}, as it directly influences the practical applicability and scalability of MARL algorithms in real-world multi-agent systems. The pursuit of sample efficient algorithms stands as a critical and pressing concern in the realm of MARL.

In recent years, there have been efforts to develop sample-efficient MARL algorithms, and some accompanied with theoretical guarantees. Most existing provably efficient algorithms are based on the principle of \emph{optimism in the face of uncertainty} (OFU). Notable OFU-based algorithms include Nash Q-learning~\cite{bai2020near}, Nash V-learning~\cite{jin2021v}, and model-based algorithms~\cite{huang2021towards, liu2021sharp}. Another principled but less explored algorithm design paradigm for MARL is \emph{posterior sampling}. Very recently, two studies demonstrated that posterior sampling-based algorithms can be both sample efficient and computationally efficient in MARL, either with full observation~\cite{xiong2022self} or partial observation~\cite{qiu2023posterior}. 

In the general paradigm of posterior sampling, \emph{information-directed sampling} (IDS) stands out as a relatively new yet principled exploration strategy for sequential decision-making problems~\cite{russo2014learning,russo2018learning,hao2022regret}. Drawing inspiration from information theory, IDS tackles the exploration-exploitation tradeoff by requiring the agent to balance the policy's sub-optimality (exploitation) and the acquired information about a \emph{learning target} (exploration) via a quantity called \emph{information ratio}.  Compared with OFU-based and other posterior sampling algorithms (such as Thompson sampling~\cite{russo2018tutorial}), IDS has several appealing features. For instance, (i) it offers flexibility in choosing which kind of information to gain/learn (referred to as the \emph{learning target}) when encountering different information structures;\footnote{This feature is particularly beneficial when the environment is overwhelmingly complex or when it consists of specific information structures that can be compressed. Example~\ref{example:compress} illustrates how to leverage this feature to improve sample efficiency.} and (ii) IDS is able to accommodate parametric uncertainty and heteroscedastic observation noise (e.g., when states can only be partially observed)~\cite{nikolov2018information}. As a result, IDS outperforms Thompson sampling in terms of sample efficiency in a variety of settings~\cite{lu2023reinforcement}.   Empirically, IDS-based algorithms have also demonstrated superior performance and computation efficiency in bandit problems~\cite{kirschner2018information, hao2021information} and reinforcement learning (RL) problems~\cite{nikolov2018information, hao2023exploration}, affirming its practicality in real-world scenarios.   

Despite the theoretical advantages and empirical success, it remains unknown if the IDS principle can be applied to \emph{competitive} or \emph{cooperative} multi-player decision-making problems. If it can, the question that arises is whether IDS-based algorithms can maintain favorable features  that have been demonstrated in bandit and RL problems. To address these unexplored questions, this work initiates an investigation into the potential of IDS in MARL settings, specifically focusing on the episodic two-player zero-sum {Markov game} (MG) and multi-player general-sum MG. We put forth the first line of MARL algorithms designed based on the IDS principle. One attractive feature of our algorithm is that when the environment (or transition kernel) admits a compressed approximation, such approximation can be utilized in constructing the learning target in our algorithm, resulting in an improved regret bound. The concept of compression is inspired by the classical lossy compression problems (a.k.a. rate-distortion theory) in information theory~\cite{berger2003rate} and other related works in the RL literature~\cite{arumugam2021deciding, arumugam2021value, arumugam2022deciding, bai2024pessimistic}.



\paragraph{Main contributions} This work introduces new IDS-based MARL algorithms that are driven by information theory. These algorithms enrich the set of sample-efficient MARL algorithms that are theoretically known to satisfy a $\tilde{O}(\sqrt{K})$  regret bound\footnote{We say $f(n) = \tilde{O}(g(n))$ if $f(n) = O(g(n) \cdot \text{polylog}(n))$.} for MGs with $K$ episodes. We detail our contributions as follows.
\begin{enumerate}
\item We first develop a basic self-play algorithm, named {\sc MAIDS}, for learning Nash equilibrium (NE) in two-player zero-sum MGs. It operates by letting players sequentially optimize the \emph{joint information ratio} and   \emph{marginal information ratio} at each episode. These ratios represent the expected regret over the acquired information about a learning target, and we choose the learning target to be the entire environment in {\sc MAIDS}. Theorem~\ref{thm:maids} presents a Bayesian regret bound that scales as $\tilde{O}(\sqrt{K})$ for $K$ episodes of MGs, and  is valid for all prior distributions of the environment. 

\item We also develop an  algorithm {\sc Reg-MAIDS} that offers reduced computational complexity compared to {\sc MAIDS}, without compromising the sample efficiency (Theorem~\ref{thm:proof_thm2}). {\sc Reg-MAIDS}  can be implemented efficiently by leveraging existing computationally efficient MARL algorithms. 

\item Given the flexibility of the IDS principle in selecting the learning target, we have developed an algorithm named {\sc Compressed-MAIDS}, where the learning target is a \emph{compressed environment} (instead of the entire environment in {\sc MAIDS}).  Inspired by lossy compression in information theory, we introduce two principles for constructing the compressed environment, and provide a Bayesian regret bound for {\sc Compressed-MAIDS} under a specific compressed environment (Theorem~\ref{thm:compressed}).

\item Finally, we extend {\sc Reg-MAIDS} to multi-player general-sum MGs, and show that it can learn either the NE or  coarse correlated equilibrium (CCE) sample-efficiently through the derivation of Bayesian regret bounds (Theorem~\ref{thm:general}).

\end{enumerate}

It is worth noting that while MGs can be viewed as extensions of Markov decision process (MDP) for RL, the unpredictability of other players' actions (called \emph{non-stationary} in~\cite{zhang2021multi}) imposes new challenges in algorithm design and in the construction of compressed environments. For instance, to tackle
the competitive nature of zero-sum MGs, we adopt an \emph{asymmetric learning procedure} in the design of IDS-based algorithms. Specifically,  in order to learn an approximate NE policy for the max-player, the max-player first chooses a policy that optimizes the joint information ratio against a (fictitious) \emph{worst-case opponent}, and the min-player subsequently optimizes the marginal information ratio to \emph{assist} max-player's learning. \mc{This two-step procedure differs notably from previous IDS algorithms for bandit and  single-agent RL settings. Additionally, our analytical techniques, which incorporate information-theoretic methods, also differ from the techniques for OFU-based and Thompson sampling algorithms in the context of MARL.}

\paragraph{Related works} 
Driven by the successful applications of MARL techniques in real-world applications, there has been a growing focus on the theoretical exploration of MARL in recent years. A number of studies have put forth sample-efficient algorithms for various representative MARL settings, including the tabular zero-sum MG~\cite{bai2020provable,bai2020near,qiu2021provably,liu2021sharp}, zero-sum MG with linear function approximation~\cite{xie2020learning,chen2022almost} and with general function approximation~\cite{jin2021power, huang2021towards, xiong2022self,qiu2023posterior,liu2023maximize, mao2023role}, as well as  general-sum MGs~\cite{ jin2020sample,hu2019modelling,song2021can,hu2022dynamics, liu2022sample,foster2023complexity,hu2023best,wang2023breaking,cui2023breaking,xiong2023sample}. These works provides effective, efficient and theoretically-sound solutions  for finding the (approximate) Nash Equilibrium of MGs, based on  the principle of OFU or posterior sampling. In contrast, our proposed algorithms are founded on the IDS principle.  

The IDS principle was first proposed by~\cite{russo2014learning, russo2018learning} for bandit settings, and has since been explored in both the bandit and broader RL settings~\cite{nikolov2018information,liu2018graph,kirschner2021asymptotically, hao2021information, lu2023reinforcement}. Theoretical analyses of IDS in the RL setting are credited to~\cite{lu2019confidence} and~\cite{hao2022regret}. The former develops a Bayesian regret bound when the prior distribution of the environment is specialized to the Dirichlet distribution, while the latter establishes prior-free Bayesian regret bounds for the first time. Although the regret bounds of the algorithms in~\cite{hao2022regret} does not match the information-theoretic lower bound, we notice that, with the help of a new analytical technique for Thompson sampling~\cite{moradipari2023improved}, closing this gap becomes possible. Our work is closely related to~\cite{hao2022regret}, as the algorithm design and proof technique for {\sc MAIDS} and {\sc Reg-MAIDS} are inspired by their work. However, substantial efforts are required to address the non-stationary nature of MGs. Furthermore, the construction of the compressed environment, as well as the design and analyses of {\sc Compressed-MAIDS}, are significantly different from theirs.    Besides,~\cite{chakraborty2023steering} proposes a variant of IDS-based RL algorithm based on the use of \emph{Stein information}, and demonstrates the advantage of computational efficiency of their algorithm through both theoretical and experimental analyses. 

As the ultimate goal of MARL is to make good decisions rather than to estimate/learn the environment, it is often not necessary to learn/explore every granular detail of the environment. When the environment comprises redundant information that is not helpful for making decision, it can be counterproductive for agents to excessively focus on exploring/learning the redundant information rather than to exploit.      This motivates the consideration of compressed environments, and coincidentally, the IDS principle allows for the selection of a compressed environment as the learning target. In recent RL literature,~\cite{hao2022regret} proposes a novel approach for constructing a compressed environment based on the partition of transition kernels and value functions, whereas the subsequent work~\cite{moradipari2023improved} provides a similar but refined construction. However, their approaches cannot be directly applied to MARL due to the effect of the opponent's unpredictable policy on the value functions.   Meanwhile,~\cite{arumugam2022deciding} proposes a construction of the compressed environment based on the rate-distortion theory, where the distortion measure is defined through  the \emph{value equivalence principle}~\cite{grimm2020value,grimm2021proper}. It is worth mentioning that our soft-compression principle is similar to their construction, but the accompanying distortion measure differs. A detailed comparison between the distortion measures is provided in Remark~\ref{remark:distortion}, Section~\ref{sec:compressed}.

\underline{Connections to DEC:} A concept that is related to IDS is the \emph{decision-estimation coefficient} (DEC), which is a complexity measure for sequential decision-making~\cite{foster2021statistical}. The DEC can be decomposed into two terms: one that represents the regret and another that quantifies the cumulative estimation error. The cumulative estimation error is measured by the squared Hellinger distance between the trajectories induced by the true model and induced by the estimated model. In contrast to the estimation error, our work focuses on how much information can be acquired through the interaction with the environment, where the amount of information is measured by a mutual information term.  It is worth noting that there is a similarity between our mutual information term and their estimation error term, to some extent. Specifically, mutual information is equivalent to the KL-divergence between distributions of trajectories, while their estimation error is measured by the squared Hellinger distance between distributions of trajectories.

Besides, a recent work~\cite{foster2023complexity} generalizes the concept of DEC to multi-agent settings, and proposes a complexity measure called \emph{multi-agent decision-estimation coefficient} (MA-DEC) for multi-agent decision making problems. Our work differs from~\cite{foster2023complexity} in three aspects: (i) They consider frequentist settings while we consider Bayesian settings; (ii) The MA-DEC depends on the estimation error about the environment, while the information ratio in our work depends on the acquired information about the environment (or about the compressed environment); (iii) They focus on the statistical complexity while we additionally provide a computationally efficient algorithm (i.e., the {\sc Reg-MAIDS}).

\underline{Connections to AIR:} The IDS principle is also related to the concept of \emph{algorithmic information ratio} (AIR)~\cite{xu2023bayesian}. Their work establishes a theory for analyzing frequentist regrets through Bayesian-type algorithms in sequential decision-making problems.  A central object to be optimized in their work is the AIR, which, in addition to encompassing the terms of expected regret and acquired information involved in the information ratio term, further integrates a reference distribution to be aligned with. However, to the best of our knowledge, investigations into AIR are restricted to the simpler bandit and RL settings, while their applicability in the competitive MARL environment remains unknown.

\paragraph{Outline} The paper is organized as follows. Preliminaries on notational conventions and related information-theoretic concepts are introduced in Section~\ref{sec:pre}.  The mathematical formulation of the two-player zero-sum MG and the learning objectives are introduced in Section~\ref{sec:model}. We then present our sample-efficient algorithms {\sc MAIDS}, {\sc Reg-MAIDS}, and {\sc Compressed-MAIDS} (designed for zero-sum MGs) in Sections~\ref{sec:vanilla}-\ref{sec:compressed}, respectively. In Section~\ref{sec:general}, we introduce the mathematical formulation of the multi-player general-sum MG as well as our accompanying {\sc Reg-MAIDS} algorithm. Section~\ref{sec:conclusion} concludes this work and proposes future research directions. Most detailed proofs are provided in the Appendix.

\section{Preliminaries} \label{sec:pre}

\paragraph{Notations} 
For any positive integer $n \in \mathbb{N}^+$, we denote the set of positive integers ranging from $1$ to $n$ by $[n] \triangleq \{1,2,\ldots, n\}$. For any set $\mathcal{X}$, let $\Delta(\mathcal{X})$ be the probability simplex over $\mathcal{X}$ (i.e., the set of all possible probability distributions on $\mathcal{X}$). For any probability measures  $P$ and $Q$ on a same measurable space $\mathcal{X}$, we define their \emph{KL-divergence} (a.k.a. \emph{relative entropy}) as $$\DD_{\text{KL}}(P\Vert Q) \triangleq \int_{\mathcal{X}} \log (P(dx)/Q(dx)) P(dx)$$ if $P$ is absolutely continuous with respect to $Q$, where $P(dx)/Q(dx)$ is the Radon–Nikodym derivative of $P$ with respect to $Q$. 

We adopt \emph{asymptotic notations}, including $O(.)$, $o(.)$, $\Omega(.)$, $\omega(.)$, and $\Theta(.)$, to describe the limiting behaviour of functions/sequences. For instance, we say a pair of functions $f(n)$ and $g(n)$ satisfies $f(n) = O(g(n))$ if there exist $m > 0$ and $N_0 \in \mathbb{N}^+$ such that for all $n > N_0$, $|f(n)| \le mg(n)$. Moreover, we say $f(n) = \tilde{O}(g(n))$ if $f(n) = O(g(n) \cdot \text{polylog}(n))$.

\paragraph{Information Theory Preliminaries} 
For a pair of random variables $X$ and $Y$, we define their \emph{mutual information} as 
\begin{align}
\I(X;Y) \triangleq \DD_{\text{KL}}(\PP((X,Y) \in \cdot \ ) \Vert \PP(X \in \cdot \ ) \times \PP(Y \in \cdot \ )), \notag
\end{align}
which is also equivalent to the form $\ex_X [\DD_{\text{KL}}(\PP(Y \in \cdot \ |X) \Vert \PP(Y \in \cdot \ )].$
When introducing another random variable $Z$, one can define the \emph{conditional mutual information} as
$$\I(X;Y|Z) \triangleq \ex_Z[\DD_{\text{KL}}(\PP((X,Y) \in \cdot \  |Z) \Vert \PP(X \in \cdot \ |Z) \times \PP(Y \in \cdot \  |Z))].$$
For a collection of random variables $(X_0, X_1, X_2, \ldots, X_n)$, the \emph{chain rule} of mutual information~\cite{cover1999elements} states that 
$$\I(X_0; X_1, X_2, \ldots, X_n) = \sum_{i=1}^n \I(X_0;X_i|X_1, \ldots, X_{i-1}).$$

\section{Zero-Sum Markov Games} \label{sec:model}
In this section, we first introduce the mathematical formulation of the zero-sum MG that comprises two competitive players. The formulation of the more general multi-player general-sum MG is provided in Section~\ref{sec:general}.

The two-player zero-sum MG is denoted by  $\E = (H,\s,\A,\B, \{P_h\}_{h=1}^H, \{r_h\}_{h=1}^H )$, where $H$ is the length of each episode, $\s$ is the set of countable state space with cardinality $|\s| = S$, while $\A$ and $\B$ are the sets of action spaces of the max-player and min-player respectively, with cardinalities $|\A| = A$ and $|\B| = B$. For each step $h \in [H]$, $P_h: \s \times \A \times \B \to \Delta(\s)$ is the \emph{transition kernel} from the current state and actions to the next state. We use $r_h: \s \times \A \times \B \to [0,1]$ to denote the deterministic reward function. Without loss of generality, it is assumed that $\s$, $\A$, $\B$, $\{r_h\}_{h=1}^H$ are known\footnote{Extending to unknown and stochastic reward functions will not pose significant challenges, as learning transition kernels is more difficult than learning reward functions.} while the transition kernels $\{P_h\}_{h=1}^H$ are unknown and random. We also refer to $\E$ as the  \emph{environment} of the MARL problem.   

\paragraph{Prior distributions} We consider a Bayesian setting where we have a prior belief on the environment $\E$, or equivalently, on the transition kernel $\{P_h\}_{h=1}^H$, since the other model parameters $\s, \A, \B$ and $\{ r_h\}$ are assumed to be deterministic. For each step $h \in [H]$, let $\Theta_h$ be the parameter space of $P_h$, and let $\rho_h$ be the \emph{prior distribution} of $P_h$ on $\Theta_h$. Let $\Theta \triangleq \prod_{h=1}^H \Theta_h$ be the parameter space of the kernels $\{P_h\}_{h=1}^H$, and without loss of generality we assume $\Theta$ is convex. Let $\rho \triangleq \prod_{h=1}^H \rho_h$ be the \emph{product} prior distribution of $\{P_h\}_{h=1}^H$ on $\Theta$. The players know the prior distribution $\rho$ but not the realization of $\{P_h\}_{h=1}^H$. Note that the environment $\E$ can be viewed as a random variable with distribution $\rho$ since its only randomness comes from $\{P_h\}_{h=1}^H$.

\paragraph{Interaction Processes} In the zero-sum MG, each episode $k \in [K]$ starts at an initial state $s_1^k$. At each step $h \in [H]$, both the max-player and min-player observe the current state $\shk$, and pick their own actions $\ahk$ and $\bhk$ \emph{simultaneously}. The max-player receives a reward $\rhk = r_h(\shk,\ahk,\bhk)$, while the min-player receives $-\rhk$. The environment then transits to the next state $\shhk$ according to the transition kernel $P_h(\cdot|\shk,\ahk,\bhk)$. The episode ends when the final state $s_{H+1}^k$ is reached. Without loss of generality and for simplicity, we assume that the initial state of each episode $s_1^k$ is fixed to the state $s_1$ over episodes. Generalizing such an assumption to having a random initial state with a fixed but known distribution will not pose significant challenges.    


\paragraph{Policies}  

A policy $\mu$ of the max-player is a collection of mappings $(\mu_1, \ldots, \mu_H)$ such that $\mu_h: \Omega_{h-1} \times \s \to \Delta(\A)$. Note that in the MG with simultaneous moves, the max-player cannot observe the min-player's action when choosing her/his own action. The policy of the min-player is a collection of mappings $\nu = (\nu_1, \ldots, \nu_H)$ such that $\nu_{h}: \Omega_{h-1} \times \s \to \Delta(\B)$.  Moreover, we say a policy is a  \emph{Markov policy} if each mapping $\mu_h: \s \to \Delta(\A)$ (or $\nu_h: \s \to \Delta(\B)$) only takes the current state as the input; that is, the action only depends on the current state but not the past trajectory. We denote the set of all Markov policies of the max-player by $\Psa$, and that of the min-player by $\Psb$. Without loss of generality, we only consider Markov policies in the remaining parts of the paper.
\paragraph{Value functions}
The goal of the max-player, as the name suggests, is to find a policy that maximizes the cumulative reward, while the goal of the min-player is to minimize the cumulative reward. The cumulative reward can be represented by the \emph{value function} and the \emph{action-value function} introduced below. The value function $V^{\E}_{h,\mu,\nu}: \s \to \mathbb{R}$ at step $h$ with respect to the environment $\E$ and the policy $(\mu,\nu)$ is defined as
\begin{align}
V^{\E}_{h,\mu,\nu}(s) \triangleq \ex_{\mu,\nu}^{\E}\left[ \sum_{h'=h}^H r_{h'}(s_{h'},a_{h'},b_{h'}) \big| s_h = s \right],    \quad \forall s \in \s,
\end{align}
where $\ex_{\mu,\nu}^{\E}$ denotes the expectation over the trajectory $\{s_h',a_h',b_h'\}_{h'=h}^H$ that is generated by the interaction between the policy $(\mu,\nu)$ and environment $\E$. The value function $V^{\E}_{h,\mu,\nu}(s)$ represents the expected cumulative reward from state $s$ at step $h$ when executing policy $(\mu,\nu)$. We also define the action-value function $Q^{\E}_{h,\mu,\nu}: \s \times \A \times \B \to \mathbb{R}$ at step $h$ with respect to the environment $\E$ and the policy $(\mu,\nu)$ as
\begin{align}
Q^{\E}_{h,\mu,\nu}(s,a,b) \triangleq \ex_{\mu,\nu}^{\E} \left[ \sum_{h'=h}^H r_{h'}(s_{h'},a_{h'},b_{h'}) \big| s_h = s, a_h = a, b_h = b \right],    \ \ \forall (s,a,b) \in \s\times \A\times \B.
\end{align}
The action-value function $Q^{\E}_{h,\mu,\nu}(s,a,b)$ represents the expected cumulative reward from state $s$ at step $h$ when  executing actions $(a_h,b_h) = (a,b)$ at step $h$ and policy $(\mu,\nu)$ afterwards. Moreover, we have the following Bellman equations:
\begin{align}
&Q^{\E}_{h,\mu,\nu}(s,a,b) = r_h(s,a,b) +\ex_{s' \sim P_h(\cdot|s,a,b)} [V_{h+1,\mu,\nu}^{\E} (s')],    \\
&V_{h,\mu,\nu}^{\E} (s) = \ex_{a \sim \mu_h(\cdot|s), b \sim \nu_h(\cdot|s) } [Q^{\E}_{h,\mu,\nu}(s,a,b)].
\end{align}

\paragraph{Best responses}

For any Markov policy $\mu$ of the max-player, there exists a \emph{best response} $\brmu$ from the min-player, which is a Markov policy  that satisfies $V^{\E}_{h,\mu,\brmu}(s) = \inf_{\nu} V^{\E}_{h,\mu,\nu}(s)$ for all $(s,h)$. Similarly, if the min-player's policy $\nu$ is given, the max-player can also find a best response $\brnu$ that satisfies $V^{\E}_{h,\brnu,\nu}(s) = \sup_{\mu} V^{\E}_{h,\mu,\nu}(s)$ for all $(s,h)$. For notational convenience, we introduce the following abbreviations of value functions:
\begin{align}
   V^{\E}_{h,\mu,\dagger}(s) \triangleq V^{\E}_{h,\mu,\brmu}(s), \quad \text{and} \quad V^{\E}_{h,\dagger,\nu}(s) \triangleq V^{\E}_{h,\brnu,\nu}(s).
\end{align}

\paragraph{Nash Equilibrium} It is well known~\cite{filar2012competitive} that for a MG represented by $\E$, there exists a \emph{Nash Equilibrium (NE) policy} $(\mu^*, \nu^*)$, where both $\mu^*$ and $\nu^*$ are Markov policies, that satisfies
\begin{align}
 V^{\E}_{h,\mu^*, \dagger}(s) =  \sup_{\mu} V^{\E}_{h,\mu, \dagger}(s) \quad \text{and} \quad V^{\E}_{h, \dagger, \nu^*}(s) =  \inf_{\nu} V^{\E}_{h, \dagger,\nu}(s), \quad \forall (s,h) \in \s \times [H].
\end{align}
This means that the policy $\mu^*$ (or policy $\nu^*$) is optimal when the opponent can always choose the best response, i.e., it can be regarded as ``the best response to the best response''. The NE policy $(\mu^*,\nu^*)$  also satisfies the minimax equation:
\begin{align}
    \sup_{\mu} \inf_{\nu} V^{\E}_{h,\mu, \nu}(s) =  V^{\E}_{h,\mu^*, \nu^*}(s) = \inf_{\nu} \sup_{\mu} V^{\E}_{h,\mu, \nu}(s) \quad \forall (s,h) \in \s \times [H]. 
\end{align}
Moreover, for each player, there is no incentive to move away from its NE policy if the other player does not move, in the sense that
\begin{align}
   V^{\E}_{h,\mu, \nu^*}(s) \le V^{\E}_{h,\mu^*, \nu^*}(s) \le V^{\E}_{h,\mu^*, \nu}(s) \label{eq:6}
\end{align}
for any policies $\mu$ and $\nu$ and for all $(s,h) \in \s \times [H]$. We sometimes write $(\mu^*,\nu^*)$ as $(\mu^*(\E),\nu^*(\E))$ and $V^{\E}_{h,\mu^*, \nu^*}(s)$ as $V^{\E}_{h,\mu^*(\E), \nu^*(\E)}(s)$  to explicitly highlight the fact that the Nash policy $(\mu^*,\nu^*)$ depends on the environment $\E$.    

It is also known that the NE policy may not be unique; however, the corresponding values are the same. Thus, one can abbreviate the value  $V^{\E}_{h,\mu^*, \nu^*}(s)$ of any NE policy $(\mu^*,\nu^*)$ as $V^{\E,*}_{h}(s)$ for simplicity. We refer to $V^{\E,*}_{h}(s)$ as the \emph{Nash value}.

\paragraph{Learning objectives} We first focus on the max-player. The goal of the max-player is to learn a policy $\mu$ that is almost as good as the NE policy $\mu^*$, in the sense that the corresponding value $V_{1,\mu,\dagger}^{\E}(s_1)$ against the best response of the min-player is close to the value  $V_{1,\mu^*,\dagger}^{\E}(s_1)$. Note that $V_{1,\mu^*,\dagger}^{\E}(s_1)$ is  equal to  $V_{1}^{\E,*}(s_1)$ by the definition of NE policy. Thus, an appropriate way to define the regret of the max-player is through the difference between $V_{1}^{\E,*}(s_1)$ and $V_{1,\mu,\dagger}^{\E}(s_1)$. 

Suppose the two players interact with the environment for $K \in \mathbb{N}^+$ episodes.  Let $\mu = \{\mu^k \}_{k=1}^K$ be the max-player's policy, where $\mu_k$ is the policy for episode $k$. For a fixed realization of the environment\footnote{We prefer using the uppercase letter $\E$ to denote the random variable while using the lowercase letter $e$ to denote a realization.} $\E = \e$, the cumulative regret over $K$ episodes is defined as
\begin{align}
\reg_K(\e, \mu) \triangleq \sum_{k=1}^K V_1^{e,*}(s_1) - V_{1,\mu_k, \dagger}^e (s_1).
\end{align}
The Bayesian regret is defined as 
\begin{align}
\mathsf{BR}_K (\mu) \triangleq  \ex \left(\reg_K(\E,\mu) \right) = \ex \left(\sum_{k=1}^K V_1^{\E,*}(s_1) - V_{1,\mu_k, \dagger}^{\E} (s_1) \right), 
\end{align}
where the expectation is over the randomness in the environment $\E$ (or equivalently, over the prior probability measure $\rho$ of the transition kernels $\{P_h\}_{h=1}^H$ ). Note that one can also define the regret $\reg_K(\e, \nu)$ and Bayesian regret $\mathsf{BR}_K (\nu)$ for the min-player in a symmetric fashion. \mc{Adding the (Bayesian) regrets of the two players together yields the classical duality gap $\sum_k (V_{1,\dagger, \nu^k}^e -  V_{1,\mu^k, \dagger}^e)$ and its Bayesian version. A zero duality gap implies NE.} 

\begin{table}[h!]
\centering
\caption{Notations for zero-sum MGs} \vspace{3pt}
\begin{tabular}{ p{3cm}  p{12cm}  }
 \hline\hline
 $\Delta(\mathcal{X})$   & Probability simplex over the set $\mathcal{X}$ \\
 $\s$   & State space \\
 $\A, A$&   Action space of the max-player, cardinality of $\A$  \\
 $\B, B$&   Action space of the min-player, cardinality of $\B$ \\
 $H$&   Length of each episode \\
 $P_h$&   Transition kernel at step $h$ ($P_h: \s \times \A \times \B \to \Delta(\s)$) \\
 $r_h$&   Reward function at step $h$ ($r_h: \s \times \A \times \B \to [0,1]$) \\
 $\E$ & Environment (random variable) \\
 $e$ & Environment (realization) \\
 $K$ & Number of episodes \\
 $\Theta$ & Parameter space of the transition kernels $\{P_h\}_{h=1}^H$ \\
 $\rho$ & Prior distribution of the kernels $\{P_h\}_{h=1}^H$ \\
 $\T_h^k$ & Trajectory at episode $k$ up
to step $h$ \\ 
  $\D_k$ & Full trajectory up to the beginning of episode $k$ \\
  $\Pi_A/\Pi_B$ & Set of all Markov policies of the max-player/min-player \\
  $V_{h,\mu,\nu}^{\E}(s)$ & Value of state $s$ at step $h$ (w.r.t. policy $(\mu,\nu)$ and environment $\E$) \\
  $\mu^\dagger(\nu)$ & Best response to policy $\nu$ \\
  $V_{h,\dagger,\nu}^{\E}(s)$ & Value of state $s$ w.r.t. $\nu$ and $\mu^{\dagger}(\nu)$  (i.e., $V_{h,\dagger,\nu}^{\E}(s) 
 = V_{h,\mu^\dagger(\nu),\nu}^{\E}(s)$)\\
  $(\mu^*, \nu^*)$ & Nash equilibrium policy \\
  $V_h^{\E,*}(s)$ & Nash value of state $s$ at step $h$ w.r.t. environment $\E$ \\
  $\chi$ & Learning target \\
  $\ex_k(\cdot)$ & Expectation conditioned on the past trajectory (i.e., $\ex_k(\cdot) = \ex(\cdot | \D_k)$) \\
  $\I_k(\cdot; \cdot)$ & Mutual information conditioned on $\D_k$ \\
  $\Gamma(\mu,\nu,\chi)$ & Information ratio  \\
  $\Lambda_k^{\mu}(\nu, \chi)$ & Conditional information ratio w.r.t. policy $\mu$ \\
  $\me, \me'$ & Mean environments defined in Section~\ref{sec:regularized} \\
  $\Phi_A/\Phi_B$ & Subset of Markov policies of the max-player/min-player \\
  $d_{\Phi_A, \Phi_B}(\cdot,\cdot)$ & Distortion measure w.r.t. $\Phi_A$ and $\Phi_B$ \\
  $\te$ & Compressed environment \\
  $\mathcal{C}_{\delta}$ &  $\delta$-covering of probability simplex $\Delta(\s)$ \\
  $\kappa(\delta)$ & $\delta$-covering number of probability simplex $\Delta(\s)$ \\
  
 \hline\hline
\end{tabular}
\label{table:1}
\end{table}

\section{The  Basic Algorithm: {\sc MAIDS}} \label{sec:vanilla}

This section considers  zero-sum MGs and presents a basic multi-agent version of IDS-based algorithm, named {\sc MAIDS}, with theoretical guarantees on Bayesian regrets. 

\mc{Let $\T_{H+1}^k \triangleq \{s_1^k, a_1^k, b_1^k, r_1^k, \ldots, s_H^k, a_H^k, b_H^k, r_H^k, s_{H+1}^k\}$ be the \emph{trajectory} of episode $k$, and $\D_k \triangleq \{\T^i_{H+1}\}_{i=1}^{k-1} $ be the full trajectory up to the \emph{beginning} of episode $k$ (with  $\D_1 \triangleq \emptyset$). Recall that the environment $\E$ is random. In the following, let $\ex_k(\cdot) \triangleq \ex(\cdot|\D_k)$ denote the expectation w.r.t. the posterior distribution $\E \sim \PP(\cdot|\D_k)$, and let  
$\I_k(X;Y) \triangleq  \DD_{\text{KL}}(\PP((X,Y)\in \cdot \ |\D_k) \Vert \PP(X \in \cdot \ |\D_k) \times  \PP(Y \in \cdot \ |\D_k) )$ 
be the mutual information conditioned on $\D_k$. Note that $\D_k$ is also a random variable, thus $\I_k(X;Y)$ itself is a random variable.} We also point out that $\I_k(X;Y)$ is different from the \emph{conditional mutual information} $\I(X;Y|\D_k)$. In fact, we have $\I(X;Y|\D_k) = \ex_{\D_k}( \I_k(X;Y))$, where $\ex_{\D_k}(\cdot)$ denotes the expectation over the randomness of $\D_k$.  

\paragraph{Information ratio} At the heart of the proposed algorithm is the notion of \emph{joint information ratio}:
\begin{align}
    \Gamma_k(\mu, \nu, \chi) \triangleq \frac{(\ex_k[V_{1,\mu^*(\E),\nu}^{\E}(s_1) - V_{1,\mu,\nu}^{\E}(s_1)] )^2}{\I_k^{\mu,\nu}(\chi; \T^{k}_{H+1})}, \label{eq:ir}
\end{align}
which is defined for each episode $k \in [K]$. Here, $\chi$ is called the \emph{learning target}, of which the most natural choice is the environment $\E$. The superscript $\mu,\nu$ in $\I_k^{\mu,\nu}(\chi; \T^{k}_{H+1})$ means that the trajectory $\T^{k}_{H+1}$ is obtained by executing the policy $(\mu,\nu)$. Note that the information ratio $\Gamma_k$ explicitly depends on $(\mu,\nu,\chi)$, and also  implicitly depends on the past trajectory $\D_k$ since both $\ex_k[\cdot]$ and $\I_k^{\mu,\nu}(\cdot;\cdot)$ are calculated based on the random environment $\E \sim \PP(\cdot|\D_k)$. The numerator of $\Gamma_k$ measures the (squared) expected difference between the values induced by policy $(\mu^*(\E),\nu)$ and by policy $(\mu,\nu)$, which can be understood as the sub-optimality of $\mu$ for a fixed $\nu$. The denominator measures the information about $\chi$ learned by the policy $(\mu,\nu)$ through the interacted trajectory $\T_{H+1}^k$.  Roughly speaking, the joint information ratio measures the \emph{cost} of learning a unit of information about the learning target $\chi$ for the policy $(\mu,\nu)$.

For a fixed max-player's policy $\mu$, we further define the notion of \emph{marginal information ratio} w.r.t. $\mu$ as 
\begin{align}
\Lambda_k^{\mu}(\nu, \chi) \triangleq \frac{(\ex_k[V_{1,\mu,\nu}^{\E}(s_1) - V_{1,\mu,\dagger}^{\E}(s_1)] )^2}{\I_k^{\mu,\nu}(\chi; \T^{k}_{H+1})}.
\end{align}
Here, the numerator measures the expected regret induced by  the min-player's policy $\nu$  w.r.t. the fixed max-player's policy $\mu$. Again, the marginal information ratio $\Lambda_k^{\mu}(\nu, \chi)$ implicitly depends on the  past trajectory~$\D_k$.

\paragraph{Algorithm descriptions} We now introduce our algorithm {\sc MAIDS}, in which the learning target $\chi$ is selected as the environment $\E$. At the beginning of episode $k$, the max-player first calculates the posterior distribution of the environment  $\E \sim \PP(\cdot|\D_k)$ based on the past trajectory $\D_k$ and the prior distribution $\rho$, and then  chooses her policy as 
\begin{align}
    \muids = \argmin_{\mu \in \Psa} \max_{\nu \in \Psb} \ \Gamma_k(\mu,\nu,\E), \label{eq:ids1} 
\end{align}
implying that the max-player aims to minimize the joint information ratio by considering the presence of a worst-case opponent. This approach makes intuitive sense because the max-player has to make decisions prior to knowing the min-player's policy, so she should be more conservative. 

The min-player then chooses her policy based on the knowledge of $\muids$ and the posterior distribution $\E \sim \PP(\cdot|\D_k)$:
\begin{align}
 &\nuids = \argmin_{\nu \in \Psb} \Lambda_k^{\muids}(\nu, \E), 
\end{align}
which minimizes the marginal information ratio w.r.t. the max-player's policy $\muids$.

\paragraph{Regret bounds} A Bayesian regret bound for {\sc MAIDS} is provided in Theorem~\ref{thm:maids} below. 

\begin{theorem} \label{thm:maids}
Suppose the max-player's policy is $\mui = \{\muids\}_{k\in[K]}$ and the min-player's policy is $\nui = \{\nuids\}_{k\in[K]}$, then for any prior distribution~$\rho$, the Bayesian regret of $\mui$ satisfies 
\begin{align}
\mathsf{BR}_K(\mui) \le 8 S^{3/2} AB H^2 \sqrt{  K \log(SKH)}. 
\end{align}
\end{theorem}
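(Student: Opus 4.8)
The plan is to decompose the Bayesian regret of the max-player into a sum of per-episode terms and then bound each term by relating it to the joint information ratio, the marginal information ratio, and a cumulative-information argument controlled by the chain rule of mutual information. First I would write, for each episode $k$,
\begin{align}
\ex_k\!\left[V_1^{\E,*}(s_1) - V_{1,\muids,\dagger}^{\E}(s_1)\right]
&= \ex_k\!\left[V_{1,\mu^*(\E),\nu}^{\E}(s_1) - V_{1,\muids,\nu}^{\E}(s_1)\right] \notag\\
&\quad + \ex_k\!\left[V_{1,\muids,\nu}^{\E}(s_1) - V_{1,\muids,\dagger}^{\E}(s_1)\right] \notag
\end{align}
for a suitable choice of $\nu$, so that the two pieces are exactly the numerators (up to square roots) appearing in $\Gamma_k$ and $\Lambda_k^{\muids}$. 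The natural choice is $\nu = \nuids$; one must check that with this choice the first term on the right is still bounded by $\sqrt{\Gamma_k(\muids,\nuids,\E)\cdot \I_k^{\muids,\nuids}(\E;\T_{H+1}^k)}$, which follows from the minimax definition of $\muids$ in \eqref{eq:ids1} since $\nuids$ is one admissible opponent. This gives
\[
\ex_k\!\left[V_1^{\E,*}(s_1) - V_{1,\muids,\dagger}^{\E}(s_1)\right] \le \left(\sqrt{\Gamma_k(\muids,\nuids,\E)} + \sqrt{\Lambda_k^{\muids}(\nuids,\E)}\right)\sqrt{\I_k^{\muids,\nuids}(\E;\T_{H+1}^k)}.
\]

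Next I would bound the information ratios themselves by a deterministic quantity of order $S^3 A^2 B^2 H^4$ (so its square root is $S^{3/2}ABH^2$, matching the bound). This is the step I expect to be the main obstacle: one has to show that for the {\sc MAIDS} policies, $\Gamma_k(\muids,\nuids,\E)$ and $\Lambda_k^{\muids}(\nuids,\E)$ are both $O(S^3A^2B^2H^4)$ uniformly over $\D_k$. The argument should go by exhibiting a single policy pair that simultaneously makes the numerator small and the denominator (mutual information) large — e.g. a ``probing'' construction that spreads mass over state-action-action triples — and then using the fact that $\muids$ (resp. $\nuids$) is the minimizer, so its information ratio is no larger than that of the probing policy. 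The value-difference numerator would be controlled by a simulation-lemma / performance-difference argument that relates $\ex_k[V - V]$ to the posterior variance of the transition kernels at visited $(s,a,b)$, and the mutual information denominator would be lower bounded using Pinsker's inequality or a direct KL computation showing that visiting a triple reveals $\Omega(1/S)$ nats about the corresponding row of $P_h$; the factors of $H$ come from summing over steps via the Bellman recursion, and the $S, A, B$ factors from the worst-case visitation spread. This is essentially the multi-agent analogue of the information-ratio bounds in~\cite{hao2022regret}, but the non-stationarity forces the bound to hold against the worst-case opponent, which is why the two-step asymmetric structure is used.

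Finally I would assemble the pieces: taking expectations over $\D_k$, summing over $k \in [K]$, and applying Cauchy–Schwarz,
\begin{align}
\mathsf{BR}_K(\mui)
&\le \sum_{k=1}^K \ex\!\left[\left(\sqrt{\Gamma_k} + \sqrt{\Lambda_k^{\muids}}\right)\sqrt{\I_k^{\muids,\nuids}(\E;\T_{H+1}^k)}\right] \notag\\
&\le 2\sqrt{\max_k \|\Gamma_k\|_\infty \vee \|\Lambda_k^{\muids}\|_\infty}\ \sqrt{K}\ \sqrt{\sum_{k=1}^K \I(\E;\T_{H+1}^k \mid \D_k)}. \notag
\end{align}
The crucial observation closing the proof is that $\sum_{k=1}^K \I(\E;\T_{H+1}^k\mid \D_k) = \I(\E; \D_{K+1}) \le \HH(\E)$ by the chain rule of mutual information (using that $\D_{k+1} = \D_k \cup \T_{H+1}^k$), and that the entropy of the environment is at most $O(SH\cdot S A B \log(SKH)) = O(S^2ABH\log(SKH))$ — or more precisely the relevant information gain telescopes and is bounded by a $\log(SKH)$ factor times the number of parameters. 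Plugging in the $O(S^3A^2B^2H^4)$ bound on the information ratios and the entropy bound yields $\mathsf{BR}_K(\mui) = O(S^{3/2}ABH^2\sqrt{K\log(SKH)})$, and tracking constants gives the factor $8$. The delicate points to get right are the exact power of $S$ (whether the information-ratio bound is $S^3$ or something smaller, and whether the entropy contributes $S$ or $S^2$), and ensuring the worst-case-opponent bound in the first step does not lose additional factors.
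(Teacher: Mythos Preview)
Your overall architecture matches the paper: split the per-episode regret at $\nu=\nuids$ into the two pieces whose numerators are those of $\Gamma_k$ and $\Lambda_k^{\muids}$, bound each by $\sqrt{\text{IR}}\cdot\sqrt{\I_k}$, sum via Cauchy--Schwarz, and telescope the mutual information with the chain rule. Your entropy bound $\I(\E;\D_{K+1})=O(S^2ABH\log(SKH))$ is exactly the paper's Lemma~\ref{lemma:mi}. (One small slip: your first displayed identity should be an inequality, since $V_1^{\E,*}(s_1)\le V_{1,\mu^*(\E),\nu}^{\E}(s_1)$ for any $\nu$; the paper uses this in~\eqref{eq:new}.)

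The genuine gap is the information-ratio step, both the value and the method. The correct bound is $\Gamma_k,\Lambda_k^{\muids}\le 4H^3SAB$, not $O(S^3A^2B^2H^4)$; with your claimed IR bound and your (correct) entropy bound the Cauchy--Schwarz product gives $S^{5/2}(AB)^{3/2}H^{5/2}\sqrt{K\log(SKH)}$, so your arithmetic does not close. More importantly, the ``probing'' construction you sketch is not how the paper obtains the bound. The key device is a \emph{Thompson-sampling surrogate}: since $\muids$ minimizes $\max_\nu\Gamma_k(\cdot,\nu,\E)$, one has $\max_\nu\Gamma_k(\muids,\nu,\E)\le\max_\nu\Gamma_k(\muts,\nu,\E)$, and it suffices to bound the TS information ratio (Lemma~\ref{lemma:ts}). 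Under TS the numerator collapses to a value discrepancy between $\E$ and the posterior-mean environment $\met$; a performance-difference lemma (Lemma~\ref{lemma:diff}) turns this into sums of $\langle P_h^{\E}-P_h^{\met},V\rangle$ weighted by the \emph{same} occupancy measure that appears (via Lemma~\ref{lemma:mikl}) in the KL representation of $\I_k^{\muts,\nu}(\E;\T_{H+1}^k)$. A Cauchy--Schwarz over $(s,a,b)$ together with Pinsker then yields the clean $4H^3SAB$ (the $SAB$ from the triple-sum, $H^2$ from $|V|\le H$ squared, one $H$ from the step sum). The identical trick with a min-player TS best response handles $\Lambda_k^{\muids}$. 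A generic probing policy would not align the occupancy measures in numerator and denominator, which is precisely what makes the ratio collapse to a dimension count.
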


\begin{proof}[Proof of Theorem~\ref{thm:maids}]
See Appendix~\ref{sec:proof_thm1} for the detailed proof.
\end{proof}

\begin{remark}
{\em \mc{In a symmetric fashion, the min-player can also run a {\sc MAIDS} algorithm (tailored to  the min-player) to obtain a policy $\tilde{\nu}_{\text{IDS}}$ that has a bounded Bayesian regret $\mathsf{BR}_K(\tilde{\nu}_{\text{IDS}})$ (as per Theorem~\ref{thm:maids}), with the assistance of the max-player. The joint policy $(\mui,\tilde{\nu}_{\text{IDS}})$ of the two players, where $\mui$ is the one in Theorem~\ref{thm:maids}, thus has a bounded Bayesian duality gap and is close to the NE policy.     Such an asymmetric learning structure has also been adopted in~\cite{jin2021power, huang2021towards, xiong2022self}; however, adapting this trick for IDS requires different analysis techniques compared to prior works that are based on OFU or Thompson sampling.  }  }
\end{remark}

\begin{remark}{\em
Note that the regret bound is valid for all possible prior distribution $\rho$ of the environment $\E$, and the scaling $\tilde{O}(\text{poly}(S,A,B,H) \cdot \sqrt{K})$  is order-optimal w.r.t. the number of episodes $K$. Meanwhile, we note that the scalings of $(S,A,B,H)$ have not yet matched the information-theoretic lower bound~\cite{bai2020provable}. This may be partially attributed to the looseness in the analyses. As demonstrated in a recent study on Thompson sampling in RL~\cite{moradipari2023improved}, it is possible to reduce a factor of $\sqrt{SAB}$ in Theorem~\ref{thm:maids}, by adopting refined proof techniques (based on posterior consistency tools). To further reduce the regret, it might be necessary to substitute the learning target from the entire environment $\E$ to certain compressed environments (as discussed in Section~\ref{sec:compressed}), since describing the transition kernels of the entire environment is costly. }
\end{remark}

\section{The {\sc Reg-MAIDS} Algorithm} \label{sec:regularized}

Although the basic algorithm {\sc MAIDS} has been proven to be sample-efficient, it has a high computational complexity due to the requirement of optimizing over the policy spaces of two players ($\Psa$ and $\Psb$) and calculating the mutual information term. To mitigate the computational load of {\sc MAIDS}, we propose a more (computationally) efficient IDS algorithm---called {\sc Reg-MAIDS}---that has the same Bayesian regret bound as {\sc MAIDS} in zero-sum MGs while enjoying less computational complexity.  Again, the learning target $\chi$  is selected as the environment $\E$. This approach is inspired by the prior work~\cite{hao2022regret} on single-agent RL. 

\paragraph{Algorithm descriptions} At the beginning of episode $k$, the max-player first calculates the posterior distribution of the environment $\E \sim \PP(\cdot|\D_k)$, and then chooses the policy
\begin{align*}
\muri = \argmax_{\mu \in \Psa} \min_{\nu \in \Psb} \big\{ \ex_k[ V_{1,\mu,\nu}^{\E}(s_1)] \!+\! \lambda \I_k^{\mu,\nu}(\E; \T^k_{H+1}) \big\}, 
\end{align*}
where $\lambda > 0$ is a parameter that controls the relative weights of the expected value and the regularization term (i.e., the  amount of information learned by the policy $(\mu,\nu)$). Compared to the joint information ratio term of {\sc MAIDS} in~\eqref{eq:ids1}, a key difference is that $\muri$ does not require the calculation of the value $V_{1,\mu^*(\E),\nu}^{\E}(s_1)$. As we will see later in this section, this modification  leads to a significant reduction in the complexity of searching for the optimal solution. 

The min-player then chooses her policy based on the knowledge of $\muri$ and the posterior distribution $\E \sim \PP(\cdot|\D_k)$:
\begin{align*}
\nuri = \argmin_{\nu \in \Psb} \big\{\ex_k[V^{\E}_{1,\muri,\nu}(s_1)] \!-\! \tl \I_{k}^{\muri,\nu}(\E; \T^k_{H+1}) \big\}, 
\end{align*}
where $\tl > 0$ is a different parameter. Below, we first present a regret bound for {\sc Reg-MAIDS}, and then introduce computationally efficient implementation methods. 

\paragraph{Regret bounds} Theorem~\ref{thm:proof_thm2} below shows that {\sc Reg-MAIDS} achieves the same Bayesian regret bound as {\sc MAIDS} when setting $\lambda$ and $\tl$ appropriately. 

\begin{theorem}\label{thm:proof_thm2}
Suppose the max-player's policy is $\murii = \{\muri\}_{k\in[K]}$ and the min-player's policy is $\nurii = \{\nuri\}_{k\in[K]}$, with $\lambda = \tl = \sqrt{2KH^2/S\log(SKH)}$. For any prior~$\rho$, the Bayesian regret of $\murii$ satisfies 
\begin{align}
\mathsf{BR}_K(\murii) \le 8 S^{3/2} AB H^2 \sqrt{  K \log(SKH)}. 
\end{align}
\end{theorem}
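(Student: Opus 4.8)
The plan is to reduce the regret analysis of {\sc Reg-MAIDS} to that of {\sc MAIDS} by a Lagrangian (AM--GM) argument, reusing the information-ratio bound established in the proof of Theorem~\ref{thm:maids}. Fix an episode $k$ and condition on $\D_k$. Adding and subtracting $V_{1,\muri,\nuri}^{\E}(s_1)$ splits the max-player's conditional regret into two pieces,
\[
\ex_k\!\big[V_1^{\E,*}(s_1)-V_{1,\muri,\dagger}^{\E}(s_1)\big]=\ex_k\!\big[V_1^{\E,*}(s_1)-V_{1,\muri,\nuri}^{\E}(s_1)\big]+\ex_k\!\big[V_{1,\muri,\nuri}^{\E}(s_1)-V_{1,\muri,\dagger}^{\E}(s_1)\big],
\]
which play the roles of the joint information ratio $\Gamma_k(\muri,\nuri,\E)$ and the marginal information ratio $\Lambda_k^{\muri}(\nuri,\E)$, respectively. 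I will bound each piece by a quantity of the form $\bar{\Gamma}/(4\lambda)+\lambda\,\I_k^{\muri,\nuri}(\E;\T^k_{H+1})$ (resp.\ with $\tl$), where $\bar{\Gamma}=\tilde{O}(\mathrm{poly}(S,A,B,H))$ is the uniform bound on the joint and marginal information ratios already produced in the proof of Theorem~\ref{thm:maids}.

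For the first piece, let $\mu^{\circ}$ be the comparator policy that the proof of Theorem~\ref{thm:maids} uses to certify $\min_{\mu}\max_{\nu}\Gamma_k(\mu,\nu,\E)\le\bar{\Gamma}$ (the posterior-sampling policy, which plays $\mu^*(\E')$ for an independent $\E'\sim\PP(\cdot\,|\D_k)$); then $\ex_k[V_{1,\mu^*(\E),\nu}^{\E}(s_1)-V_{1,\mu^{\circ},\nu}^{\E}(s_1)]\le\sqrt{\bar{\Gamma}\,\I_k^{\mu^{\circ},\nu}(\E;\T^k_{H+1})}$ for every $\nu\in\Psb$. Combining this with the elementary bound $\lambda x-\sqrt{\bar{\Gamma}x}\ge-\bar{\Gamma}/(4\lambda)$ for $x\ge0$ and with $\min_{\nu}\ex_k[V_{1,\mu^*(\E),\nu}^{\E}(s_1)]\ge\ex_k[V_1^{\E,*}(s_1)]$ (which holds since $V_1^{\E,*}(s_1)=\min_{\nu}V_{1,\mu^*(\E),\nu}^{\E}(s_1)$), one gets $\min_{\nu}\{\ex_k[V_{1,\mu^{\circ},\nu}^{\E}(s_1)]+\lambda\,\I_k^{\mu^{\circ},\nu}(\E;\T^k_{H+1})\}\ge\ex_k[V_1^{\E,*}(s_1)]-\bar{\Gamma}/(4\lambda)$. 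Since $\muri$ maximizes precisely $\mu\mapsto\min_{\nu}\{\ex_k[V_{1,\mu,\nu}^{\E}(s_1)]+\lambda\,\I_k^{\mu,\nu}(\E;\T^k_{H+1})\}$, the same lower bound holds with $\muri$ in place of $\mu^{\circ}$, and evaluating the inner minimum at the particular policy $\nu=\nuri\in\Psb$ yields $\ex_k[V_1^{\E,*}(s_1)-V_{1,\muri,\nuri}^{\E}(s_1)]\le\bar{\Gamma}/(4\lambda)+\lambda\,\I_k^{\muri,\nuri}(\E;\T^k_{H+1})$.

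The second piece is symmetric: with $\nu^{\circ}$ the posterior-sampling best response to $\muri$ that certifies $\Lambda_k^{\muri}(\nu^{\circ},\E)\le\bar{\Gamma}$, the same AM--GM step together with the min-player's rule $\nuri=\argmin_{\nu}\{\ex_k[V_{1,\muri,\nu}^{\E}(s_1)]-\tl\,\I_k^{\muri,\nu}(\E;\T^k_{H+1})\}$ gives $\ex_k[V_{1,\muri,\nuri}^{\E}(s_1)-V_{1,\muri,\dagger}^{\E}(s_1)]\le\bar{\Gamma}/(4\tl)+\tl\,\I_k^{\muri,\nuri}(\E;\T^k_{H+1})$. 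Adding the two bounds, summing over $k\in[K]$, and taking total expectation, the chain rule of mutual information collapses $\sum_k\ex[\I_k^{\muri,\nuri}(\E;\T^k_{H+1})]$ into $\I(\E;\D_{K+1})$, which is controlled (up to an $O(1)$ discretization bias) by a $\delta$-covering argument on the kernel parameter space $\Theta$ — the same device used for Theorem~\ref{thm:maids} — giving $\I(\E;\D_{K+1})\lesssim S^2ABH\log(SKH)$. Hence $\mathsf{BR}_K(\murii)\le\frac{K\bar{\Gamma}}{4}\big(\tfrac1\lambda+\tfrac1{\tl}\big)+(\lambda+\tl)\,\I(\E;\D_{K+1})$, and choosing $\lambda=\tl=\sqrt{2KH^2/(S\log(SKH))}$ balances the two contributions and reproduces the bound $8S^{3/2}ABH^2\sqrt{K\log(SKH)}$ of Theorem~\ref{thm:maids}.

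I expect the main obstacle to be the asymmetry of the two regularizers. The max-player commits to $\muri$ using the regularizer $+\lambda\,\I_k$ inside a worst-case minimum over $\nu$, whereas the min-player subsequently responds using the different regularizer $-\tl\,\I_k$; one must therefore check that the two Lagrangian inequalities compose — in particular that it is legitimate to evaluate the max-player's inner minimum at the eventually-chosen $\nuri$ (it is, since $\nuri\in\Psb$ and the inner object is a genuine $\min$ over all of $\Psb$), and that the comparators $\mu^{\circ},\nu^{\circ}$ can be taken to be the posterior-sampling policies against the relevant \emph{fixed} opponent so that the {\sc MAIDS} information-ratio bound applies verbatim. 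A secondary technical point, inherited from Theorem~\ref{thm:maids}, is re-establishing the uniform bound $\bar{\Gamma}$ in the Markov-game setting while accounting for the opponent's influence on both the law of $\T^k_{H+1}$ and the value functions; but since that bound is already available, the genuinely new content here is confined to the Lagrangian reduction and the handling of the asymmetric regularization.
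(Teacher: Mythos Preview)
Your proposal is correct and follows essentially the same route as the paper's proof: decompose the regret as in~\eqref{eq:ji}, use the AM--GM/Lagrangian step to convert the information-ratio bound $\bar{\Gamma}=4H^3SAB$ (Lemma~\ref{lemma:ts}) into a lower bound on the regularized objective at the Thompson-sampling comparator, invoke the max--min/min optimality of $\muri$ and $\nuri$, then sum and apply the chain rule together with Lemma~\ref{lemma:mi}. The only difference is cosmetic: the paper routes the comparison through $\muids$ first (using $\min_{\nu'}\{\cdots\muri\cdots\}\ge\min_{\nu'}\{\cdots\muids\cdots\}$ and then $\max_{\nu'}\Gamma_k(\muids,\nu',\E)\le\max_{\nu'}\Gamma_k(\muts,\nu',\E)$), whereas you compare $\muri$ directly with $\mu^{\circ}=\muts$ in the regularized objective---your version is a harmless shortcut that drops an unnecessary intermediate step.
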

\begin{proof}[Proof of Theorem~2]
See Appendix~\ref{sec:proof_thm2} for the detailed proof.
\end{proof}

\paragraph{Equivalent formulas of $\muri$ and $\nuri$}
Given the past trajectory $\D_k$, we define a \emph{mean environment} as $\me = (H,\s,\A,\B,\{P_h^{\me}\}_{h=1}^H, \{r_h^{\me}\}_{h=1}^H)$, where the transition kernel
$$P_h^{\me}(\cdot | s,a,b) \triangleq \ex_{\E \sim \PP(\cdot|\D_k)}[P_h^{\E}(\cdot | s,a,b)], \quad \forall (s,a,b,h) \in (\s,\A,\B,[H]) $$ and the reward function $$\trh(s,a,b) \triangleq r_h(s,a,b) + \lambda \ex_k\left[ \DD_{\text{KL}}(P_h^{\E}(\cdot|s,a,b) \Vert P_h^{\me}(\cdot|s,a,b)) \right], \quad \forall (s,a,b,h) \in (\s,\A,\B,[H]).$$ 
\mc{The definition of $P_h^{\me}(\cdot | s,a,b)$ is also called \emph{Bayesian model averaging}~\cite{hoeting1999bayesian} in statistics}.
For the max-player, Lemma~\ref{lemma:eq} below shows that the objective function $ \ex_k\left[ V_{1,\mu,\nu}^{\E}(s_1) \right] + \lambda \I_k^{\mu,\nu}(\E; \T^k_{H+1})$ in the formula of $\muri$ is equal to the value $V_{1,\mu,\nu}^{\me}(s_1)$ in the mean environment $\me'$.

\begin{lemma}\label{lemma:eq}
    For any policy $(\mu,\nu)$, we have $$\ex_k\left[ V_{1,\mu,\nu}^{\E}(s_1) \right] + \lambda \I_k^{\mu,\nu}(\E; \T^k_{H+1}) = V_{1,\mu,\nu}^{\me}(s_1).$$ 
\end{lemma}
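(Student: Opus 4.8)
The plan is to prove the identity by expanding the mutual information term using the chain rule of mutual information (applied step by step along the trajectory), converting it into a sum of conditional KL-divergences between the true transition kernels and the Bayesian-averaged kernel $P_h^{\me}$, and then recognizing the resulting expression as a value function in the mean environment $\me'$ via the Bellman recursion. The key observation is that the trajectory $\T^k_{H+1} = \{s_1, a_1, b_1, r_1, \ldots, s_H, a_H, b_H, r_H, s_{H+1}\}$ is generated by running the fixed policy $(\mu,\nu)$, and the only randomness that depends on $\E$ (given $\D_k$ and the policy) enters through the transitions $s_{h+1} \sim P_h^{\E}(\cdot|s_h,a_h,b_h)$; the actions are drawn from $(\mu,\nu)$, which do not depend on $\E$, and the rewards are deterministic functions of $(s_h,a_h,b_h)$.

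First I would apply the chain rule of mutual information to decompose
$$\I_k^{\mu,\nu}(\E; \T^k_{H+1}) = \sum_{h=1}^{H} \I_k^{\mu,\nu}(\E; s_{h+1} \mid \T^k_h),$$
where $\T^k_h$ is the partial trajectory up to step $h$ (including $s_h, a_h, b_h$); the terms corresponding to revealing $a_h, b_h, r_h$ vanish because, conditioned on $\T^k_h$ and the policy, those quantities are independent of $\E$. Next, for each $h$, I would rewrite $\I_k^{\mu,\nu}(\E; s_{h+1}\mid \T^k_h)$ as an expectation over the partial trajectory of $\ex_k[\DD_{\text{KL}}(P_h^{\E}(\cdot|s_h,a_h,b_h) \Vert \PP(s_{h+1}\in\cdot \mid \T^k_h))]$, and identify $\PP(s_{h+1}\in\cdot \mid \T^k_h) = \ex_k[P_h^{\E}(\cdot|s_h,a_h,b_h)] = P_h^{\me}(\cdot|s_h,a_h,b_h)$ by definition of the mean environment (here one uses that the posterior over $\E$ given $\T^k_h$ coincides with the posterior given $\D_k$ together with the partial trajectory, and that only the visited $(s_h,a_h,b_h)$ matters pointwise — this is where the product structure of the prior $\rho = \prod_h \rho_h$ is used, so that observing the trajectory up to step $h$ does not update the belief about $P_h$ at the current step). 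Putting these together,
$$\lambda\,\I_k^{\mu,\nu}(\E; \T^k_{H+1}) = \ex^{\me}_{\mu,\nu}\!\left[\sum_{h=1}^H \lambda\, \ex_k\big[\DD_{\text{KL}}(P_h^{\E}(\cdot|s_h,a_h,b_h)\Vert P_h^{\me}(\cdot|s_h,a_h,b_h))\big]\right],$$
where the outer expectation is over the trajectory under $(\mu,\nu)$ in the mean environment (since the marginal law of the trajectory under the true-posterior mixture equals its law under $P_h^{\me}$ — another consequence of Bayesian model averaging).

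Then I would combine this with $\ex_k[V^{\E}_{1,\mu,\nu}(s_1)] = \ex^{\me}_{\mu,\nu}[\sum_{h=1}^H r_h(s_h,a_h,b_h)]$ (again using that the trajectory law under the posterior mixture equals the trajectory law in the mean environment), so that the left-hand side becomes
$$\ex^{\me}_{\mu,\nu}\!\left[\sum_{h=1}^H \Big(r_h(s_h,a_h,b_h) + \lambda\,\ex_k[\DD_{\text{KL}}(P_h^{\E}(\cdot|s_h,a_h,b_h)\Vert P_h^{\me}(\cdot|s_h,a_h,b_h))]\Big)\right] = \ex^{\me}_{\mu,\nu}\!\left[\sum_{h=1}^H \trh(s_h,a_h,b_h)\right],$$
which is exactly $V^{\me}_{1,\mu,\nu}(s_1)$ by the definition of the value function with reward $\trh$ and transition kernel $P_h^{\me}$. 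A cleaner alternative to the explicit trajectory-expectation bookkeeping is to prove the identity by backward induction on $h$: define $g_h(s) \triangleq \ex_k[V^{\E}_{h,\mu,\nu}(s)] + \lambda\, \I_k^{\mu,\nu}(\E; \T^k_{\geq h} \mid s_h = s)$ and show $g_h = V^{\me}_{h,\mu,\nu}$ by peeling off one Bellman step, using the chain rule to split off the step-$h$ information term and matching it with the KL-divergence bonus in $\trh$.

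\textbf{Main obstacle.} The delicate point — and the step I expect to require the most care — is the justification that $\PP(s_{h+1}\in\cdot \mid \T^k_h, \D_k)$ equals the Bayesian model average $P_h^{\me}(\cdot|s_h,a_h,b_h) = \ex_{\E\sim\PP(\cdot|\D_k)}[P_h^{\E}(\cdot|s_h,a_h,b_h)]$, i.e. that the within-episode trajectory observed so far does \emph{not} change the relevant posterior. This relies crucially on the product prior $\rho = \prod_{h=1}^H \rho_h$ (so the kernels at different steps are a priori independent) together with the fact that within a single episode each step $h$ is visited at most once, so the transitions observed at steps $1,\ldots,h-1$ carry information only about $P_1,\ldots,P_{h-1}$ and leave the posterior of $P_h$ (conditioned on $\D_k$) untouched. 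I would state this as a short lemma and verify it carefully, since the entire reduction to the mean environment hinges on it; the remaining manipulations (chain rule, Bellman recursion, linearity of expectation) are routine.
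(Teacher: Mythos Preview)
Your proposal is correct and follows essentially the same route as the paper. The paper's proof starts from the right-hand side, expands $V_{1,\mu,\nu}^{\me}(s_1)$ via occupancy measures, and then invokes a separate lemma (their Lemma~\ref{lemma:mikl}) that converts the weighted sum of KL terms into $\I_k^{\mu,\nu}(\E;\T^k_{H+1})$; the proof of that lemma is exactly your chain-rule decomposition together with the product-prior observation you flagged as the ``main obstacle'' --- so you have effectively inlined that lemma rather than stating it separately.
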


\begin{proof}[Proof of Lemma~\ref{lemma:eq}]
See Appendix~\ref{sec:proof_eq}.     
\end{proof}
Given Lemma~\ref{lemma:eq}, one can rewrite the formula of $\muri$ as 
\begin{align}
\muri = \argmax_{\mu \in \Psa} \min_{\nu \in \Psb} V_{1,\mu,\nu}^{\me}(s_1),     \label{eq:hu}
\end{align}
which is equivalent to finding a max-player's NE policy $\mu^*$ in the  environment $\me$.
For the min-player, we define $\me' = (H,\s,\A,\B,\{P_h^{\me'}\}_{h=1}^H, \{\rhdd\}_{h=1}^H)$ as another mean environment, where $P_h^{\me'} = P_h^{\me}$ and  $\rhdd(s,a,b) \triangleq r_h(s,a,b) - \tl \ex_k\left[ \DD_{\text{KL}}(P_h^{\E}(\cdot|s,a,b) \Vert P_h^{\me}(\cdot|s,a,b)) \right].$   Similar to Lemma~\ref{lemma:eq}, one can show that for any $\mu$,
\begin{align*}
\ex_k[V^{\E}_{1,\mu,\nu}(s_1)] - \tl \I_{k}^{\mu,\nu}(\E; \T^k_{H+1}) = V_{1,\mu, \nu}^{\me'}(s_1).
\end{align*}
Thus, the formula of $\nuri$ can be rewritten as
\begin{align}
    \nuri = \argmin_{\nu \in \Psb} V_{1,\muri, \nu}^{\me'}(s_1),
\end{align}
which is equivalent to finding the best response to the max-player's policy $\muri$ in the environment $\me'$.

\paragraph{Implementing {\sc Reg-MAIDS} efficiently}
Following the prior work~\cite{hao2022regret} in RL settings, we assume the existence of a \emph{posterior sampling oracle}, which, upon receiving a call, outputs a sample of the environment $\E$ from the posterior distribution $\PP(\cdot|\D_k)$. Multiple calls to the oracle lead to multiple \emph{independent} samples. As discussed in~\cite{hao2022regret}, the posterior sampling oracle can be obtained, either exactly or approximately, using epistemic neural networks~\cite{osband2021epistemic}. 

Given the posterior sampling oracle, the max-player can accurately approximate the mean environment $\me$ with transition kernel $\{P_h^{\me}\}_{h=1}^H$ and reward function $\{\trh \}_{h=1}^H$ using Monte Carlo sampling. Next, based on the equivalent formula of $\muri$ in~\eqref{eq:hu}, the max-player can find a NE policy $\mu^*$ in the environment $\me'$ efficiently by using any methods such as value-based and policy-based methods. Similarly, the min-player can first approximate the mean environment $\me'$ using Monte Carlo sampling, and then finds the best response to the max-player's policy $\muri$ in the environment $\me'$ using any efficient single-agent RL algorithms.

\section{Learning Compressed Environments} \label{sec:compressed}

An appealing feature of IDS is that it provides freedom and flexibility to select the learning target $\chi$. In situations where the environment is overwhelmingly complex and surpasses agents' computation capacity, it can be beneficial for agents to focus only on essential parts of the environment, while disregarding less significant details that have less impact on decision-making (as illustrated in Example~\ref{example:compress} below). This is analogous to the concept of lossy compression in information theory.

\mc{
\begin{example} \label{example:compress}
    Suppose an MG (environment) $\E$ can be decomposed into the \emph{product} of two sub-MGs: the main sub-MG $\mathcal{M}_1$ and side sub-MG $\mathcal{M}_2$, each with state space $\bar{\s}$ and action spaces $\bar{\A}$ and $\bar{\B}$. An illustration is provided in Figure~\ref{fig:product}. Assume the reward of side sub-MG $\mathcal{M}_2$ at each step is upper-bounded by $\delta$.  Intuitively,  as $\delta \to 0$, the impact of $\mathcal{M}_2$ on the overall decision-making becomes increasingly insignificant. Consequently, agents can prioritize learning about $\mathcal{M}_1$ (while ignoring $\mathcal{M}_2$) in order to make good decisions.   
\end{example}
}

\begin{figure}[t]
\includegraphics[width=8cm]{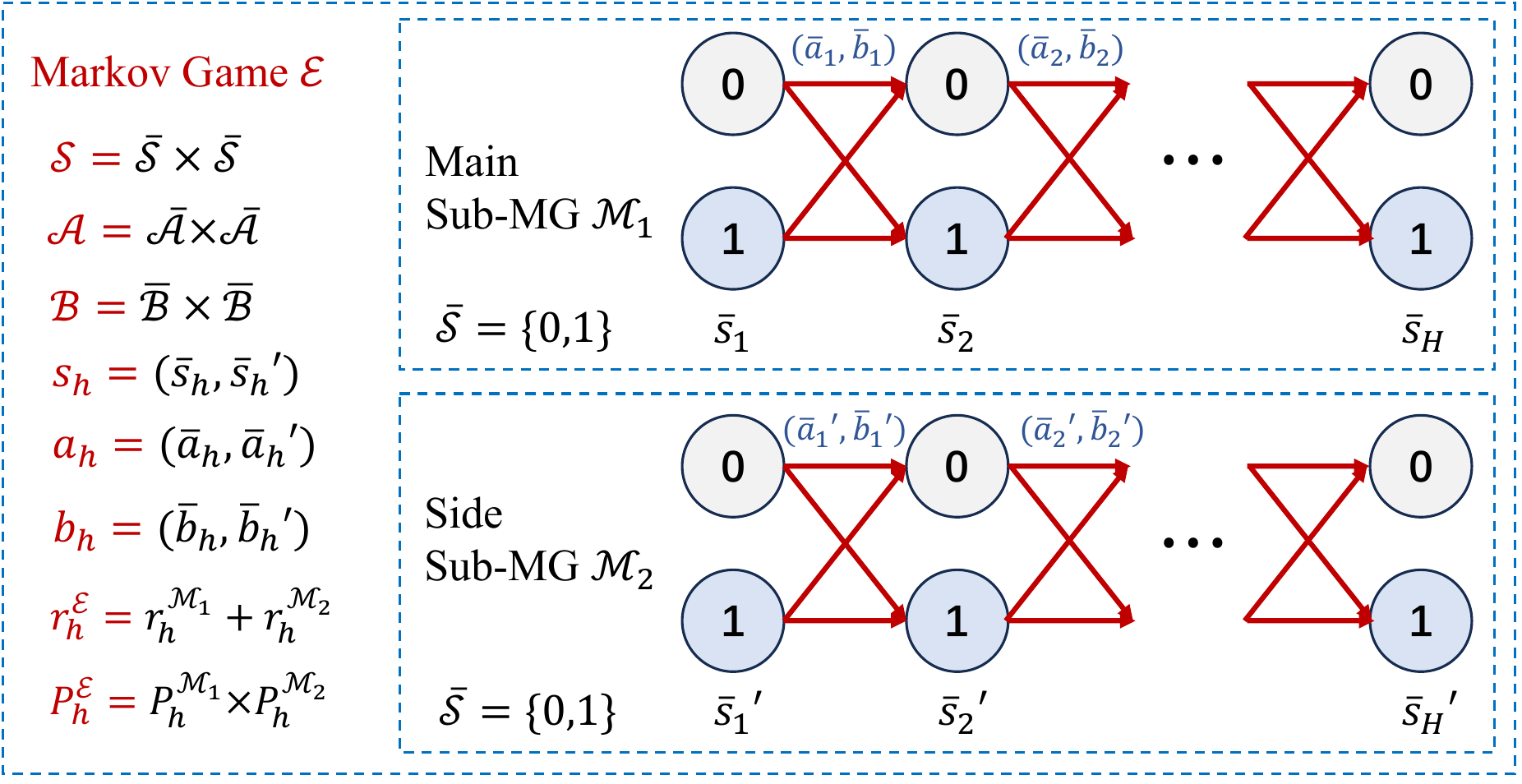}
\centering
\caption{Illustration of the MG in Example~\ref{example:compress}}
 \label{fig:product}
\end{figure}

Following this intuition, we introduce new IDS-based algorithms where the learning target $\chi$ is not the environment $\E$ itself but a compressed environment.\footnote{The \emph{surrogate environment} introduced in~\cite{hao2022regret,moradipari2023improved} is conceptually similar to the compressed environment. We call it compressed environment because it is analogous to lossy compression problem in information theory.} We first propose a distortion measure to evaluate the discrepancy between any two environments, and then put forth two principles for constructing compressed environments.

\paragraph{Distortion measure}
Inspired by  \emph{lossy compression} (a.k.a. rate-distortion theory) in information theory, we propose to compress the environment $\E$ to a compressed environment $\te$ that is ``close'' to $\E$ under a carefully defined distortion measure. 
Recall that $\Pi_A$ and $\Pi_B$ are the sets of max-player's and min-player's Markov policies respectively, and let $\Phi_A \subseteq \Pi_A$ and $\Phi_B \subseteq \Pi_B$ be any subsets of their policies.  For any environments $e, e' \in \Theta$, we define  
\begin{align*}
d_{\Phi_A, \Phi_B}(e,e') \triangleq \sup_{(\mu,\nu)\in \Phi_A \times \Phi_B} \big| V_{1,\mu,\nu}^e(s_1) - V_{1,\mu,\nu}^{e'}(s_1) \big|
\end{align*}
as a distortion measure  between $e$ and $e'$ through their value under the worst-case policy $(\mu,\nu) \in \Phi_A \times \Phi_B$. A comparison to the distortion measure proposed in~\cite{arumugam2022deciding} is provided in Remark~\ref{remark:distortion}, Appendix~\ref{sec:proof_thm3}.   

\begin{remark}{\em
A sufficient condition for the distortion $d_{\Phi_A, \Phi_B}(e,e')$ to be small is that the transition kernels $P_h^e(\cdot|s,a,b)$ and $P_h^{e'}(\cdot|s,a,b)$ are close enough for \emph{all} $(s,a,b,h)$.  However, this condition may not be necessary. \mc{For instance, consider two environments $e$ and $e'$ in the context of Example~\ref{example:compress}. Suppose $e$ and $e'$ share the same main sub-MG $\mathcal{M}_1$ but differ in their side sub-MGs. In this case, even though  $P_h^e(\cdot|s,a,b)$ and $P_h^{e'}(\cdot|s,a,b)$ are not close enough for some $(s,a,b,h)$, the distortion still satisfies  $d_{\Phi_A, \Phi_B}(e,e') \le  2H\delta$ for any $\Phi_A \subseteq \Pi_A$ and $\Phi_B \subseteq \Pi_B$.  }
}
\label{remark:distortion}
\end{remark}

\paragraph{Construction of compressed environments} Given the distribution of the environment $\E$ and the distortion measure $d_{\Phi_A, \Phi_B}$, we introduce two methods for constructing the compressed environment $\te$ (which is a random variable).

\begin{enumerate}
\item (Soft-compression) Following standard approaches in rate-distortion theory, we require the compressed environment $\te$ to be correlated with  $\E$ in such a way that 
\begin{align}
    \ex[d_{\Phi_A, \Phi_B}(\E,\te)] \le \epsilon \label{eq:constraint}
\end{align}
for some  tolerance level $\epsilon > 0$. \mc{In particular, a popular choice of $\te$ is the one that minimizes the mutual information with $\E$ subject to~\eqref{eq:constraint}, i.e.,}\footnote{In theory, one can efficiently compute the optimal $\te$ in~\eqref{eq:ves} using the Blahut-Arimoto algorithm~\cite{blahut1972computation,arimoto1972algorithm}. However, in practice, challenges arise when dealing with continuous random variables, and we refer the readers to~\cite{arumugam2022deciding} for a discussion of effective empirical methods. } 
\begin{align}
\mc{\te = \argmin_{\te': \ex[d_{\Phi_A, \Phi_B}(\E,\te')] \le \epsilon} \I(\E ;\te').} \label{eq:ves} 
\end{align}
The choice in~\eqref{eq:ves} is adopted by~\cite{arumugam2022deciding} for their value-equivalent sampling algorithm in RL, and is also the standard choice for many lossy compression problems in information theory~\cite{berger2003rate}.

\item (Hard-compression) Another way to define the compressed environment $\te$ is to require the distortion between $\E$ and  $\te$ to be small almost everywhere, i.e.,
\begin{align}
\PP(d_{\Phi_A, \Phi_B}(\E,\te) > \epsilon ) = 0. \label{eq:constraint2}
\end{align}
This is more stringent compared to~\eqref{eq:constraint}, since~\eqref{eq:constraint} only requires the distortion between the pair of random variables $(\E, \te)$ to be small in the \emph{average} sense.\footnote{This is also the reason why we name the two principles ``soft-compression'' and ``hard-compression'' respectively.} \mc{To achieve~\eqref{eq:constraint2}, a feasible approach is to partition the entire environment space~$\Theta$ into $\C$ disjoint subspaces $\{\Theta_c \}_{c=1}^{\C}$, such that any two environments $e$ and $e'$ in the same subspace satisfy $d_{\Phi_A, \Phi_B}(e,e') \le \epsilon$. Then, we arbitrarily select an element
$e_c \in \Theta_c$ as the reference of $\Theta_c$, and construct $\te$ such that} 
\begin{align}
 \mc{   \te = e_c \ \ \text{if and only if} \ \ \E \in \Theta_c. }  \label{eq:ves2}
\end{align}
A concrete construction of $\{\Theta_c \}_{c=1}^{\C}$ is provided in Appendix~\ref{appendix:F2} based on partitioning the transition kernels.
\end{enumerate}

\paragraph{The {\sc Compressed-MAIDS} Algorithm}  We now introduce the algorithm {\sc Compressed-MAIDS}.  \mc{At the beginning of episode $k$, the max-player first calculates the posterior distribution of the environment  $\E \sim \PP(\cdot|\D_k)$, and then chooses a compressed environment $\te$ via the above construction rules (e.g., following~\eqref{eq:ves} or~\eqref{eq:ves2}).} Next, the max-player chooses the policy 
\begin{align*}
&\muci = \argmin_{\mu \in \Psa} \max_{\nu \in \Psb} \tilde{\Gamma}_k(\mu,\nu,\te), \\
&\text{where } \tilde{\Gamma}_k(\mu,\nu,\te) \triangleq \frac{(\ex_k[V_{1,\mu^*(\E),\nu}^{\te}(s_1) - V_{1,\mu,\nu}^{\te}(s_1)])^2}{\I_k^{\mu,\nu}(\te; \T^{k}_{H+1})}.
\end{align*}
The min-player calculates the distributions of $\E$ and $\te$ in the same way as the max-player, and chooses her policy as:
\begin{align*}
&\nuci = \argmin_{\nu \in \Psb} \tilde{\Lambda}_k^{\muci} (\nu, \te), \\
 &\text{where }  \tilde{\Lambda}_k^{\mu} (\nu, \te) \triangleq
 \frac{(\ex_k[V_{1,\mu,\nu}^{\te}(s_1) - V_{1,\mu,\nu^{\dagger}_{\E}(\mu) }^{\te}(s_1)] )^2}{\I_k^{\mu,\nu}(\te; \T^{k}_{H+1})}.
\end{align*}
Here, $\nu^{\dagger}_{\E}(\mu)$ denotes the best response to the max-player's policy $\mu$ in the environment $\E$.

\paragraph{Regret bounds} Next, we present Bayesian regret bounds for {\sc Compressed-MAIDS} \mc{when the compressed environment $\te$ is constructed through hard-compression as in~\eqref{eq:ves2}, while regret bounds in the context of soft-compression will be considered in future works.    Below, we express $\te$ as $\te_{\epsilon}$ to clearly indicate its correlation with the parameter~$\epsilon$.}


\begin{theorem}\label{thm:compressed}
Let the max-player's policy is $\mucii = \{\muci\}_{k=1}^K$ and the min-player's policy is $\nucii = \{\nuci\}_{k=1}^K$, where the compressed environment $\te_{\epsilon}$ at each episode $k\in[K]$ is constructed according to~\eqref{eq:ves2} with $\epsilon>0$. Then for any prior $\rho$, the Bayesian regret of $\mucii$ satisfies
\begin{align}
\mathsf{BR}_K(\mucii) &\le  4 \sqrt{KH^3 SAB \cdot \I(\te_{\epsilon}; \D_{K+1})} + 4K\epsilon, \notag 
\end{align}
where $\I(\te_{\epsilon}; \D_{K+1})$ is the mutual information w.r.t. executing the policy $(\mucii,\nucii)$.
\end{theorem}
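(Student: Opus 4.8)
\textbf{Proof proposal for Theorem~\ref{thm:compressed}.}

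The plan is to decompose the Bayesian regret into a term that can be controlled by the information ratio against the compressed environment $\te_\epsilon$, plus an approximation term that accounts for the distortion between $\E$ and $\te_\epsilon$. First I would use the asymmetric learning structure: since $\muci$ minimizes $\max_\nu \tilde\Gamma_k(\mu,\nu,\te_\epsilon)$ and $\nuci$ minimizes $\tilde\Lambda_k^{\muci}(\nu,\te_\epsilon)$, I would bound the per-episode Bayesian regret $\ex_k[V_1^{\E,*}(s_1) - V_{1,\muci,\dagger}^{\E}(s_1)]$ by splitting it as
\begin{align*}
\ex_k[V_1^{\E,*}(s_1) - V_{1,\muci,\nuci}^{\E}(s_1)] + \ex_k[V_{1,\muci,\nuci}^{\E}(s_1) - V_{1,\muci,\dagger}^{\E}(s_1)],
\end{align*}
where the first piece is the sub-optimality of $\muci$ against the fixed opponent $\nuci$ (the quantity whose square appears in the numerator of $\tilde\Gamma_k$, up to swapping $\E$ and $\te_\epsilon$ inside the value), and the second is the regret incurred by $\nuci$ relative to the true best response (the numerator of $\tilde\Lambda_k$). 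For each piece I would replace the value $V^{\E}$ by $V^{\te_\epsilon}$ at the cost of an additive error: by~\eqref{eq:ves2} and the hard-compression guarantee~\eqref{eq:constraint2}, $|V_{1,\mu,\nu}^{\E}(s_1) - V_{1,\mu,\nu}^{\te_\epsilon}(s_1)| \le d_{\Pi_A,\Pi_B}(\E,\te_\epsilon) \le \epsilon$ almost surely, so each value-swap costs at most $\epsilon$, and there are a constant number of such swaps, yielding the $O(K\epsilon)$ term overall after summing over $k$.

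Next I would handle the information-ratio part. After the value swaps, the squared per-episode regret contributions are at most (a constant times) $\tilde\Gamma_k(\muci,\nuci,\te_\epsilon)\cdot \I_k^{\muci,\nuci}(\te_\epsilon;\T_{H+1}^k)$ and $\tilde\Lambda_k^{\muci}(\nuci,\te_\epsilon)\cdot \I_k^{\muci,\nuci}(\te_\epsilon;\T_{H+1}^k)$. The crucial lemma — which I expect is proved in an appendix analogous to the {\sc MAIDS} analysis — is a worst-case bound on the information ratios: $\tilde\Gamma_k(\muci,\nuci,\te_\epsilon) \le c\,H^3 SAB$ and similarly for $\tilde\Lambda$, holding for the minimax-optimal choices regardless of the prior. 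Granting this, Cauchy–Schwarz over the $K$ episodes gives
\begin{align*}
\sum_{k=1}^K \ex_k[\cdots] \le \sqrt{K}\cdot\sqrt{\sum_{k=1}^K c\,H^3 SAB\cdot \I_k^{\muci,\nuci}(\te_\epsilon;\T_{H+1}^k)},
\end{align*}
and then the chain rule of mutual information (conditioning on $\D_k$ and telescoping, using that $\T_{H+1}^k$ is the increment of $\D_{k+1}$ over $\D_k$) collapses $\sum_k \I_k^{\muci,\nuci}(\te_\epsilon;\T_{H+1}^k)$ — more precisely its expectation $\sum_k \I(\te_\epsilon;\T_{H+1}^k\mid\D_k)$ — into $\I(\te_\epsilon;\D_{K+1})$. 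Taking the outer expectation over $\D_k$ and applying Jensen where needed produces the first term $4\sqrt{KH^3 SAB\cdot\I(\te_\epsilon;\D_{K+1})}$, and combining with the distortion term completes the bound.

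The main obstacle, and the place where the multi-agent non-stationarity genuinely bites, is establishing the worst-case information-ratio bound for the \emph{compressed} target when the opponent is adversarial in $\tilde\Gamma_k$. In the single-agent surrogate-environment analysis of~\cite{hao2022regret}, the regret-to-information bound follows from a value-difference decomposition along $H$ steps together with Pinsker-type inequalities relating value gaps to KL-divergences of transition kernels, which in turn lower-bound the mutual information $\I_k(\te_\epsilon;\T_{H+1}^k)$. Here I must carry this through while (i) the value functions $V^{\te_\epsilon}_{1,\mu,\nu}$ depend on the opponent's policy $\nu$, so the decomposition has to be uniform over the worst-case $\nu$ chosen in the minimax, and (ii) the learning target is $\te_\epsilon$ rather than $\E$, so the data-processing inequality must be invoked carefully to argue that informativeness about the coarser object $\te_\epsilon$ still suffices — essentially one shows the regret numerator, after the $\epsilon$-swaps, is a function of $\te_\epsilon$'s realization and hence its square is bounded by (dimension factors)$\times\I_k(\te_\epsilon;\T_{H+1}^k)$ by the same Pinsker/chain-rule machinery applied to the $\te_\epsilon$-indexed family. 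Verifying that the minimax structure does not inflate the $SAB$ or $H$ dependence beyond what appears in Theorems~\ref{thm:maids} and~\ref{thm:proof_thm2} is the delicate quantitative step.
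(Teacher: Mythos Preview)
Your overall structure matches the paper's proof: the two-term decomposition of $\mathsf{BR}_K(\mucii)$, the hard-compression swaps $V^{\E}\to V^{\te_\epsilon}$ at cost $2\epsilon$ per term (giving the $4K\epsilon$ after summing over $k$), Cauchy--Schwarz over episodes, and the chain rule collapsing $\sum_k \ex_{\D_k}[\I_k(\te_\epsilon;\T_{H+1}^k)]$ into $\I(\te_\epsilon;\D_{K+1})$ are exactly what the paper does. Your expectation that the information-ratio bound is ``proved analogously to the {\sc MAIDS} analysis'' is also correct.

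The gap is in your final paragraph, where you sketch the mechanism behind that bound. The claim that ``the regret numerator, after the $\epsilon$-swaps, is a function of $\te_\epsilon$'s realization'' is false: the numerator $\ex_k[V_{1,\mu^*(\E),\nu}^{\te_\epsilon}-V_{1,\mu,\nu}^{\te_\epsilon}]$ still carries $\mu^*(\E)$, which depends on the full $\E$ and is not $\te_\epsilon$-measurable. And data-processing gives $\I_k(\te_\epsilon;\T)\le\I_k(\E;\T)$, which goes the wrong direction for lower-bounding the denominator. The paper's device --- the same one used in the {\sc MAIDS} proof you point to --- is the Thompson-sampling comparison: by the minimax definition of $\muci$ one has $\tilde\Gamma_k(\muci,\nuci,\te_\epsilon)\le\max_\nu\tilde\Gamma_k(\muci,\nu,\te_\epsilon)\le\max_\nu\tilde\Gamma_k(\muts,\nu,\te_\epsilon)$, and similarly $\tilde\Lambda_k^{\muci}(\nuci,\te_\epsilon)\le\tilde\Lambda_k^{\muci}(\nuts(\muci),\te_\epsilon)$. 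The TS compressed information ratio is then bounded by $4H^3SAB$ via the value-decomposition/Pinsker/KL-to-mutual-information argument of Lemma~\ref{lemma:ts}, adapted to the compressed target. This TS proxy is precisely what handles the non-$\te_\epsilon$-measurability of $\mu^*(\E)$ and the adversarial $\nu$; without it your sketch has no route from the numerator to the $\te_\epsilon$-information in the denominator.
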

The proof of Theorem~\ref{thm:compressed} is provided in Appendix~\ref{sec:proof_thm3}. 
Moreover, since one has the freedom to optimize $\epsilon$, we have
\begin{align*}
\mathsf{BR}_K(\mucii) &\le \inf_{\epsilon \ge 0} \Big\{ 4 \sqrt{KH^3 SAB \cdot \I(\te_{\epsilon}; \D_{K+1})} \!+\! 4K\epsilon \Big\} \\
&\le 4 \sqrt{KH^3 SAB \cdot \I(\te_{1/K}; \D_{K+1})} + 4.
\end{align*}

\paragraph{Advantages of {\sc Compressed-MAIDS}} \mc{We first discuss the advantages of {\sc Compressed-MAIDS} in the context of our illustrative example,  Example~\ref{example:compress}, when the parameter $\delta = 1/2HK$.   Considering the special structure of the MG in this example, an intuitive approach to compress the environment $\E$ is to retain the main sub-MG $\mathcal{M}_1$ while discard the (negligible) side sub-MG $\mathcal{M}_2$. Following the hard-compression principle, we first construct subspaces such that each subspace comprises all environments that have a same main sub-MG $\mathcal{M}_1$, upon which we construct a compressed environment $\te$ based on~\eqref{eq:ves2}.\footnote{For the purpose of illustration, we allow the number of subspaces to be infinite here (as the number of $\mathcal{M}_1$ may be infinite).} One can check that any pair of environments $e$ and $e'$ in the same subspace satisfy $d_{\Phi_A, \Phi_B}(e,e') \le 2H\delta = 1/K$, thus $\te$ can be expressed as $\te_{1/K}$. Applying an upper bound on the mutual information (see Lemma~\ref{lemma:mi}, Appendix~\ref{sec:proof_thm1}), we have $\I(\te_{1/K}; \D_{K+1}) \le 2|\bar{\s}|^2 |\bar{\A}| |\bar{\B}| H \log(|\bar{\s}| KH)$ where $|\bar{\s}| = \sqrt{S}$, $|\bar{\A}| = \sqrt{A}$ and $|\bar{\B}| = \sqrt{B}$. Thus, the Bayesian regret of $\mucii$ becomes $\tilde{O}(\sqrt{H^4S^2(AB)^{3/2}K})$, which reduces a factor of $\sqrt{S(AB)^{1/2}}$ compared to the regret of the basic algorithm {\sc MAIDS} (Theorem~\ref{thm:maids}). This suggests {\sc Compressed-MAIDS} is more efficient, as employing {\sc MAIDS} would lead the agents to concurrently learn about both the main and side sub-MGs, an approach that is less efficient considering the negligibility of the side sub-MG.}

It is worth noting that the construction of $\te$ in~\eqref{eq:ves2} is just one specific choice of the compressed environment; consequently, its regret may not be optimal. However, the IDS principle offers the flexibility in selecting methods for constructing the compressed environment, and different constructions may lead to different performances/regrets. In the single-agent RL setting,~\cite[Sec~4.1]{hao2022regret} introduces a novel approach that helps to reduce the Bayesian regret by a factor of $\sqrt{S}$; however, their approach cannot be easily applied to MGs due to the presence of unpredictable opponents. While  MARL is more complicated, we still conjecture that it is possible to construct alternative compressed environments satisfying~\eqref{eq:constraint} or~\eqref{eq:constraint2} that can lead to better regret bounds.  This opens up an intriguing avenue for future research.

\section{General-Sum Markov Games} \label{sec:general}

In this section, we introduce the mathematical formulation of the multi-player general-sum MG, and then show that an extended version of {\sc Reg-MAIDS} (Section~\ref{sec:regularized}) is provably sample-efficient in this general setting. In the following, let $\N \ge 2$ be the number of players.

\paragraph{Model} The general-sum MG can be represented by a tuple $\E = (\N,H,\s,\A, \{P_h\}_{h\in [H]}, \{r_h^{(i)}\}_{h\in[H],i\in[\N]} )$, where $\A = \otimes_{i=1}^N \A_i$ is the joint action spaces of the $\N$ players,  $P_h: \s \times \A  \to \Delta(\s)$ is the transition kernel, and $r_h^{(i)}: \s \times \A \to [0,1]$ is the deterministic reward function for player $i$. Let $S \triangleq |\s|$ and $A \triangleq|\A|$.  Similar to the zero-sum MG setting (Section~\ref{sec:model}), we assume $\{r_h^{(i)}\}_{h\in[H],i\in[\N]}$ are known, and we consider a Bayesian setting where we have a prior distribution $\rho$ on the environment $\E$ (i.e., on the transition kernels $\{P_h\}_{h\in [H]}$).

For player $i$, let $\Pipi \subseteq \{\s \to \Delta(\A_i) \}^H$ be a set of (Markov) \emph{pure policies}. She is also allowed to take a \emph{random policy} by adopting a random seed $\omega \in \Omega$. Specifically, a random policy $\pi^{(i)} = \{\pi_h^{(i)} \}_{h\in[H]}$ can be viewed as a mapping from $\Omega$ to $\Pipi$, i.e., $\pi^{(i)}(\omega) \in \Pipi$. When executing a random policy $\pi^{(i)}$, she first samples a random seed $\omega$ and then follows $\pi^{(i)}(\omega) = \{\pi_h^{(i)}(\omega) \}_{h\in[H]}$ in the $H$ steps. Next, we consider the $\N$ players jointly, and introduce the concepts of \emph{joint policy} and \emph{product policy}. The joint policy $\pi = \{ \pi^{(i)}(\omega_i) \}_{i\in[\N]}$ consists of $\N$ random policies where random variables $(\omega_1, \ldots, \omega_{\N})$ are correlated, while the product policy $\pi = \{ \pi^{(i)}(\omega_i) \}_{i\in[\N]}$ consists of $\N$ random policies where $(\omega_1, \ldots, \omega_{\N})$ are independent. As such, product policies are special cases of joint policies.

For policy $\pi$ and player $i$, the value function is denoted as 
\begin{align*}
    V^{(i),\E}_{h,\pi}(s) = \ex_\pi^{\E} \left[ \sum_{h'=h}^H r_{h'}^{(i)}(s_{h'}, a_{h'}) \big| s_h =s \right], \ \forall s \in \s.
\end{align*}
For any policy $\pi$, let $\pi^{(-i)}$ be the joint policy excluding player~$i$, and let $$\pi^{(i),\dagger}_{\E} \triangleq \argmax_{\nu \in \Delta(\Pipi)} V^{(i),\E}_{1, \nu \times \pi^{(-i)}}(s_1)$$ be the best response of player $i$ when facing $\pi^{(-i)}$, where $s_1$ is the fixed initial state. 

\paragraph{Learning objectives}
For the general-sum MG, we first define the concept of (approximate) NE and coarse correlated equilibrium  (CCE).
\begin{definition}[$\varepsilon$-NE and $\varepsilon$-CCE]
A policy $\pi$ is an $\varepsilon$-NE of the environment $\E$ if it is a \emph{product policy} and  it satisfies 
\begin{align}
V^{(i),\E}_{1,(\pi^{(i),\dagger}_{\E}, \pi^{(-i)})}(s_1) - V^{(i),\E}_{1,\pi}(s_1) \le \varepsilon, \ \forall i \in [\N]. \label{eq:ne}
\end{align}
We say a policy $\pi$ is an $\varepsilon$-CCE if it is a \emph{joint policy} and it satisfies~\eqref{eq:ne}.
\end{definition}

Suppose the $\N$ players interact with the environment for $K$ episodes. For a \emph{product policy} $\pi = \{\pi^k\}_{k\in[K]}$ and a realization of the environment $\E = e$, we define 
\begin{align}
\reg^{\text{NE}}_K(\e, \pi) \triangleq \sum_{k=1}^K \sum_{i=1}^{\N} V^{(i),e}_{1, (\pi_{e}^{k,(i),\dagger}, \pi^{k,(-i)})}(s_1) - V^{(i),e}_{1,\pi^k}(s_1) \notag 
\end{align}
as the cumulative regret in environment $e$, and define 
\begin{align}
\mathsf{BR}^{\text{NE}}_K (\pi) \triangleq  \ex_{\E \sim \rho} \left(\reg_K^{\text{NE}}(\E,\pi) \right) \label{eq:nash}
\end{align}
as the Bayesian regret, which is averaged over the prior distribution $\rho$ of the environment. For a \emph{joint policy} $\pi$, the definitions of regret $\reg^{\text{CCE}}_K(\e, \pi)$ and Bayesian regret $\mathsf{BR}^{\text{CCE}}_K (\pi)$ are exactly the same as $\reg^{\text{NE}}_K(\e, \pi)$ and $\mathsf{BR}^{\text{NE}}_K(\pi)$ (the only difference is that $\pi$ is a joint policy). 

\begin{table}[t!]
\centering
\caption{Notations for general-sum MGs } \vspace{3pt}
\begin{tabular}{ p{3cm}  p{12cm}  }
 \hline\hline
 $\N$   & Number of players \\
 $\s$   & State space \\
 $\A_i$ & Action space of player $i$ \\
 $\A$&   Joint action spaces of the $\N$ players  \\
 $P_h$&   Transition kernel at step $h$ ($P_h: \s \times \A \to \Delta(\s)$) \\
 $r_h^{(i)}$&   Reward function for player $i$ at step $h$ ($r_h^{(i)}: \s \times \A \to [0,1]$) \\
 $\Pipi$ & Set of (Markov) \emph{pure policies} of player $i$ \\
 $\omega$ & Random seed \\
  $V_{h,\pi}^{(i),\E}(s)$ & Value of state $s$ for player $i$ (at step $h$, w.r.t. policy $\pi$ and environment $\E$) \\
  $\pi^{(-i)}$ & Policy $\pi$ excluding
player $i$ \\
$\pi^{(i),\dagger}_{\E}$ & Best response of player $i$ when facing $\pi^{(-i)}$ in environment $\E$ \\

 \hline\hline
\end{tabular}
\label{table:1}
\end{table}

\paragraph{Algorithm descriptions} We now extend {\sc Reg-MAIDS} to the general-sum MG. At the beginning of episode $k$,  the posterior distribution of the environment $\E \sim \PP(\cdot|\D_k)$ can be calculated based on  $\D_k$ and prior distribution~$\rho$. Similar to Section~\ref{sec:regularized}, we construct a mean environment $\meh$ such that the transition kernel and reward satisfy 
\begin{align}
&P_h^{\meh}(\cdot |s,a) \triangleq \ex_{\E \sim \PP(\cdot|\D_k)} [P_h^{\E}(\cdot|s,a)], \\
&r_h^{(i),\meh}(s,a) \triangleq r_h^{(i)}(s,a) + \lambda \ex_k[ \DD_{\text{KL}}(P_h^{\E}(\cdot|s,a) \Vert P_h^{\meh}(\cdot|s,a)) ], \ \forall i \in [\N].
\end{align}
The algorithm proceeds in two steps: 
\begin{enumerate}
    \item For each pure policy $\pi \in \otimes_{i=1}^{\N} \Pipi$, we calculate the  value $V_{1,\pi}^{(i),\meh}(s_1)$ in the mean environment $\meh$ for all the players;

    \item We view the MG on $\meh$ as a \emph{normal-form game} with the space of pure strategies being $\{\Pipi\}_{i\in[\N]}$ and payoff functions being $\{V_{1,\pi}^{(i),\meh}(s_1) \}_{i\in[\N],\pi\in \otimes_{i=1}^N \Pipi}$. We then compute a NE strategy $\pi^{\text{NE}}$ (resp. a CCE strategy $\pi^{\text{CCE}}$) for this normal-form game, and let the output policy for episode~$k$ in the MG be $\pidskne = \pi^{\text{NE}}$ (resp. $\pidskcce =\pi^{\text{CCE}}$).
\end{enumerate}

Theorem~\ref{thm:general} below provides Bayesian regret bounds for $\pids^{\text{NE}} = \{\pidskne\}_{k\in[K]}$ and $\pids^{\text{CCE}} = \{\pidskcce\}_{k\in[K]}$.

\begin{theorem} \label{thm:general}
For any prior distribution~$\rho$, the Bayesian regrets of $\pids^{\text{NE}}$ and $\pids^{\text{CCE}}$  respectively satisfy
\begin{align}
&\mathsf{BR}^{\text{NE}}_K(\pids^{\text{NE}}) \le  3\N S^{3/2} A H^2 \sqrt{  K \log(SKH)}, \quad \text{and} \\
&\mathsf{BR}^{\text{CCE}}_K(\pids^{\text{CCE}}) \le 3\N S^{3/2} A H^2 \sqrt{  K \log(SKH)}
\end{align}
by choosing $\lambda = \sqrt{HK^2/S\log(SKH)}$.
\end{theorem}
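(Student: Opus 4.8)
The plan is to mirror the analysis of {\sc Reg-MAIDS} for zero-sum MGs (Theorem~\ref{thm:proof_thm2}), treating each player's deviation regret separately and then summing over the $\N$ players. First I would fix a player $i$ and decompose the per-episode, per-player regret $V^{(i),\E}_{1,(\pi^{(i),\dagger}_{\E},\pi^{k,(-i)})}(s_1) - V^{(i),\E}_{1,\pi^k}(s_1)$ by inserting the mean-environment values $V^{(i),\meh}_{1,\cdot}(s_1)$. The key algebraic fact, an $\N$-player analogue of Lemma~\ref{lemma:eq}, is that for any (joint or product) policy $\pi$ one has $\ex_k[V^{(i),\E}_{1,\pi}(s_1)] + \lambda\,\I_k^{\pi}(\E;\T^k_{H+1}) = V^{(i),\meh}_{1,\pi}(s_1)$; this follows from the same telescoping of the Bellman equations together with the chain rule for mutual information, now applied with the joint-action transition kernel $P_h:\s\times\A\to\Delta(\s)$. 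Because $\pidskne$ (resp.\ $\pidskcce$) is by construction a NE (resp.\ CCE) of the normal-form game with payoffs $\{V^{(i),\meh}_{1,\pi}(s_1)\}$, the deviation gain in $\meh$ is nonpositive: $V^{(i),\meh}_{1,(\pi^{(i),\dagger}_{\meh},\pi^{k,(-i)})}(s_1) - V^{(i),\meh}_{1,\pi^k}(s_1) \le 0$. Combining this with the Lemma and using $\lambda>0$ reduces the per-player per-episode regret, after taking $\ex_k$, to a sum of two mutual-information terms of the form $\lambda\,\I_k^{\pi^k}(\E;\T^k_{H+1})$ plus a term controlling the change of best response from $\meh$ to $\E$, each bounded by roughly $\lambda$ times a conditional mutual information.

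Next I would sum over episodes $k\in[K]$ and take the outer expectation. By the tower property $\ex[\ex_k[\cdot]] = \ex[\cdot]$, and the cumulative mutual-information terms collapse via the chain rule: $\sum_{k=1}^K \I(\E;\T^k_{H+1}\mid\D_k) = \I(\E;\D_{K+1}) \le \HH(\E\,|\,\text{discretization})$, which is in turn bounded (using the covering/counting bound, Lemma~\ref{lemma:mi} in Appendix~\ref{sec:proof_thm1}) by $O(S^2 A H\log(SKH))$ — note the joint action count is $A=|\A|$ here, replacing the $AB$ of the zero-sum case. The remaining ``regret part'' of the decomposition is handled by Cauchy--Schwarz across the $K$ episodes exactly as in the zero-sum proof: the sum of squared value gaps is $O(\lambda^2)$ times the cumulative information, so balancing the two pieces by the stated choice $\lambda = \sqrt{HK^2/S\log(SKH)}$ yields a per-player bound of order $S^{3/2}A H^2\sqrt{K\log(SKH)}$. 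Summing over the $\N$ players multiplies by $\N$ and, tracking constants, gives the claimed $3\N S^{3/2}A H^2\sqrt{K\log(SKH)}$; the CCE bound is identical since the CCE property of $\pi^{\text{CCE}}$ gives the same nonpositivity of the in-$\meh$ deviation gain for every player, and nothing in the argument used product structure of $\pi^k$ beyond that inequality.

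The main obstacle I anticipate is the handling of the best-response mismatch: the deviation regret is measured against $\pi^{(i),\dagger}_{\E}$, the best response in the \emph{true} environment $\E$, whereas the no-regret guarantee of the normal-form solver lives in the \emph{mean} environment $\meh$. In the zero-sum case this was finessed by the asymmetric two-step structure (the min-player optimizing a marginal information ratio with a $-\tl$ regularizer); in the general-sum algorithm as described there is only a single $\lambda$ and a single normal-form equilibrium, so one must show that replacing $\pi^{(i),\dagger}_{\meh}$ by $\pi^{(i),\dagger}_{\E}$ costs only an extra $O(\lambda\,\I_k)$ term. I would control this by a performance-difference-lemma argument: the value gap between the two best responses, both evaluated in $\meh$ versus $\E$, telescopes into a sum of per-step KL terms $\ex_k[\DD_{\text{KL}}(P^{\E}_h\Vert P^{\meh}_h)]$, which are exactly the reward-shaping bonuses built into $\meh$ and hence re-expressible through $\I_k^{\pi}(\E;\T^k_{H+1})$. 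Making this step precise while keeping the constant at $3$ is the delicate part; everything else is a routine transcription of the single-agent / zero-sum information-ratio machinery.
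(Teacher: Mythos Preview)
Your outline is structurally on target and matches the paper's argument: the $\N$-player analogue of Lemma~\ref{lemma:eq}, the use of the NE/CCE property of $\pidskne$/$\pidskcce$ in $\meh$, summation via the chain rule, and the mutual-information bound $\I(\E;\D_{K+1})\le 2S^2AH\log(SKH)$ are exactly the ingredients the paper uses.  The per-episode bound is combined with the information term via the AM--GM inequality (as in the {\sc Reg-MAIDS} proof), not by a Cauchy--Schwarz across episodes; this is a minor discrepancy.

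The substantive gap is in your handling of the ``best-response mismatch'', which you correctly flag as the crux.  You propose to compare $\pi^{(i),\dagger}_{\E}$ with $\pi^{(i),\dagger}_{\meh}$ directly and invoke a performance-difference telescoping into KL terms.  But the performance-difference lemma (Lemma~\ref{lemma:diff}) compares the \emph{same} policy in two environments; it does not control a value gap between two \emph{different} policies, and your sketch does not explain how the KL weights would end up under an occupancy measure that chains into $\I_k^{\pi^k}(\E;\T^k_{H+1})$.  The paper sidesteps this entirely by introducing, for each player $i$, a Thompson-sampling deviation $\pits$: sample $e\sim\PP(\cdot|\D_k)$ and play $\pi^{k,(i),\dagger}_{e}$.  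This single move does two things at once: (a) $\pits$ is a legitimate unilateral (random) deviation, so the NE/CCE inequality in $\meh$ applies to it, giving $V^{(i),\meh}_{1,\pi^k}(s_1)\ge \ex_{\nu\sim(\pits\times\pi^{k,(-i)})}[V^{(i),\meh}_{1,\nu}(s_1)]$; and (b) by construction $\ex_k\big[V^{(i),\E}_{1,(\pi^{(i),\dagger}_{\E},\pi^{k,(-i)})}(s_1)\big]$ equals $\ex_{\nu\sim(\pits\times\pi^{k,(-i)})}\ex_k[V^{(i),\E}_{1,\nu}(s_1)]$, so after one more application of the $\meh$-identity the mismatch becomes a same-policy, different-environment comparison to which Lemma~\ref{lemma:diff}, Pinsker, and the occupancy/Cauchy--Schwarz step of Lemma~\ref{lemma:ts} apply verbatim (with $A$ replacing $AB$).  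AM--GM then cancels the $\I_k^{\pits\times\pi^{k,(-i)}}$ term and leaves $\frac{SAH^3}{\lambda}+\lambda\,\I_k^{\pi^k}(\E;\T^k_{H+1})$ per player per episode, which sums to the stated bound.  Inserting this TS-proxy step in place of your $\pi^{(i),\dagger}_{\meh}$ comparison is the missing idea.
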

The proof of Theorem~\ref{thm:general} is provided in Appendix~\ref{appendix:general}. 

Similar to the regret bound of {\sc Reg-MAIDS} for the two-player zero-sum MG, the regret bound in Theorem~\ref{thm:general} for the multi-player general-sum MG is also valid for all possible prior distribution $\rho$ of the environment $\E$, and the scaling $\tilde{O}(\sqrt{K})$  is also order-optimal w.r.t. the number of episodes $K$.  Additionally, we point out that while this paper focuses on learning NE/CCE, it is also possible to learn a \emph{correlated equilibrium} (CE) policy with theoretical guarantees via similar techniques.

\section{Conclusion and Future Works} \label{sec:conclusion}

This work presents sample-efficient MARL algorithms based on the design principle of information-directed sampling (IDS). These algorithms utilize information-theoretic concepts to effectively navigate the exploration-exploitation tradeoff by balancing the acquired information about the environment (exploration) and the policy’s sub-optimality (exploitation) when choosing policies. Theoretically, we prove that these IDS-based algorithms achieve order-optimal Bayesian regret with respect to the number of episodes in both two-player zero-sum MGs and multi-player general-sum MGs. This contribution enriches  the set of sample-efficient algorithms available for tackling MARL problems. Moreover, we also provide a computationally efficient variant, {\sc Reg-MAIDS}, that is not only theoretically sound but can also be implemented practically.  

Finally, we put forth two promising directions for future research.
\begin{enumerate}
    \item While the Bayesian regret bounds for our IDS-based algorithms are order-optimal with respect to the number of episodes $K$, they do not achieve order-optimality in terms of the cardinality of state space $S$, the cardinality of action spaces, and the episode length $H$. One would expect to close this gap by either refining the algorithm design (perhaps choosing other compressed environments as the learning target in {\sc Compressed-MAIDS}) or developing  refined analytical techniques (e.g., adapting recent analyses for Thompson sampling to our IDS-based algorithms). Working in this direction would be a fruitful endeavour for better justifying the theoretical advantages of the IDS principle. 

\item As an initial investigation of the IDS principle in MARL, this work mainly focuses on the \emph{tabular} setting where the state space and the action space are discrete and finite. To address more practical scenarios where the state space is exceedingly large, it would be beneficial to develop new IDS-based algorithms for MG with \emph{linear function approximation} or \emph{general function approximation}, enhancing their applicability in real-world scenarios.
\end{enumerate}

\section*{Acknowledgement}
The authors would like to thank Dr.~Zhuoran Yang for his valuable discussions and insights, which have significantly improved the quality of this paper.





\vspace{18pt}
\noindent{\bf \LARGE	 Appendices}
\appendix

\section{Appendix for {\sc MAIDS}: Proof of Theorem~\ref{thm:maids}} \label{sec:proof_thm1}

The proof  for Theorem~\ref{thm:maids} is provided in the following.

First note that for any policy $(\mu,\nu)=(\{\mu^k\}_{k\in[K]}, \{\nu^k\}_{k\in[K]})$,  the Bayesian regret can be decomposed as follows:
\begin{align}
\mathsf{BR}_K (\mu) &= \ex \left(\sum_{k=1}^K V_1^{\E,*}(s_1) - V_{1,\mu^k, \dagger}^{\E} (s_1) \right) \\
&= \ex \left(\sum_{k=1}^K V_1^{\E,*}(s_1) - V_{1,\mu^k, \nu^k}^{\E} (s_1) \right) + \ex \left(\sum_{k=1}^K V_{1,\mu^k, \nu^k}^{\E} (s_1) - V_{1,\mu^k, \dagger}^{\E} (s_1) \right). \label{eq:ji}
\end{align}
The first term in~\eqref{eq:ji} represents the expected difference between the Nash value $V_1^{\E,*}(s_1)$ and the value induced by the policy $(\mu^k,\nu^k)$. The second term represents the Bayesian regret induced by the min-player's policy $\nu^k$ with respect to the fixed max-player's policy $\mu^k$.

\subsection{The first term in~\eqref{eq:ji}} \label{sec:4.1}
For the first term, we have 
\begin{align}
&\ex \left(\sum_{k=1}^K V_1^{\E,*}(s_1) - V_{1,\mu^k, \nu^k}^{\E} (s_1) \right) \label{eq:18} \\
&\le \ex \left(\sum_{k=1}^K V_{1,\mu^*(\E),\nu^k}^{\E}(s_1) - V_{1,\mu^k, \nu^k}^{\E} (s_1) \right) \label{eq:new} \\
&= \sum_{k=1}^K \ex_{\D_k} \left[ \ex_{\E \sim \PP(\cdot|\D_k )} \left(V_{1,\mu^*(\E),\nu^k}^{\E}(s_1) - V_{1,\mu^k, \nu^k}^{\E} (s_1) \right) \right] \\
&\le \sum_{k=1}^K \ex_{\D_k} \left[\sqrt{\frac{\left(\ex_{k} \left(V_{1,\mu^*(\E),\nu^k}^{\E}(s_1) - V_{1,\mu^k, \nu^k}^{\E} (s_1) \right)\right)^2}{\I_k^{\mu^k,\nu^k}(\E;\T^{k}_{H+1})}} \sqrt{\I_k^{\mu^k,\nu^k}(\E;\T^{k}_{H+1})} \right] \\
&\le \sqrt{\sum_{k=1}^K \ex_{\D_k}\left[\frac{\left(\ex_{k} \left(V_{1,\mu^*(\E),\nu^k}^{\E}(s_1) - V_{1,\mu^k, \nu^k}^{\E} (s_1) \right)\right)^2}{\I_k^{\mu^k,\nu^k}(\E;\T^{k}_{H+1})}\right] } \cdot \sqrt{\sum_{k=1}^K \ex_{\D_k} \left[\I_k^{\mu^k,\nu^k}(\E;\T^{k}_{H+1}) \right]} \label{eq:cauchy} \\
&= \sqrt{\sum_{k=1}^K \ex_{\D_k} \left[\Gamma_k(\mu^k,\nu^k,\E) \right] } \cdot \sqrt{\sum_{k=1}^K \ex_{\D_k} \left[\I_k^{\mu^k,\nu^k}(\E;\T^{k}_{H+1}) \right]}, \label{eq:ids}
\end{align}
where the expectation $\ex_k(\cdot)$ is over the randomness of the environment $\E$ conditioned on the past trajectory $\D_k$, thus  $\ex_k(\cdot)$ is equivalent to $\ex_{\E \sim \PP(\cdot|\D_k)}(\cdot)$ and these two notations may be used interchangeably. Eqn.~\eqref{eq:new} follows from the property of the NE policy introduced in~\eqref{eq:6}. Eqn.~\eqref{eq:cauchy} follows from the Cauchy-Schwarz inequality, and Eqn.~\eqref{eq:ids} is due to the definition of information ratio. 

For the purpose of analyses, we introduce another proxy policy for the max-player---the Thompson sampling (TS) policy $\muts$. The TS policy $\muts$ first samples a realization of the environment $\E = \e$ according to the distribution $\E \sim \PP(\cdot|\D_k)$, and then chooses the max-player's NE policy $\mu^*(e)$ with respect to $\e$. Note that the TS policy $\muts$ is a Markov policy.\footnote{Strictly speaking, the TS policy $\muts$ is a Markov policy that incorporates an additional random seed for sampling an environment $\E = e$. For rigorousness, one needs to slightly modify the definition of the set of Markov policies $\Pi_A$  to include all Markov policies with random seeds, thus encompassing the TS policy $\muts \in \Pi_A$. However, for ease of presentation, we maintain the definition of $\Pi_A$ as presented in Section~\ref{sec:model} of the main text, and such imprecision can be easily fixed. }

Since the max-player chooses $\muids$ and the min-player chooses $\nuids$ at episode $k$, we have
\begin{align}
 \sqrt{\sum_{k=1}^K \ex_{\D_k} \left[\Gamma_k(\muids,\nuids,\E) \right] } &\le \sqrt{\sum_{k=1}^K \ex_{\D_k} \left[\max_{\nu \in \Psb} \Gamma_k(\muids,\nu,\E) \right] } \\
&\le  \sqrt{\sum_{k=1}^K \ex_{\D_k} \left[\max_{\nu \in \Psb} \Gamma_k(\muts,\nu,\E) \right] }, \label{eq:jian}
\end{align}
where Eqn.~\eqref{eq:jian} is due to the fact that $\muids$ minimizes $\max_{\nu}\Gamma_k(\mu,\nu,\E)$ by definition. Below, we show that the information ratio of the Thompson sampling policy $\muts$ is bounded for any $\nu \in \Psb$ and any distribution~$\E$.

\begin{lemma} \label{lemma:ts}
For any $\nu \in \Psb$ and any distribution $\E$, we have $$\Gamma_k(\muts,\nu,\E) \le 4H^3SAB.$$
\end{lemma}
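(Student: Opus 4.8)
The plan is to bound the joint information ratio $\Gamma_k(\muts,\nu,\E)$ by separately controlling its numerator and denominator, exploiting the key structural fact that under the Thompson sampling policy $\muts$, the max-player plays $\mu^*(\hat{\E})$ where $\hat{\E}$ is an \emph{independent copy} of the environment drawn from the same posterior $\PP(\cdot|\D_k)$. First I would unfold the numerator $\ex_k[V_{1,\mu^*(\E),\nu}^{\E}(s_1) - V_{1,\muts,\nu}^{\E}(s_1)]$ using the Bellman equations and a telescoping (performance-difference) decomposition over the $H$ steps, rewriting the value gap as a sum over $(h,s,a,b)$ of the posterior-weighted discrepancies between the transition kernels $P_h^{\E}(\cdot|s,a,b)$ and the ``played'' kernel, weighted by visitation probabilities under $(\muts,\nu)$ and by $V_{h+1}$, which is bounded by $H$. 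The crucial point is that since $\muts$ uses an independent sample, the quantity $\ex_k[V_{1,\mu^*(\E),\nu}^{\E}(s_1) - V_{1,\mu^*(\hat{\E}),\nu}^{\E}(s_1)]$ can be expressed in a form symmetric in $\E$ and $\hat{\E}$, which lets one pass from a value gap to a sum of terms each controlled by $\sum_{s,a,b} \big| \PP_k(\text{something}) - \PP_k(\cdot)\PP_k(\cdot)\big|$-type expressions — i.e.\ into quantities of the same flavor as mutual information summands.

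For the denominator, I would lower-bound $\I_k^{\muts,\nu}(\E;\T_{H+1}^k)$ using the chain rule of mutual information to peel off the contribution of each step: the trajectory reveals, at each visited $(s,a,b,h)$, a fresh sample from $P_h^{\E}(\cdot|s,a,b)$, so $\I_k^{\muts,\nu}(\E;\T_{H+1}^k) \ge \sum_h \ex[\I_k(P_h^{\E}(\cdot|s_h,a_h,b_h); s_{h+1} \mid \T_h^k)]$, and each conditional-mutual-information term equals the posterior-weighted expected KL-divergence $\ex_k[\DD_{\text{KL}}(P_h^{\E}(\cdot|s,a,b)\,\Vert\, P_h^{\me}(\cdot|s,a,b))]$, with $P_h^{\me}$ the Bayesian-model-averaged kernel. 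Then I would invoke Pinsker's inequality (or a variance-based argument) to relate these KL terms to squared total-variation distances, matching the structure of the numerator. Combining the two via Cauchy–Schwarz across the $H$ steps and across the state–action space yields a bound; tracking the combinatorial factors — a factor $H$ from the telescoping, a factor $H^2$ from the value magnitude squared in the numerator, and a factor $SAB$ from summing over $(s,a,b)$ when applying Cauchy–Schwarz to convert $(\sum |x_i|)^2 \le SAB \sum x_i^2$ — produces the claimed $4H^3 SAB$.

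The main obstacle I anticipate is the independent-copy bookkeeping in the numerator: because $\muts$ samples $\hat{\E}$ independently of $\E$ while the value function $V^{\E}$ and the regret are evaluated under the \emph{true} $\E$, the expression $\ex_k[V_{1,\mu^*(\hat{\E}),\nu}^{\E}(s_1)]$ does not collapse to something naive, and one must carefully use the ``posterior-matching'' identity — that $(\E, a_h)$ under $\muts$ has the same joint law structure as if the action were drawn from the NE of a second posterior sample — to rewrite the numerator as a genuine information-like quantity rather than merely bounding it crudely. Getting the visitation-measure changes right (the trajectory is generated under $(\muts,\nu)$ on the \emph{true} $\E$, so visitation probabilities themselves depend on $\E$) and ensuring the Cauchy–Schwarz step is applied to the correct paired quantities, so that the numerator's squared-TV terms line up with the denominator's KL terms step-by-step, is where the bulk of the technical care lies; a naive application would lose polynomial factors in $H$ or $S$. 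I would also need the elementary fact that $\sum_{s} |P(s) - Q(s)| \le \sqrt{S}\,\|P-Q\|_2$ and Pinsker at the right place, but those are routine once the decomposition is set up correctly.
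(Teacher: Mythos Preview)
Your overall plan is correct and matches the paper's approach in structure: a performance-difference (telescoping) decomposition for the numerator, the KL representation of the mutual information for the denominator (the paper's Lemma~\ref{lemma:mikl}, which is in fact an \emph{equality}, not merely a lower bound), Pinsker, and Cauchy--Schwarz, with the factor tracking $H \cdot H^2 \cdot SAB$ exactly as you describe. Where you are vague is precisely the step you flag as the main obstacle, and the resolution you sketch --- writing $\ex_k[V_{1,\mu^*(\E),\nu}^{\E} - V_{1,\mu^*(\hat{\E}),\nu}^{\E}]$ in a form ``symmetric in $\E$ and $\hat{\E}$'' --- does not go through as stated: this is a \emph{policy}-difference on a \emph{fixed} environment $\E$, not an environment-difference, so the performance-difference lemma you invoke does not apply to it directly.

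The paper's concrete device is to introduce the Bayesian-averaged environment $\met$ (kernel $P_h^{\met} = \ex_k[P_h^{\E}]$, original rewards) and split the numerator by the triangle inequality as
\[
\big|\ex_k[V_{1,\mu^*(\E),\nu}^{\E} - V_{1,\muts,\nu}^{\met}]\big| \;+\; \big|\ex_k[V_{1,\muts,\nu}^{\met} - V_{1,\muts,\nu}^{\E}]\big|.
\]
The first term, via the TS identity $V_{1,\muts,\nu}^{\met} = \ex_k[V_{1,\mu^*(\E),\nu}^{\met}]$, becomes $|\ex_k[V_{1,\mu^*(\E),\nu}^{\E} - V_{1,\mu^*(\E),\nu}^{\met}]|$ --- now a \emph{same-policy, different-environment} gap to which Lemma~\ref{lemma:diff} applies, and crucially the occupancy measure in that lemma can be taken under $\met$, which is deterministic given $\D_k$. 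This is exactly what dissolves your ``visitation depends on $\E$'' obstacle: after the split, all occupancies live in $\met$, and the identity $\ex_k[d_{h,\mu^*(\E),\nu}^{\met}] = d_{h,\muts,\nu}^{\met}$ lets Cauchy--Schwarz align the numerator's TV-type terms with the denominator's KL terms step-by-step, yielding $\sqrt{SABH^3\,\I_k}$. The second term is handled by the same lemma plus Jensen and contributes only $\sqrt{H^3\I_k/2}$. So the missing ingredient in your plan is not a different idea but the pivot through $\met$ in a two-term triangle-inequality split; once you insert that, your plan \emph{is} the paper's proof.
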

\begin{proof}[Proof of Lemma~\ref{lemma:ts}]
The detailed proof is deferred to Appendix~\ref{sec:proof_ts}.
\end{proof}

Applying Lemma~\ref{lemma:ts} to Eqn.~\eqref{eq:jian}, we obtain
\begin{align}
\sqrt{\sum_{k=1}^K \ex_{\D_k} \left[\Gamma_k(\muids,\nuids,\E) \right] } \le \sqrt{4KH^3SAB}.    \label{eq:dk}
\end{align}

For the second term of Eqn.~\eqref{eq:ids}, by standard information inequalities, we have
\begin{align}
\sum_{k=1}^K \ex_{\D_k}  \left[\I_k^{\mu^k,\nu^k}(\E;\T^{k}_{H+1}) \right]  &= \sum_{k=1}^K \I^{\mu^k,\nu^k}(\E;\T^{k}_{H+1}|\D_k)  \label{eq:mi1}\\
&= \sum_{k=1}^K \I^{\mu^k,\nu^k}(\E;\T^{k}_{H+1}|\T^{1}_{H+1},\ldots, \T^{k-1}_{H+1}) \\
&= \I^{\mu,\nu}(\E;\T^{1}_{H+1},\ldots, \T^{k}_{H+1}) \label{eq:mi2} \\
&=
\I^{\mu,\nu}(\E;\D_{K+1}),  \label{eq:mi3}  
\end{align}
where~\eqref{eq:mi1} follows from the definition of conditional mutual information, and~\eqref{eq:mi2} follows from the chain rule of mutual information.

\begin{lemma} \label{lemma:mi}
For any policy $(\mu,\nu)$, the mutual information between the environment $\E$ and the whole trajectory $\D_{K+1}$ induced by $(\mu,\nu)$ is upper-bounded as 
\begin{align}
\I^{\mu,\nu}(\E;\D_{K+1}) \le 2S^2ABH\log(SKH).    
\end{align}
\end{lemma}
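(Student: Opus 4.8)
The plan is to bound $\I^{\mu,\nu}(\E;\D_{K+1})$ by \emph{discretizing} the transition kernels. Fix a resolution $\delta>0$ and let $\mathcal{C}_{\delta}\subseteq\Delta(\s)$ be a $\delta$-covering of the simplex in total-variation distance, of cardinality $\kappa(\delta)\le (3/\delta)^{S}$ (a standard covering-number estimate). Define a quantized environment $\E_{\delta}$ by rounding, for every cell $(s,a,b,h)\in\s\times\A\times\B\times[H]$, the kernel $P_h^{\E}(\cdot\,|\,s,a,b)$ to its nearest point of $\mathcal{C}_{\delta}$; then $\E_{\delta}$ is a deterministic function of $\E$, so that
\begin{align*}
\I^{\mu,\nu}(\E;\D_{K+1}) = \I^{\mu,\nu}(\E_{\delta};\D_{K+1}) + \I^{\mu,\nu}(\E;\D_{K+1}\mid \E_{\delta}),
\end{align*}
and I would bound the two terms separately, taking $\delta$ of order $1/(SKH)$ at the end.

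For the first term, $\E_{\delta}$ is specified by one element of $\mathcal{C}_{\delta}$ per cell, hence takes at most $\kappa(\delta)^{SABH}$ values, so
\begin{align*}
\I^{\mu,\nu}(\E_{\delta};\D_{K+1}) \le \HH(\E_{\delta}) \le SABH\log\kappa(\delta) \le S^{2}ABH\log(3/\delta).
\end{align*}
Note this step does not even use that the prior $\rho$ factorizes over cells---only that $\E_{\delta}$ is discrete with a controlled number of outcomes.

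For the residual term I would expand by the chain rule of mutual information over the $K$ episodes and, within each episode, over its variables in the order they are revealed. The initial state $s_1$ is fixed; the rewards $\rhk=r_h(\shk,\ahk,\bhk)$ are deterministic functions of past variables; and, crucially, the actions $\ahk,\bhk$ are---conditionally on the past trajectory---randomized functions of $\D_k$ and of the policy's random seed, hence conditionally independent of $\E$ (the policies consult only the posterior $\PP(\cdot\,|\,\D_k)$, never $\E$ itself). Thus only the transitions contribute, giving
\begin{align*}
\I^{\mu,\nu}(\E;\D_{K+1}\mid\E_{\delta}) \le \sum_{k=1}^{K}\sum_{h=1}^{H}\I^{\mu,\nu}\big(\E;\shhk\,\big|\,\E_{\delta},\mathcal{H}_{k,h}\big),
\end{align*}
where $\mathcal{H}_{k,h}$ is the history through $(\shk,\ahk,\bhk)$. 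Conditionally on $\mathcal{H}_{k,h}$, the law of $\shhk$ depends on $\E$ only through the single kernel $P_h^{\E}(\cdot\,|\,\shk,\ahk,\bhk)$, which given $\E_{\delta}$ lies within $\delta$ (in total variation) of a \emph{known} point of $\mathcal{C}_{\delta}$; by continuity of Shannon entropy under small total-variation perturbations, each such term is at most $\eta(\delta):=2\delta\log S + h_{b}(2\delta)$, with $h_{b}$ the binary entropy. Summing, the residual is at most $KH\,\eta(\delta)$.

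Combining the two bounds gives $\I^{\mu,\nu}(\E;\D_{K+1})\le S^{2}ABH\log(3/\delta)+KH\,\eta(\delta)$; with $\delta=1/(SKH)$ the second term becomes a negligible $O(\log(SKH))$ and the first is at most $2S^{2}ABH\log(SKH)$, which yields the claim. The main obstacle is not any single inequality but the bookkeeping forced by the \emph{adaptive}, history-dependent policies: one must argue cleanly that conditioning on $\mathcal{H}_{k,h}$ makes the policy's internal randomization and the chosen actions carry no information about $\E$, so the chain rule truly collapses onto the $KH$ transition terms, and one must make the per-transition continuity estimate $\eta(\delta)$ precise. I expect this argument to transfer verbatim when $\E$ is replaced by a compressed environment $\te$ whose kernels again live in $\Delta(\s)^{SABH}$, which is how Lemma~\ref{lemma:mi} is invoked in the analysis of {\sc Compressed-MAIDS}.
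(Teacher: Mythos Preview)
Your argument is correct and is essentially the covering/discretization approach of Hao--Lattimore's Lemma~3.3 that the paper cites (with $A$ replaced by $AB$ for the two-player setting): quantize each kernel to a $\delta$-net of $\Delta(\s)$, bound $\I(\E_\delta;\D_{K+1})\le\HH(\E_\delta)\le S^{2}ABH\log(C/\delta)$, control the residual $\I(\E;\D_{K+1}\mid\E_\delta)$ per transition via a Fannes-type continuity-of-entropy estimate, and take $\delta\asymp 1/(SKH)$. Only the final bookkeeping for the constant~$2$ deserves a line of care (absorbing the $\log 3$ and the $O(\log(SKH))$ residual requires $SKH$ bounded away from~$1$), but this is routine.
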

\begin{proof}[Proof of Lemma~\ref{lemma:mi}]
The proof is adapted from~\cite[Lemma~3.3]{hao2022regret} with appropriate modifications for two-player zero-sum MGs.    
\end{proof} 

Note that Lemma~\ref{lemma:mi} holds for any policy $(\mu,\nu)$, thus it also applies to the executed policy $(\mui,\nui) = (\{\muids \}_{k\in[K]}, \{\nuids\}_{k\in[K]})$. 

Therefore, combining Eqns.~\eqref{eq:18}-\eqref{eq:ids},~\eqref{eq:dk},~\eqref{eq:mi1}-\eqref{eq:mi2}, and Lemma~\ref{lemma:mi}, as well as substituting $(\mu,\nu)$ to $(\mui,\nui)$, we have 
\begin{align}
\ex \left(\sum_{k=1}^K V_1^{\E,*}(s_1) - V_{1,\muids, \nuids}^{\E} (s_1) \right) &\le \sqrt{4KH^3SAB} \cdot \sqrt{2S^2ABH\log(SKH)} \\
&\le 4S^{3/2}ABH^2 \sqrt{K \log(SKH) }.
\end{align}

\subsection{The second term in~\eqref{eq:ji}}
The second term in~\eqref{eq:ji}, which represents the Bayesian regret induced by the min-player's policy $\nuids$ with respect to the fixed max-player's policy $\muids$, can be handled in a similar manner as in Section~\ref{sec:4.1}. Below, we provide a brief sketch of the proof. 

Following similar steps in Eqns.~\eqref{eq:18}-\eqref{eq:ids} and Eqns.~\eqref{eq:mi1}-\eqref{eq:mi3}, we have
\begin{align}
&\ex\left(\sum_{k=1}^K V_{1,\muids, \nuids}^{\E} (s_1) - V_{1,\muids, \dagger}^{\E} (s_1) \right) \\
&= \sum_{k=1}^K \ex_{\D_k} \left[ \ex_{\E \sim \PP(\cdot|\D_k )} \left(V_{1,\muids,\nuids}^{\E}(s_1) - V_{1,\muids, \dagger}^{\E} (s_1) \right) \right] \\
&= \sum_{k=1}^K \ex_{\D_k} \left[\sqrt{\frac{\left(\ex_{k} \left(V_{1,\muids,\nuids}^{\E}(s_1) - V_{1,\muids, \dagger}^{\E} (s_1) \right)\right)^2}{\I_k^{\muids,\nuids}(\E;\T^{k}_{H+1})}} \sqrt{\I_k^{\muids,\nuids}(\E;\T^{k}_{H+1})} \right] \\
&\le \sqrt{\sum_{k=1}^K \ex_{\D_k}\left[\frac{\left(\ex_{k} \left(V_{1,\muids,\nuids}^{\E}(s_1) - V_{1,\muids, \dagger}^{\E} (s_1) \right)\right)^2}{\I_k^{\muids,\nuids}(\E;\T^{k}_{H+1})}\right] } \cdot \sqrt{\sum_{k=1}^K \ex_{\D_k} \left[\I_k^{\muids,\nuids}(\E;\T^{k}_{H+1}) \right]}\\
&\le \sqrt{\sum_{k=1}^K \ex_{\D_k} \left[\Lambda^{\muids}_k(\nuids,\E) \right] } \cdot \sqrt{\I^{\mui,\nui}(\E;\D^{K+1}) }.    
\end{align}
Recall that for each episode $k$ and for every possible trajectory $\D_k$, the max-player's policy $\muids$ can be calculated based on the minimax optimization in~\eqref{eq:ids1}. Let $\nuts(\muids)$ be the TS policy of the min-player with respect to the max-player's policy $\muids$, such that it first samples a realization of the environment $\E = \e$ according to $\E \sim \PP(\cdot|\D_k)$, and then chooses the best response to $\muids$ under environment $\e$ (which is denoted by $\nu^{\dagger}_e(\muids)$).  When there is no confusion, we may use the abbreviation $\nuts$ to replace $\nuts(\muids)$ for simplicity. Note that the TS policy $\nuts$ is a Markov policy. By the definition of $\nuids$, we have
\begin{align}
\Lambda^{\muids}_k(\nuids,\E) \le \Lambda^{\muids}_k(\nuts(\muids),\E).    
\end{align}
Moreover, one can prove an analogous version of Lemma~\ref{lemma:ts} to show that the TS policy $\nuts$ satisfies $$\Lambda^{\muids}_k(\nuts(\muids),\E) \le 4H^3SAB.$$ Combining the upper bound on $\Lambda^{\muids}_k(\nuids,\E)$ with Lemma~\ref{lemma:mi}, one can eventually show that   
\begin{align}
    \ex\left(\sum_{k=1}^K V_{1,\muids, \nuids}^{\E} (s_1) - V_{1,\muids, \dagger}^{\E} (s_1) \right) \le 4S^{3/2}ABH^2 \sqrt{K \log(SKH) }.
\end{align}


\section{Appendix for {\sc Reg-MAIDS}: Proof of Theorem~\ref{thm:proof_thm2}} \label{sec:proof_thm2}

We first focus on the first term in~\eqref{eq:ji}. Note that for any min-player's policy $\nu = \{\nu^k\}_{k=1}^K$, the max-player's policy $\murii = \{\muri\}_{k=1}^K$ satisfies
\begin{align}
&\ex \left(\sum_{k=1}^K V_1^{\E,*}(s_1) - V_{1,\muri, \nu^k}^{\E} (s_1) \right) \notag \\
&= \sum_{k=1}^K \ex_{\D_k} \left[ \ex_{k} \left(V_1^{\E,*}(s_1) - V_{1,\muri, \nu^k}^{\E} (s_1) \right) \right] \notag \\
&= \sum_{k=1}^K \ex_{\D_k}\! \left[\ex_{k}\! \left[V_1^{\E,*}(s_1)\right] -\ex_k\!\left[ V_{1,\muri, \nu^k}^{\E} (s_1) \right] - \lambda \I^{\muri,\nu^k}_k(\E; \T^k_{H+1})  + \lambda \I^{\muri,\nu^k}_k(\E; \T^k_{H+1}) \right] \notag \\
&\le \sum_{k=1}^K \ex_{\D_k}\! \left[\ex_{k}\! \left[V_1^{\E,*}(s_1)\right] - \left( \min_{\nu'} \ex_k\!\left[ V_{1,\muri, \nu'}^{\E} (s_1) \right] + \lambda \I^{\muri,\nu'}_k(\E; \T^k_{H+1}) \right)  + \lambda \I^{\muri,\nu^k}_k(\E; \T^k_{H+1}) \right]. \label{eq:mine} 
\end{align}
By the definition of $\muri$, we have 
$$\min_{\nu'} \ex_k\!\left[ V_{1,\muri, \nu'}^{\E} (s_1) \right] + \lambda \I^{\muri,\nu'}_k(\E; \T^k_{H+1}) \ge \min_{\nu'} \ex_k\!\left[ V_{1,\muids, \nu'}^{\E} (s_1) \right] + \lambda \I^{\muids,\nu'}_k(\E; \T^k_{H+1}).$$
Thus,~\eqref{eq:mine} can be bounded from above as
\begin{align}
&\sum_{k=1}^K \ex_{\D_k}\! \left[\ex_{k}\! \left[V_1^{\E,*}(s_1)\right] - \left( \min_{\nu'} \ex_k\!\left[ V_{1,\muri, \nu'}^{\E} (s_1) \right] + \lambda \I^{\muri,\nu'}_k(\E; \T^k_{H+1}) \right)  + \lambda \I^{\muri,\nu^k}_k(\E; \T^k_{H+1}) \right] \notag \\
&\le  \sum_{k=1}^K \ex_{\D_k}\! \left[\ex_{k}\! \left[V_1^{\E,*}(s_1)\right] - \left( \min_{\nu'} \ex_k\!\left[ V_{1,\muids, \nu'}^{\E} (s_1) \right] + \lambda \I^{\muids,\nu'}_k(\E; \T^k_{H+1}) \right)  + \lambda \I^{\muri,\nu^k}_k(\E; \T^k_{H+1}) \right] \notag \\
&= \sum_{k=1}^K \ex_{\D_k}\! \left[ 
\left( \max_{\nu'} \ex_{k}\! \left[V_1^{\E,*}(s_1) - V_{1,\muids, \nu'}^{\E} (s_1) \right] - \lambda \I^{\muids,\nu'}_k(\E; \T^k_{H+1}) \right)  + \lambda \I^{\muri,\nu^k}_k(\E; \T^k_{H+1}) \right]. \notag
\end{align}
 Next, due to the AM-GM inequality, we have  
\begin{align*}
&\ex_{k} \left[V_1^{\E,*}(s_1) - V_{1,\muids, \nu'}^{\E} (s_1) \right] \\
&\le \ex_{k} \left[V_{1,\mu^*(\E),\nu'}^{\E}(s_1) - V_{1,\muids, \nu'}^{\E} (s_1) \right] \\
&= \frac{\ex_{k} \left[V_{1,\mu^*(\E),\nu'}^{\E} (s_1) - V_{1,\muids, \nu'}^{\E} (s_1) \right]}{\sqrt{\lambda \I^{\muids,\nu'}_k(\E; \T^k_{H+1})}} \cdot \sqrt{\lambda \I^{\muids,\nu'}_k(\E; \T^k_{H+1})} \\
&\le \frac{\left(\ex_{k} \left[V_{1,\mu^*(\E),\nu'}^{\E}(s_1) - V_{1,\muids, \nu'}^{\E} (s_1) \right] \right)^2}{4\lambda \I^{\muids,\nu'}_k(\E; \T^k_{H+1})} + \lambda \I^{\muids,\nu'}_k(\E; \T^k_{H+1})\\
&= \frac{1}{4\lambda} \Gamma_k(\muids, \nu',\E) + \lambda \I^{\muids,\nu'}_k(\E; \T^k_{H+1}).
\end{align*}
Therefore, for the policy $(\murii,\nurii)$ of {\sc Reg-MAIDS}, we have
\begin{align}
&\ex \left(\sum_{k=1}^K V_1^{\E,*}(s_1) - V_{1,\muri, \nuri}^{\E} (s_1) \right) \\
&\le  \sum_{k=1}^K \ex_{\D_k}\! \left[\left(\max_{\nu'} \frac{\Gamma_k(\muids, \nu',\E)}{4\lambda} \right) + \lambda \I^{\muri,\nuri}_k(\E; \T^k_{H+1}) \right] \\
&\le  \sum_{k=1}^K \ex_{\D_k}\! \left[\left(\max_{\nu'} \frac{\Gamma_k(\muts, \nu',\E)}{4\lambda} \right)\right] + \lambda \I^{\murii,\nurii}(\E; \D_{K+1}) \label{eq:guo} \\
&\le K \cdot \frac{4H^3SAB}{4\lambda} + \lambda \cdot 2S^2ABH \log(SKH)
\label{eq:guo2} \\
&= \frac{5}{2} \sqrt{2H^4S^3A^2B^2K \log(SKH)}. \label{eq:guo3}
\end{align}
where Eqn.~\eqref{eq:guo} follows from the definition of $\muids$ in~\eqref{eq:ids1} as well as the derivations in~\eqref{eq:mi1}-\eqref{eq:mi3}. Eqn.~\eqref{eq:guo2} is obtained by recalling the upper bound on $\Gamma_k(\muts, \nu, \E)$  presented in Lemma~\ref{lemma:ts} and the upper bound on $\I^{\murii,\nu}(\E; \D_{K+1})$ presented in Lemma~\ref{lemma:mi}, while Eqn.~\eqref{eq:guo3}  is obtained by choosing $$\lambda = \sqrt{2KH^2/S\log(SKH)}.$$

We remark that the  second term in~\eqref{eq:ji} can be handled in a similar fashion. By choosing the parameter $$\tl = \lambda = \sqrt{2KH^2/S\log(SKH)},$$ one can show that the second term in~\eqref{eq:ji} is also upper bounded by $\frac{5}{2} \sqrt{2H^4S^3A^2B^2K \log(SKH)}$, leading to the following upper bound on the Bayesian regret:
\begin{align}
    \mathsf{BR}_K (\murii) &= \ex \left(\sum_{k=1}^K V_1^{\E,*}(s_1) - V_{1,\muri, \dagger}^{\E} (s_1) \right) \le  8H^2S^{3/2}AB\sqrt{K \log(SKH)}.
\end{align}

\section{Appendix for {\sc Compressed-MAIDS}} \label{sec:proof_thm3}

\subsection{Comparison of distortion measures}
\begin{remark}{\em
The distortion measure $d_{\Phi_A, \Phi_B}$ is similar to the one defined in~\cite{arumugam2022deciding} for MDPs, which follows the \emph{value equivalence principle}~\cite{grimm2020value,grimm2021proper}. We note that their definition focuses on the value difference through the Bellman update; specifically, for some value function class $\mathcal{V}$ and two environments $e$ and $e'$, they define the distortion between $e$ and $e'$ by $$\sup_{\mu \in \Phi_A }\sup_{V \in \mathcal{V}} \big|\ex_{a \sim \mu(\cdot|s), s' \sim P^e(\cdot|s,a)} [V(s')] - \ex_{ a\sim \mu(\cdot|s), s' \sim P^{e'}(\cdot|s,a)} [V(s')]\big|$$ in the context of time-homogeneous MDPs with only one player. Note that they require an additional value function class~$\mathcal{V}$; in contrast, we directly let the value function $V_{1,\mu,\nu}^e(s_1)$  depend on the policy and the environment. The similarity of these two measures (if considered in the context of zero-sum MGs), roughly speaking, is that our definition $|V_{1,\mu,\nu}^e(s_1) - V_{1,\mu,\nu}^{e'}(s_1)|$ can be seen as the \emph{cumulative} differences of values through the Bellman update for the $H$ steps.   }   \label{remark:distortion}
\end{remark}

\subsection{A concrete construction of the compressed environment } \label{appendix:F2}
Below, we provide a concrete construction of the compressed environment $\te$ that satisfies~\eqref{eq:constraint2} for any $\Phi_A \subseteq \Pi_A$ and $\Phi_B \subseteq \Pi_B$. 
First, we introduce the concept of \emph{covering number}.
\begin{definition}[covering number] \label{def:cover}
For any $\delta \ge 0$, we say the set $\mathcal{C}(\delta) \subseteq \Delta(\s)$ is a $\delta$-covering of the probability simplex $\Delta(\s)$ w.r.t. the total variation distance $\DD_{\text{\emph{TV}}}$ if 
$\forall P \in \Delta(\s), \ \exists P' \in \mathcal{C}(\delta) \text{ such that } \DD_{\text{\emph{TV}}}(P,P') \le \delta.$ 
The $\delta$-covering number of $\Delta(\s)$ is then denoted as 
\begin{align*}
&\kappa(\delta) \triangleq \min\{n: \exists \text{ a $\delta$-covering of } \Delta(\s) \text{ with cardinality }    n  \}.
\end{align*} 
\end{definition}

Next, we describe the construction of the subspaces $\{\Theta_c\}_{c=1}^{\C}$ (with $\C = \kappa(\epsilon/2H^2)^{SABH}$) as well as the construction of the compressed environment $\te$. 
\begin{enumerate}
\item We partition the environment space $\Theta$ into $\C \triangleq \kappa(\epsilon/2H^2)^{SABH}$ subspaces $\{\Theta_c \}_{c=1}^{\C}$, where $\kappa(\epsilon/2H^2)$ is the covering number. For each $(s,a,b,h) \in \s \times \A\times \B\times [H]$, we divide the support of $P_h(\cdot|s,a,b)$, which is $\Delta(\s)$, into $\kappa(\epsilon/2H^2)$ balls of radius $\epsilon/2H^2$ (w.r.t. the total variation distance). Each ball is centered at one element of the set $\mathcal{C}(\epsilon/2H^2)$, where $\mathcal{C}(\epsilon/2H^2)$ is an $(\epsilon/2H^2)$-covering of $\Delta(\s)$. Thus, for any two distributions $P$ and $\tilde{P}$ in the same ball, by the triangle inequality, it is clear that $$\DD_{\text{TV}}(P,\tilde{P}) \le \DD_{\text{TV}}(P,P') + \DD_{\text{TV}}(P',\tilde{P}) \le \epsilon/H^2$$ where $P'$ is the ball center. Next, we ``trim'' these balls in such a way that if a distribution $P \in \Delta(\s)$ appears in two or more balls, we remove it from all but one ball. This trimming operation ensures that the partitions of $\Delta(\s)$ are disjoint. Since we need to divide the support of $P_h(\cdot|s,a,b)$ for all $(s,a,b,h) \in \s \times \A\times \B\times [H]$, we have $\C = \kappa(\epsilon/2H^2)^{SABH}$. It is clear that the subspaces $\{\Theta_c \}_{c=1}^{\C}$ are disjoint and their union constitutes the space~$\Theta$. 

\item For each subspace $\Theta_c$, we arbitrarily select an element $e_c \in \Theta_c$ as the \emph{reference environment} of $\Theta_c$. We then construct the compressed environment $\te$ in such a way that 
\begin{align}
 \te = e_c \ \ \ \text{if and only if }  \ \E \in \Theta_c. \label{eq:concrete}   
\end{align}
\end{enumerate}

Lemma~\ref{lemma:property} below shows that the above construction of $\te$ satisfies the hard-compression constraint in~\eqref{eq:constraint2}.

\begin{lemma} \label{lemma:property}
For every episode $k \in [K]$ and any $\Phi_A \subseteq \Pi_A$ and $\Phi_B \subseteq \Pi_B$, the compressed environment $\te$ satisfies that $$\PP(d_{\Phi_A, \Phi_B}(\E,\te) > \epsilon ) = 0.$$
\end{lemma}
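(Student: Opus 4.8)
The plan is to reduce the statement to a standard simulation-lemma estimate, using only the way the partition $\{\Theta_c\}_{c=1}^{\C}$ was built. First I would observe that the construction~\eqref{eq:concrete} forces $\E$ and $\te$ to lie in a common subspace: since $\{\Theta_c\}_{c=1}^{\C}$ is a partition of $\Theta$, for every realization there is a unique index $c$ with $\E \in \Theta_c$, and by definition $\te = e_c \in \Theta_c$ as well. Hence it suffices to prove the deterministic claim that $d_{\Phi_A,\Phi_B}(e,e') \le \epsilon$ for every $c$ and every pair $e,e' \in \Theta_c$; the event $\{d_{\Phi_A,\Phi_B}(\E,\te) > \epsilon\}$ is then empty, which gives probability zero.

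Next I would record the transition-kernel estimate built into the partition. By the covering-plus-trimming construction, for any $e, e' \in \Theta_c$ the kernels $P_h^e(\cdot|s,a,b)$ and $P_h^{e'}(\cdot|s,a,b)$ belong to a common total-variation ball of radius $\epsilon/2H^2$ centered at some element of $\mathcal{C}(\epsilon/2H^2)$, so the triangle inequality gives $\DD_{\text{TV}}(P_h^e(\cdot|s,a,b), P_h^{e'}(\cdot|s,a,b)) \le \epsilon/H^2$ for every $(s,a,b,h)$.

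The core step is a simulation lemma: for any policy $(\mu,\nu)$ and any two environments $e, e'$ sharing the same reward functions with $\max_{s,a,b,h}\DD_{\text{TV}}(P_h^e(\cdot|s,a,b),P_h^{e'}(\cdot|s,a,b)) \le \delta$, one has $|V_{1,\mu,\nu}^e(s_1) - V_{1,\mu,\nu}^{e'}(s_1)| \le H^2\delta$. I would prove this by the usual telescoping of the Bellman equations: expanding $Q_h^e - Q_h^{e'}$ at each step produces a transition-mismatch term $\ex_{s'\sim P_h^e(\cdot|s,a,b)}[V_{h+1,\mu,\nu}^{e'}(s')] - \ex_{s'\sim P_h^{e'}(\cdot|s,a,b)}[V_{h+1,\mu,\nu}^{e'}(s')]$ plus a recursive term $\ex_{s'\sim P_h^e}[V_{h+1,\mu,\nu}^e(s') - V_{h+1,\mu,\nu}^{e'}(s')]$. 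Since rewards lie in $[0,1]$, each $V_{h+1,\mu,\nu}^{e'}$ takes values in $[0,H]$, so each mismatch term is at most $H\,\DD_{\text{TV}} \le H\delta$ in absolute value; unrolling the recursion over $h = 1,\dots,H$ (it terminates at $V_{H+1} \equiv 0$) bounds the value gap by $H^2\delta$.

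Finally, taking $\delta = \epsilon/H^2$ gives $|V_{1,\mu,\nu}^e(s_1) - V_{1,\mu,\nu}^{e'}(s_1)| \le \epsilon$ for every $(\mu,\nu)$, and taking the supremum over $(\mu,\nu) \in \Phi_A \times \Phi_B$ yields $d_{\Phi_A,\Phi_B}(e,e') \le \epsilon$ whenever $e,e'$ lie in the same $\Theta_c$; combined with the first paragraph this gives $\PP(d_{\Phi_A,\Phi_B}(\E,\te) > \epsilon) = 0$. The only mildly delicate point is setting up the telescoping identity cleanly and checking that the per-step error is controlled by the \emph{range} $H$ of the value function rather than by $\|V\|_\infty$ with an extra factor of two — this is precisely why the partition uses balls of radius $\epsilon/2H^2$ rather than $\epsilon$ — but there is no genuine obstacle, since the two environments share rewards and the discrepancy enters only through the transition kernels, uniformly over all policies in $\Phi_A\times\Phi_B$.
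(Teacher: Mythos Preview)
Your proposal is correct and follows essentially the same route as the paper: both arguments reduce to showing that any two environments $e,e'$ in a common cell $\Theta_c$ satisfy $|V_{1,\mu,\nu}^e(s_1)-V_{1,\mu,\nu}^{e'}(s_1)|\le\epsilon$ for all policies, via a telescoping (simulation-lemma) identity combined with the uniform kernel bound $\DD_{\text{TV}}\le\epsilon/H^2$. The paper packages the telescoping as Lemma~\ref{lemma:diff} rather than unrolling the Bellman recursion by hand, but the computation and the resulting $H^2\delta$ bound are identical.
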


\begin{proof} \label{appendix:lemma2}

For any subspace $\Theta_c$, any pair of environment $e,e' \in \Theta_c$, and any policy $(\mu,\nu) \in \Phi_A \times \Phi_B$, we have
\begin{align}
&V_{1,\mu,\nu}^e(s_1) - V_{1,\mu,\nu}^{e'}(s_1) \\
&=\sum_{h=1}^H \ex_{\mu,\nu}^{e'}\left[ \ex_{s' \sim P_h^e(\cdot|s_h,a_h,b_h)} [V_{h+1,\mu,\nu}^e(s')] - \ex_{s' \sim P_h^{e'}(\cdot|s_h,a_h,b_h)} [V_{h+1,\mu,\nu}^e(s') \right]  \label{eq:sss}\\
&=\sum_{h=1}^H \ex_{\mu,\nu}^{e'} \left[[P_h^{e}(\cdot|s_h,a_h,b_h) - P_h^{e'}(\cdot|s_h,a_h,b_h)] \cdot V_{h+1,\mu,\nu}^e(s')  \right] \\
&\le H \sum_{h=1}^H \ex_{\mu,\nu}^{e'} \left[ \big|P_h^{e}(\cdot|s_h,a_h,b_h) - P_h^{e'}(\cdot|s_h,a_h,b_h)\big| \right] \\
&\le H \sum_{h=1}^H \max_{s,a,b} \big|P_h^{e}(\cdot|s,a,b) - P_h^{e'}(\cdot|s,a,b)\big| \\
&\le \epsilon,
\end{align}
where~\eqref{eq:sss} follows from Lemma~\ref{lemma:diff}, and the last inequality follows from the property of $\Theta_c$. Therefore, based on the construction of $\te$ in~\eqref{eq:concrete}, it is clear that 
\begin{align}
 \PP\left(d_{\Phi_A, \Phi_B}(\E,\te) > \epsilon \right) = 0.
    \end{align}
This completes the proof of Lemma~\ref{lemma:property}.
\end{proof}

\subsection{Proof of Theorem~\ref{thm:compressed}}  Recall that 
\begin{align}
\mathsf{BR}_K (\mucii) &= \ex \left(\sum_{k=1}^K V_1^{\E,*}(s_1) - V_{1,\muci, \dagger}^{\E} (s_1) \right) \\
&= \ex \left(\sum_{k=1}^K V_1^{\E,*}(s_1) - V_{1,\muci, \nuci}^{\E} (s_1) \right) + \ex \left(\sum_{k=1}^K V_{1,\muci, \nuci}^{\E} (s_1) - V_{1,\muci, \dagger}^{\E} (s_1) \right). \label{eq:jii}
\end{align} 
For the first term in~\eqref{eq:jii}, we have 
\begin{align}
&\ex \left(\sum_{k=1}^K V_1^{\E,*}(s_1) - V_{1,\muci, \nuci}^{\E} (s_1) \right) \label{eq:q18} \\
&\le \ex \left(\sum_{k=1}^K V_{1,\mu^*(\E),\nuci}^{\E}(s_1) - V_{1,\muci, \nuci}^{\E} (s_1) \right) \label{eq:qnew} \\
&= \sum_{k=1}^K \ex_{\D_k} \left[ \ex_{k} \left(V_{1,\mu^*(\E),\nuci}^{\E}(s_1) - V_{1,\muci, \nuci}^{\E} (s_1) \right) - 2\epsilon \right] + 2K \epsilon. \label{eq:64}
\end{align}
In Lemma~\ref{lemma:property2} below, we characterize the difference of performance of the same policy on the environment $\E$ and the compressed environment $\te$.  
\begin{lemma} \label{lemma:property2}
    For any episode $k \in [K]$ and any policy $(\mu,\nu)$, we have
    \begin{align}
\ex_k\left[ V_{1,\mu^*(\E),\nu}^{\E} (s_1) - V_{1,\mu,\nu}^{\E} (s_1) \right] - 2\epsilon \le \ex_k\left[ V_{1,\mu^*(\E),\nu}^{\te} (s_1) - V_{1,\mu,\nu}^{\te} (s_1) \right]. 
    \end{align}
\end{lemma}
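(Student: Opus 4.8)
The plan is to prove the inequality pointwise in the realization of $\E$ and then take the conditional expectation $\ex_k[\cdot]$. Fix a realization $\E = e$ and let $\tilde e$ be the corresponding realization of the compressed environment $\te$; under the hard-compression construction of Appendix~\ref{appendix:F2}, $\tilde e$ is a deterministic function of $e$ (namely $\tilde e = e_c$ for the unique subspace $\Theta_c \ni e$). The policies entering the statement, $\mu^*(e)$, $\mu$, and $\nu$, are all Markov policies (the NE policy $\mu^*(e)$ is Markov by Section~\ref{sec:model}, and the lemma concerns Markov $(\mu,\nu)$), so $\mu^*(e),\mu \in \Pi_A$ and $\nu \in \Pi_B$. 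Hence the definition of the distortion measure gives
\[
\bigl|V_{1,\mu^*(e),\nu}^{e}(s_1) - V_{1,\mu^*(e),\nu}^{\tilde e}(s_1)\bigr| \le d_{\Pi_A,\Pi_B}(e,\tilde e), \qquad \bigl|V_{1,\mu,\nu}^{e}(s_1) - V_{1,\mu,\nu}^{\tilde e}(s_1)\bigr| \le d_{\Pi_A,\Pi_B}(e,\tilde e).
\]

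Next I would invoke Lemma~\ref{lemma:property} with $\Phi_A = \Pi_A$, $\Phi_B = \Pi_B$, which guarantees $\PP(d_{\Pi_A,\Pi_B}(\E,\te) > \epsilon) = 0$, i.e.\ $d_{\Pi_A,\Pi_B}(\E,\te) \le \epsilon$ almost surely. Writing the suboptimality gap on $e$ as the gap on $\tilde e$ plus the two ``cross'' terms $V_{1,\mu^*(e),\nu}^{e}(s_1) - V_{1,\mu^*(e),\nu}^{\tilde e}(s_1)$ and $V_{1,\mu,\nu}^{\tilde e}(s_1) - V_{1,\mu,\nu}^{e}(s_1)$, and bounding each by $d_{\Pi_A,\Pi_B}(e,\tilde e) \le \epsilon$, we obtain
\[
V_{1,\mu^*(e),\nu}^{e}(s_1) - V_{1,\mu,\nu}^{e}(s_1) - 2\epsilon \le V_{1,\mu^*(e),\nu}^{\tilde e}(s_1) - V_{1,\mu,\nu}^{\tilde e}(s_1)
\]
for almost every $e$ drawn from $\PP(\cdot|\D_k)$. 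Taking $\ex_k[\cdot]$ of both sides preserves the inequality (it holds a.s.), which yields exactly the claimed bound.

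There is no substantial obstacle here; the statement is essentially a one-line consequence of the distortion bound combined with the hard-compression guarantee of Lemma~\ref{lemma:property}. The only points deserving a moment's care are that $\mu^*(\E)$ should be understood as a measurable selection of NE policies, so that the pair $(\mu^*(e),\nu)$ lies in $\Pi_A \times \Pi_B$ for every realization $e$ (making the distortion bound applicable), and that the almost-sure qualifier transfers cleanly through $\ex_k[\cdot]$; if one instead wishes the bound to be stated relative to proper subsets $\Phi_A \times \Phi_B$, it suffices to have $\mu^*(\E),\mu \in \Phi_A$ and $\nu \in \Phi_B$, and the argument is otherwise unchanged.
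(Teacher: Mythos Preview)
Your proposal is correct and follows essentially the same approach as the paper: both arguments invoke Lemma~\ref{lemma:property} with $\Phi_A=\Pi_A$, $\Phi_B=\Pi_B$ to get $d_{\Pi_A,\Pi_B}(\E,\te)\le\epsilon$ almost surely, then bound each of the two cross terms $V^{\E}-V^{\te}$ by $\epsilon$ using the definition of the distortion measure. The only cosmetic difference is that the paper writes out the conditioning on the events $\{d>\epsilon\}$ and $\{d\le\epsilon\}$ at the level of $\ex_k$, whereas you argue pointwise and then integrate; these are the same argument.
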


\begin{proof}[Proof of Lemma~\ref{lemma:property2}]
Recall that the compressed environment $\te$ constructed in~\eqref{eq:concrete} satisfies the hard-compression constraint for any $\Phi_A \subseteq \Pi_A$ and $\Phi_B \subseteq \Pi_B$. Setting $\Phi_A = \Pi_A$ and $\Phi_B = \Pi_B$, we have
\begin{align}
&\ex_k\left[ V_{1,\mu^*(\E),\nu}^{\E} (s_1)\right] - \ex_k\left[ V_{1,\mu^*(\E),\nu}^{\te} (s_1)\right] \\
&= \PP_k( d_{\Phi_A,\Phi_B}(\E,\te) > \epsilon ) \cdot \ex_k\left[V_{1,\mu^*(\E),\nu}^{\E} (s_1) - V_{1,\mu^*(\E),\nu}^{\te} (s_1) \big| d_{\Phi_A,\Phi_B}(\E,\te) > \epsilon \right] \notag \\ 
&\qquad + \PP_k( d_{\Phi_A,\Phi_B}(\E,\te) \le \epsilon ) \cdot \ex_k\left[V_{1,\mu^*(\E),\nu}^{\E} (s_1) - V_{1,\mu^*(\E),\nu}^{\te} (s_1) \big| d_{\Phi_A,\Phi_B}(\E,\te) \le \epsilon \right] \\
&= \ex_k\left[V_{1,\mu^*(\E),\nu}^{\E} (s_1) - V_{1,\mu^*(\E),\nu}^{\te} (s_1) \big| d_{\Phi_A,\Phi_B}(\E,\te) \le \epsilon \right] \label{eq:bai1} \\
&\le \epsilon, \label{eq:bai2}
\end{align}
where Eqn.~\eqref{eq:bai1} is due to the fact that the compressed environment $\te$ satisfies $\PP_k( d_{\Phi_A,\Phi_B}(\E,\te) > \epsilon ) = 0$, and Eqn.~\eqref{eq:bai2} follows from the definition of $d_{\Phi_A,\Phi_B}(\E,\te)$ for $\Phi_A = \Pi_A$ and $\Phi_B = \Pi_B$.
Similarly, one can also show that 
\begin{align}
\ex_k\left[ V_{1,\mu,\nu}^{\E} (s_1)\right] - \ex_k\left[ V_{1,\mu,\nu}^{\E} (s_1)\right] \le \epsilon.  \label{eq:bai3}  
\end{align}
Combining~\eqref{eq:bai2} and~\eqref{eq:bai3} together, we complete the proof of Lemma~\ref{lemma:property2}.
\end{proof}

Based on Lemma~\ref{lemma:property2}, we can bound~\eqref{eq:64} from above as
\begin{align}
&\sum_{k=1}^K \ex_{\D_k} \left[ \ex_{k} \left(V_{1,\mu^*(\E),\nuci}^{\E}(s_1) - V_{1,\muci, \nuci}^{\E} (s_1) \right) - 2\epsilon \right] + 2K \epsilon \\
&\le \sum_{k=1}^K \ex_{\D_k} \left[ \ex_{k} \left(V_{1,\mu^*(\E),\nuci}^{\te}(s_1) - V_{1,\muci, \nuci}^{\te} (s_1) \right) \right] + 2K \epsilon \label{eq:kemm} \\
&\le\sum_{k=1}^K \ex_{\D_k} \left[\sqrt{\tilde{\Gamma}_k(\muci,\nuci,\te)} \sqrt{\I_k^{\muci,\nuci}(\te;\T^{k}_{H+1})} \right] + 2K \epsilon \\
&\le \sqrt{\sum_{k=1}^K \ex_{\D_k}\left[\tilde{\Gamma}_k(\muci,\nuci,\te) \right] } \cdot \sqrt{\sum_{k=1}^K \ex_{\D_k} \left[\I_k^{\muci,\nuci}(\te;\T^{k}_{H+1}) \right]} +2K \epsilon \label{eq:qcauchy} \\
&\le \sqrt{\sum_{k=1}^K \ex_{\D_k}\left[\max_{\nu} \tilde{\Gamma}_k(\muci,\nu,\te) \right] } \cdot \sqrt{\I^{\mucii,\nucii}(\te; \D_{K+1}) } + 2K \epsilon \\
&\le \sqrt{\sum_{k=1}^K \ex_{\D_k}\left[\max_{\nu} \tilde{\Gamma}_k(\muts,\nu,\te) \right] } \cdot \sqrt{\I^{\mucii,\nucii}(\te; \D_{K+1}) } + 2K \epsilon.  \label{eq:q37}
\end{align}
where~\eqref{eq:qcauchy} follows from the Cauchy-Schwarz inequality, and~\eqref{eq:q37} is due to the definition of the max-player's policy $\mucii$, as well as the chain rule of mutual information. Using the same proof technique as for Lemma~\ref{lemma:ts}, one can prove that for any $\nu$ and any distribution of $\E$, 
\begin{align}
    \tilde{\Gamma}_k(\muts,\nu,\te)\le 4H^3SAB. 
\end{align}
This means that the first term in~\eqref{eq:jii} satisfies
\begin{align}
    \ex \left(\sum_{k=1}^K V_1^{\E,*}(s_1) - V_{1,\muci, \nuci}^{\E} (s_1) \right) \le  \sqrt{4KH^3SAB \cdot \I^{\mucii,\nucii}(\te; \D_{K+1})} + 2K\epsilon.
\end{align}

For the second term in~\eqref{eq:jii}, we have  
\begin{align}
&\ex\left(\sum_{k=1}^K V_{1,\muci, \nuci}^{\E} (s_1) - V_{1,\muci, \dagger}^{\E} (s_1) \right) \\
&= \sum_{k=1}^K \ex_{\D_k} \left[ \ex_k \left( V_{1,\muci, \nuci}^{\E} (s_1) - V_{1,\muci, \dagger}^{\E} (s_1) \right) - 2\epsilon \right] + 2K\epsilon. \label{eq:45}
\end{align}
Similar to Lemma~\ref{lemma:property2}, we can also show that
\begin{align}
&\ex_k\left[ V_{1,\muci,\nuci}^{\E} (s_1) - V_{1,\muci,\nu^{\dagger}_{\E}(\muci)}^{\E} (s_1) \right] - 2\epsilon \notag \\
&\qquad\qquad \le \ex_k\left[ V_{1,\muci,\nuci}^{\te} (s_1) - V_{1,\muci,\nu^{\dagger}_{\te}(\muci)}^{\te} (s_1) \right]. 
    \end{align}
Thus,~\eqref{eq:45} can be bounded from above as 
\begin{align}
&\sum_{k=1}^K \ex_{\D_k} \left[ \ex_k \left( V_{1,\muci,\nuci}^{\te} (s_1) - V_{1,\muci,\nu^{\dagger}_{\te}(\muci)}^{\te} (s_1) \right)  \right] + 2K\epsilon  \\
&\le \sum_{k=1}^K \ex_{\D_k} \left[ \sqrt{\tilde{\Lambda}_k^{\muci}(\nuci,\te) } \times \sqrt{\I_k^{\muci,\nuci}(\te; \T_{H+1}^k) }  \right] + 2K\epsilon \\
&\le \sqrt{\sum_{k=1}^K \ex_{\D_k}\left[ \tilde{\Lambda}_k^{\muci}(\nuci,\te)  \right]  } \times  \sqrt{\sum_{k=1}^K \ex_{\D_k}\left[\I_k^{\muci,\nuci}(\te; \T_{H+1}^k) \right]}+ 2K\epsilon \\
&\le \sqrt{\sum_{k=1}^K \ex_{\D_k}\left[ \tilde{\Lambda}_k^{\muci}(\nuts(\muci),\te)  \right]  } \times  \sqrt{\I^{\mucii,\nucii}(\te; \D_{K+1}) }+ 2K\epsilon, \label{eq:50}
\end{align}
where $\nuts(\muci)$ in~\eqref{eq:50} is the TS policy of the min-player with respect to the max-player's policy $\muci$; specifically, it first samples a realization of the environment $\E = \e$, and then chooses the best response to $\muci$ under environment $\e$. Using the same proof technique as for Lemma~\ref{lemma:ts}, we have $$\tilde{\Lambda}_k^{\muci}(\nuts(\muci),\te) \le 4H^3SAB$$ 
for any distribution of $\E$.  Thus,  the second term in~\eqref{eq:jii} satisfies
\begin{align}
    \ex\left(\sum_{k=1}^K V_{1,\muci, \nuci}^{\E} (s_1) - V_{1,\muci, \dagger}^{\E} (s_1) \right) \le  \sqrt{4K H^3SAB \cdot \I^{\mucii,\nucii}(\te; \D_{K+1})} + 2K\epsilon.
\end{align}
Combining the upper bounds for the first term and the second term in~\eqref{eq:jii} together, we eventually obtain that 
\begin{align}
\mathsf{BR}_K (\mucii) \le  4 \sqrt{KH^3 SAB \cdot \I^{\mucii,\nucii}(\te; \D_{K+1}) } + 4K\epsilon. 
\end{align}
This completes the proof of Theorem~\ref{thm:compressed}.

\section{Proof of Lemma~\ref{lemma:eq}} \label{sec:proof_eq}
By recalling the definition of $\trh$ in Section~\ref{sec:regularized}, we have
\begin{align}
&\ex_{\mu,\nu}^{\me} \left[\sum_{h=1}^H \trh(s_h,a_h,b_h) \right] \\
&= \ex_{\mu,\nu}^{\me} \left[\sum_{h=1}^H r_h(s_h,a_h,b_h) + \lambda \ex_k\left[ \DD_{\text{KL}}(P_h^{\E}(\cdot|s_h,a_h,b_h) \Vert P_h^{\me}(\cdot|s_h,a_h,b_h)) \right] \right] \\
&= \sum_{h=1}^H \sum_{s,a,b} d_{h,\mu,\nu}^{\me}(s,a,b) \Big[r_h(s,a,b) + \lambda \ex_k\left[ \DD_{\text{KL}}(P_h^{\E}(\cdot|s,a,b) \Vert P_h^{\me}(\cdot|s,a,b)) \right] \Big]\\
&= \sum_{h=1}^H \sum_{s,a,b} \ex_k[d_{h,\mu,\nu}^{\E}(s,a,b)] \cdot r_h(s,a,b) \notag \\
&\qquad \qquad \qquad + \lambda \ex_k\left[ \sum_{h=1}^H \sum_{s,a,b} d_{h,\mu,\nu}^{\me}(s,a,b) \DD_{\text{KL}}(P_h^{\E}(\cdot|s,a,b) \Vert P_h^{\me}(\cdot|s,a,b)) \right]  \label{eq:heihei1}\\
&= \ex_k\left[ V_{1,\mu,\nu}^{\E}(s_1) \right] + \lambda \I_k^{\mu,\nu}(\E; \T^k_{H+1}) \label{eq:heihei2}
\end{align}
where~\eqref{eq:heihei1} uses the fact that $d_{h,\mu,\nu}^{\me}(s,a,b) = \ex_k[d_{h,\mu,\nu}^{\E}(s,a,b)]$ and~\eqref{eq:heihei2} follows from Lemma~\ref{lemma:mikl}.

\section{Proof of Lemma~\ref{lemma:ts}} \label{sec:proof_ts}

Recall that 
\begin{align}
\Gamma_k(\muts,\nu,\E) = \frac{\left(\ex_k\left[V_{1,\mu^*(\E),\nu}^{\E}(s_1) - V_{1,\muts,\nu}^{\E}(s_1) \right] \right)^2}{\I_k^{\muts,\nu}(\E; \T^{k}_{H+1})}. \label{eq:meng}
\end{align}
Let us focus on the numerator of the RHS of~\eqref{eq:meng} from now on. It is worth noting that the term $\ex_k[V_{1,\mu^*(\E),\nu}^{\E}(s_1) - V_{1,\muts,\nu}^{\E}(s_1) ]$ may not always be positive; thus, to provide an upper bound on the numerator of the RHS of~\eqref{eq:meng}, one need to upper-bound the absolute value of $\ex_k [V_{1,\mu^*(\E),\nu}^{\E}(s_1) - V_{1,\muts,\nu}^{\E}(s_1) ]$. 

Given the past trajectory $\D_k$, we define an environment $\met = (H,\s,\A,\B,\{P_h^{\met}\}_{h=1}^H, \{r_h\}_{h=1}^H)$, where the transition kernel
$$P_h^{\met}(\cdot | s,a,b) \triangleq \ex_{\E \sim \PP(\cdot|\D_k)}[P_h^{\E}(\cdot | s,a,b)].$$ Note that $\met$ is similar to the mean environments $\me$ and $\me'$ defined in Section~\ref{sec:regularized}, but differs from them in terms of the reward functions. Based on $\met$, we now decompose the absolute value of $\ex_k [V_{1,\mu^*(\E),\nu}^{\E}(s_1) - V_{1,\muts,\nu}^{\E}(s_1) ]$ as follows: 
\begin{align}
 & \left|\ex_k\left[V_{1,\mu^*(\E),\nu}^{\E}(s_1) - V_{1,\muts,\nu}^{\E}(s_1) \right] \right| \notag \\
 &\le \left| \ex_k\left[V_{1,\mu^*(\E),\nu}^{\E}(s_1) - V_{1,\muts,\nu}^{\met}(s_1) \right] \right| + \left| \ex_k\left[V_{1,\muts,\nu}^{\met}(s_1) - V_{1,\muts,\nu}^{\E}(s_1) \right] \right|.   \label{eq:dec}
\end{align}
For the first term in~\eqref{eq:dec}, we have 
\begin{align}
&\left|\ex_k\left[V_{1,\mu^*(\E),\nu}^{\E}(s_1) - V_{1,\muts,\nu}^{\met}(s_1) \right] \right| \\
&= \left| \ex_k\left[V_{1,\mu^*(\E),\nu}^{\E}(s_1)\right] - V_{1,\muts,\nu}^{\met}(s_1) \right|  \label{eq:hu1}  \\
&= \left| \ex_k\left[V_{1,\mu^*(\E),\nu}^{\E}(s_1)\right] - \ex_{\E \sim \PP(\cdot|\D_k)} \left[ V_{1,\mu^{*}(\E),\nu}^{\met}(s_1) \right] \right| \label{eq:hu2} \\
&= \left|\ex_k\left[V^{\E}_{1,\mu^*(\E),\nu}(s_1) -  V_{1,\mu^{*}(\E),\nu}^{\met}(s_1) \right] \right|, \label{eq:hu4}
\end{align}
where Eqn.~\eqref{eq:hu1} holds since the value of $V_{1,\muts,\nu}^{\met}(s_1)$  is independent of the environment $\E$, Eqn.~\eqref{eq:hu2} is due to the rule of the Thompson sampling policy $\muts$, Eqn.~\eqref{eq:hu4} follows from the fact that the notations  $\ex_k(\cdot)$ and $\ex_{\E \sim \PP(\cdot|\D_k)}(\cdot)$ are equivalent.   Below, we introduce a lemma that characterizes the difference of performance of the same policy on two different environments.
\begin{lemma} \label{lemma:diff}
    Consider any two environments $\e$ and $\e'$ with different transition kernels $P_{h}^e$ and $P_{h}^{e'}$ but the same reward functions $\{r_h\}_{h=1}^H$. For any fixed policy $(\mu,\nu)$, we have 
\begin{align}
&V_{1,\mu,\nu}^e(s_1) - V_{1,\mu,\nu}^{e'}(s_1) \\
&= \sum_{h=1}^H \ex_{\mu,\nu}^{e'}\left[ \ex_{s' \sim P_h^e(\cdot|s_h,a_h,b_h)} [V_{h+1,\mu,\nu}^e(s')] - \ex_{s' \sim P_h^{e'}(\cdot|s_h,a_h,b_h)} [V_{h+1,\mu,\nu}^e(s') \right]    \\
&= \sum_{h=1}^H \ex_{\mu,\nu}^{e}\left[ \ex_{s' \sim P_h^e(\cdot|s_h,a_h,b_h)} [V_{h+1,\mu,\nu}^{e'}(s')] - \ex_{s' \sim P_h^{e'}(\cdot|s_h,a_h,b_h)} [V_{h+1,\mu,\nu}^{e'}(s') \right].
\end{align}
\end{lemma}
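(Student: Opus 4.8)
The plan is to prove the first identity by a one-step-at-a-time telescoping (``simulation lemma'') argument, and then to obtain the second identity for free by interchanging the roles of $e$ and $e'$ and negating.

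First I would peel off the first step. Since the initial state $s_1$ is fixed and $e,e'$ share the same reward functions, the Bellman equations (which hold for $e$ and $e'$ alike) let the common first-step reward cancel, giving
\begin{align*}
V_{1,\mu,\nu}^e(s_1)-V_{1,\mu,\nu}^{e'}(s_1)
= \ex_{a_1\sim\mu_1(\cdot|s_1),\, b_1\sim\nu_1(\cdot|s_1)}\Big[&\ex_{s'\sim P_1^e(\cdot|s_1,a_1,b_1)}V_{2,\mu,\nu}^e(s')\\
&-\ex_{s'\sim P_1^{e'}(\cdot|s_1,a_1,b_1)}V_{2,\mu,\nu}^{e'}(s')\Big].
\end{align*}
Adding and subtracting $\ex_{s'\sim P_1^{e'}(\cdot|s_1,a_1,b_1)}V_{2,\mu,\nu}^e(s')$ inside the bracket splits the right-hand side into the $h=1$ term of the claimed sum plus a remainder of the form $\ex_{\mu,\nu}^{e'}[V_{2,\mu,\nu}^e(s_2)-V_{2,\mu,\nu}^{e'}(s_2)]$, in which $s_2$ is now distributed according to the $e'$-trajectory.

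The key step is an induction on the step index. The remainder has the same shape as the original difference, except the ``base step'' has advanced and the initial state is averaged under $\ex_{\mu,\nu}^{e'}$; the cleanest formulation is to prove, by downward induction on $h$ (the case $h=H+1$ being trivial since $V_{H+1,\mu,\nu}^e\equiv V_{H+1,\mu,\nu}^{e'}\equiv 0$), that
\begin{align*}
\ex_{\mu,\nu}^{e'}\big[V_{h,\mu,\nu}^e(s_h)-V_{h,\mu,\nu}^{e'}(s_h)\big]
&=\sum_{h'=h}^H \ex_{\mu,\nu}^{e'}\Big[\ex_{s'\sim P_{h'}^e(\cdot|s_{h'},a_{h'},b_{h'})}V_{h'+1,\mu,\nu}^e(s')\\
&\qquad\qquad -\ex_{s'\sim P_{h'}^{e'}(\cdot|s_{h'},a_{h'},b_{h'})}V_{h'+1,\mu,\nu}^e(s')\Big],
\end{align*}
using the same add--subtract split at each stage. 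Taking $h=1$ yields the first identity. For the second identity I would apply the first one with $e$ and $e'$ interchanged and multiply through by $-1$; this turns $\ex_{\mu,\nu}^{e'}$ into $\ex_{\mu,\nu}^{e}$ and replaces $V^e$ by $V^{e'}$ inside the brackets, reproducing the stated expression verbatim.

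I do not expect a genuine obstacle. The only thing requiring care is the bookkeeping of the nested expectations: in each remainder term the step-$h$ state, action, and opponent action are sampled along the $e'$-trajectory (for the first identity) or the $e$-trajectory (for the second), and this is exactly what the superscripts on $\ex_{\mu,\nu}^{e'}$ and $\ex_{\mu,\nu}^{e}$ in the lemma statement encode. Phrasing the induction hypothesis as the averaged identity displayed above makes this transparent and removes any ambiguity.
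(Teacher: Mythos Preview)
Your proposal is correct: the add--subtract telescoping (simulation-lemma) argument is exactly the standard proof, and the symmetry trick for the second identity is valid. The paper itself does not spell out a proof, deferring instead to \cite[Lemma~D.3]{hao2022regret} with the obvious two-player modification; your write-up is precisely what that deferred argument would unfold to.
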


\begin{proof}[Proof of Lemma~\ref{lemma:diff}]
The proof is adapted from~\cite[Lemma~D.3]{hao2022regret} with appropriate modifications for two-player zero-sum MGs.   
\end{proof} 

For notational convenience, we define
\begin{align}
    \Delta(\E,\met,s,a,b) \triangleq \ex_{s' \sim P_h^{\E}(\cdot|s_h,a_h,b_h)} [V_{h+1,\mu^*(\E),\nu}^{\E}(s')] - \ex_{s' \sim P_h^{\met}(\cdot|s_h,a_h,b_h)} [V_{h+1,\mu^*(\E),\nu}^{\E}(s')]
\end{align}
and define the \emph{occupancy measure} with respect to any policy $(\mu,\nu)$ and environment $\e$  as
\begin{align}
d_{h,\mu,\nu}^{\e}(s,a,b) \triangleq \PP_{\mu,\nu}^{\e}(s_h^k = s, a_h^k = a, b_h^k = b).
\end{align}

Applying Lemma~\ref{lemma:diff} to Eqn.~\eqref{eq:hu4}, we have
\begin{align}
&\left| \ex_k\left[V^{\E}_{1,\mu^*(\E),\nu}(s_1) -  V_{1,\mu^{*}(\E),\nu}^{\met}(s_1) \right] \right| \\ 
&=\left| \ex_k \left[ \sum_{h=1}^H \ex_{\mu^{*}(\E),\nu}^{\met} \left[\Delta(\E,\met,s_h^k,a_h^k,b_h^k) \right]\right] \right| \\
&\le \sum_{h=1}^H \ex_k \left[ \sum_{s,a,b}  d_{h,\mu^{*}(\E),\nu}^{\met}(s,a,b) \cdot \big|\Delta(\E,\met,s,a,b)\big| \right]. \label{eq:xi}
\end{align}
A key observation is that the expected occupancy measure $\ex_k[d_{h,\mu^{*}(\E),\nu}^{\met}(s,a,b)]$  with respect to the random environment $\E$ is equal to the occupancy measure of the Thompson sampling policy $d_{h,\muts,\nu}^{\met}(s,a,b)$, due to the definition of $\muts$. Thus, one can further express Eqn.~\eqref{eq:xi} as
\begin{align}
&\sum_{h=1}^H \ex_k \left[\sum_{s,a,b} d_{h,\mu^{*}(\E),\nu}^{\met}(s,a,b) \cdot \big|\Delta(\E,\met,s,a,b) \big| \right] \\
&= \sum_{h=1}^H \ex_k \left[\sum_{s,a,b} \frac{d_{h,\mu^{*}(\E),\nu}^{\met}(s,a,b)}{\sqrt{\ex_k[d_{h,\mu^{*}(\E),\nu}^{\met}(s,a,b)]}} \cdot \sqrt{d_{h,\muts,\nu}^{\met}(s,a,b)} \cdot \big|\Delta(\E,\met,s,a,b)\big| \right] \\
&\le \sqrt{\sum_{h=1}^H \ex_k\left[\sum_{s,a,b} \frac{\left[d_{h,\mu^{*}(\E),\nu}^{\met}(s,a,b)\right]^2}{\ex_k[d_{h,\mu^{*}(\E),\nu}^{\met}(s,a,b)]} \right] } \cdot \sqrt{\sum_{h=1}^H \ex_k\left[\sum_{s,a,b} d_{h,\muts,\nu}^{\met}(s,a,b) \Delta(\E,\met,s,a,b)^2 \right]} \label{eq:ha}\\
&\le \sqrt{SABH} \cdot \sqrt{\sum_{h=1}^H \ex_k\left[\sum_{s,a,b} d_{h,\muts,\nu}^{\met}(s,a,b) \Delta(\E,\met,s,a,b)^2 \right]}, \label{eq:xi2}
\end{align}
where~\eqref{eq:ha} follows from the Cauchy-Schwarz inequality, and~\eqref{eq:xi2} is due to the linearity of expectation as well as the fact that the occupancy measure always satisfies  $$[d_{h,\mu^{*}(\E),\nu}^{\met}(s,a,b)]^2 \le d_{h,\mu^{*}(\E),\nu}^{\met}(s,a,b).$$  For the second part of~\eqref{eq:xi2}, we have  
\begin{align}
&\sum_{h=1}^H \ex_k\left[\sum_{s,a,b} d_{h,\muts,\nu}^{\met}(s,a,b) \Delta(\E,\met,s,a,b)^2 \right] \label{eq:old1} \\
&= \sum_{h=1}^H \ex_k\left[\sum_{s,a,b} d_{h,\muts,\nu}^{\met}(s,a,b) \left[\sum_{s'}\left(P_h^{\E}(s'|s,a,b) - P_h^{\met}(s'|s,a,b)\right) \cdot V_{h+1,\mu^*(\E),\nu}^{\E}(s')    \right]^2  \right] \\
&\le H^2 \sum_{h=1}^H \ex_k\left[\sum_{s,a,b} d_{h,\muts,\nu}^{\met}(s,a,b) \left[\sum_{s'}\left|P_h^{\E}(s'|s,a,b) - P_h^{\met}(s'|s,a,b)\right|    \right]^2  \right] \label{eq:tan1} \\
&= H^2 \sum_{h=1}^H \ex_k\left[\sum_{s,a,b} d_{h,\muts,\nu}^{\met}(s,a,b) \left[ \DD_{\text{TV}}\left(P_h^{\E}(\cdot|s,a,b)\Vert P_h^{\met}(\cdot|s,a,b)\right)    \right]^2  \right] \label{eq:tan2} \\
&\le H^2 \sum_{h=1}^H \ex_k\left[\sum_{s,a,b} d_{h,\muts,\nu}^{\met}(s,a,b) \frac{1}{2} \DD_{\text{KL}}\left(P_h^{\E}(\cdot|s,a,b)\Vert P_h^{\met}(\cdot|s,a,b)\right) \right], \label{eq:tan3}
\end{align}
where Eqn.~\eqref{eq:tan1} holds since the value function is always bounded from above by $H$, Eqn.~\eqref{eq:tan2} follows from the definition of the \emph{total variation distance} $\DD_{\text{TV}}(\cdot \Vert \cdot)$ between two distributions, and in Eqn.~\eqref{eq:tan3} we apply the Pinsker's inequality to relate the total variation distance to the KL-divergence between two distributions. Finally, we introduce a lemma (adapted from~\cite[Lemma~A.1]{hao2022regret}) showing that the term in~\eqref{eq:tan3} can be related to the mutual information between the environment $\E$ and the trajectory of episode $k$.
\begin{lemma} \label{lemma:mikl}
    For any policy $(\mu,\nu)$, we have 
    \begin{align}
    \I^{\mu,\nu}_k(\E; \T^{k}_{H+1}) = \sum_{h=1}^H \ex_k\left[\sum_{s,a,b} d_{h,\mu,\nu}^{\met}(s,a,b) \cdot \DD_{\text{KL}}\left(P_h^{\E}(\cdot|s,a,b) \Vert  P_h^{\met}(\cdot|s,a,b)\right) \right].    
    \end{align}
\end{lemma}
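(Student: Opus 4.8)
The plan is to unroll the conditional mutual information $\I^{\mu,\nu}_k(\E;\T^k_{H+1})$ via the chain rule, one step of the episode at a time, and then exploit the \emph{product} structure of the prior, $\rho=\prod_{h=1}^H\rho_h$, to identify the one-step posterior-predictive transition with $P_h^{\met}$. The first thing I would record is that since $\rho$ is a product and every transition recorded in $\D_k$ at step $h$ is governed solely by $P_h$, the posterior $\PP(\cdot|\D_k)$ still factorizes as $\prod_h\PP(P_h\in\cdot|\D_k)$, the $h$-th factor depending only on the step-$h$ transitions in $\D_k$. A consequence I would use twice: for any policy $(\mu,\nu)$, the generative law ``draw $\E\sim\PP(\cdot|\D_k)$, then play $(\mu,\nu)$'' induces on $(\shk,\ahk,\bhk)$ exactly the occupancy measure $d_{h,\mu,\nu}^{\met}$ of the mean environment, because at each step the one-step predictive transition is $\ex_k[P_{h'}^{\E}(\cdot|s,a,b)]=P_{h'}^{\met}(\cdot|s,a,b)$ and, the $P_{h'}$'s being conditionally independent given $\D_k$, the expectation of a product of such single-step probabilities factorizes into the product of the $P_{h'}^{\met}$'s --- which is precisely the forward recursion defining $d_{h,\mu,\nu}^{\met}$.

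Next I would apply the chain rule of mutual information along $\T^k_{H+1}$. The rewards $\rhk=r_h(\shk,\ahk,\bhk)$ are deterministic functions of quantities already in the conditioning and contribute nothing, and the actions $(\ahk,\bhk)$ are drawn from the fixed maps $\mu_h(\cdot|\shk),\nu_h(\cdot|\shk)$ and are therefore conditionally independent of $\E$ given $\shk$; after discarding these terms,
\begin{align*}
\I^{\mu,\nu}_k(\E;\T^k_{H+1}) = \sum_{h=1}^H \I_k\!\left(\E;\shhk \,\middle|\, \T^k_h,\ahk,\bhk\right).
\end{align*}
I would then rewrite each summand using the identity $\I_k(X;Y|Z)=\ex_k[\DD_{\text{KL}}(\PP(Y\in\cdot|X,Z,\D_k)\Vert\PP(Y\in\cdot|Z,\D_k))]$ (the expectation over $X,Z$ conditioned on $\D_k$), with $X=\E$, $Y=\shhk$, $Z=(\T^k_h,\ahk,\bhk)$; the first conditional is $\PP(\shhk\in\cdot|\E,\T^k_h,\ahk,\bhk)=P_h^{\E}(\cdot|\shk,\ahk,\bhk)$ by the Markov property of the MG.

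The crux is to identify the second conditional, the posterior-predictive transition $\overline{P}_h\triangleq\PP(\shhk\in\cdot|\T^k_h,\ahk,\bhk)$, with $P_h^{\met}(\cdot|\shk,\ahk,\bhk)$. Here $\overline{P}_h=\ex[P_h^{\E}(\cdot|\shk,\ahk,\bhk)\,|\,\T^k_h,\ahk,\bhk]$, and the intra-episode history $\T^k_h$ only carries information about transitions at steps $1,\dots,h-1$, i.e.\ about $P_1,\dots,P_{h-1}$; by the factorization of the posterior those are independent of $P_h$ given $\D_k$, so conditioning additionally on $\T^k_h$ does not sharpen the posterior of $P_h$, and $\overline{P}_h=\ex_{\E\sim\PP(\cdot|\D_k)}[P_h^{\E}(\cdot|\shk,\ahk,\bhk)]=P_h^{\met}(\cdot|\shk,\ahk,\bhk)$. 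Substituting this, marginalizing $(\shk,\ahk,\bhk)$ against the law $d_{h,\mu,\nu}^{\met}$ from the first paragraph, and pulling the $\D_k$-measurable weights $d_{h,\mu,\nu}^{\met}(s,a,b)$ outside $\ex_k[\cdot]$, I get
\begin{align*}
\I_k\!\left(\E;\shhk \,\middle|\, \T^k_h,\ahk,\bhk\right) = \ex_k\!\left[\sum_{s,a,b} d_{h,\mu,\nu}^{\met}(s,a,b)\,\DD_{\text{KL}}\!\left(P_h^{\E}(\cdot|s,a,b)\,\Vert\,P_h^{\met}(\cdot|s,a,b)\right)\right],
\end{align*}
and summing over $h\in[H]$ yields the statement.

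The main obstacle is the identification $\overline{P}_h=P_h^{\met}$, together with the matching fact that the visitation law under the generative process is exactly $d_{h,\mu,\nu}^{\met}$: neither holds for an arbitrary prior, and both hinge on the product structure $\rho=\prod_h\rho_h$, which makes $P_h$ independent of all earlier-step data conditioned on $\D_k$. Once that is in place, the remainder is routine chain-rule bookkeeping plus Fubini.
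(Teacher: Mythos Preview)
Your proposal is correct and follows essentially the same route as the paper: decompose $\I^{\mu,\nu}_k(\E;\T^k_{H+1})$ by the chain rule, discard the reward and action terms, use the product structure of the prior to identify the posterior-predictive transition with $P_h^{\met}$, and recognize the marginal law of $(\shk,\ahk,\bhk)$ as the occupancy measure $d_{h,\mu,\nu}^{\met}$ in the mean environment. The paper indexes the state increments as $\I^{\mu,\nu}_k(\E;s_h^k\mid\T^k_{h-1})$ for $h=2,\dots,H+1$ rather than your $\I_k(\E;\shhk\mid\T^k_h,\ahk,\bhk)$ for $h=1,\dots,H$, but this is just a shift in indexing; the argument is the same.
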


\begin{proof}[Proof of Lemma~\ref{lemma:mikl}]

Recall that $\T^{k}_{H+1}$ is the trajectory at episode $k$. We further define $$\T^{k}_{h} \triangleq \{s_1^k, a_1^k, b_1^k, r_1^k, \ldots, s_h^k, a_h^k, b_h^k, r_h^k\}$$ as the \emph{trajectory} at episode $k$ from steps $1$ to $h$  (for any $h \in [H]$). 
By the chain rule of mutual information, we have 
\begin{align}
&\I^{\mu,\nu}_k(\E; \T^{k}_{H+1}) \\
&= \sum_{h=1}^H \I^{\mu,\nu}_k(\E; (s_h^k,a_h^k,b_h^k,r_h^k)|\T^{k}_{h-1}) + \I^{\mu,\nu}_k(\E; s_{H+1}^k|\T^{k}_{H}) \\
&= \sum_{h=1}^{H+1} \I^{\mu,\nu}_k(\E; s_h^k|\T^{k}_{h-1}) + \sum_{h=1}^H \left[\I^{\mu,\nu}_k(\E; (a_h^k,b_h^k)|\T^{k}_{h-1},s_h^k) + \I^{\mu,\nu}_k(\E; r_h^k|\T^{k}_{h-1},s_h^k,a_h^k,b_h^k) \right]. \label{eq:56}
\end{align}
Note that $\I^{\mu,\nu}_k(\E; r_h^k|\T^{k}_{h-1},s_h^k,a_h^k,b_h^k)  = 0$ for all $h \in [H]$
since $r_h^k$ is a deterministic function of $(s_h^k,a_h^k,b_h^k)$. Moreover, one can also show that  $\I^{\mu,\nu}_k(\E; (a_h^k,b_h^k)|\T^{k}_{h-1},s_h^k) = 0$ for all $h \in [H]$,  since
\begin{align}
\I^{\mu,\nu}_k(\E; (a_h^k,b_h^k)|\T^{k}_{h-1},s_h^k) = \ex_k \left[\DD_{\text{KL}}\left(P(a_h^k,b_h^k|\E, \T^{k}_{h-1},s_h^k) \Vert P(a_h^k,b_h^k|\T^{k}_{h-1},s_h^k) \right) \right],    
\end{align}
and the fact that the actions $(a_h^k,b_h^k)$ only depend on the policy $(\mu,\nu)$ and the current state $s_h^k$ (which implies the KL-divergence term equals zero). Then, it suffices to focus on the first term of~\eqref{eq:56}. For each step $2 \le h \le H+1$, we have
\begin{align}
&\I^{\mu,\nu}_k(\E; s_h^k|\T^{k}_{h-1}) \\
&= \ex_{\T^{k}_{h-1}}\left[ \ex_{\E \sim P(\cdot|\D_k,\T^{k}_{h-1})} \left[ \DD_{\text{KL}}\left(P(s_h^k|\E,\T_{h-1}^k,\D_k) \Vert P(s_h^k|\T_{h-1}^k,\D_k)  \right) \right] \right] \\
&=\ex_{\T^{k}_{h-1}}\left[ \ex_{\E \sim \PP(\cdot|\D_k)} \left[ \DD_{\text{KL}}\left(P(s_h^k|\E,\T_{h-1}^k,\D_k) \Vert P(s_h^k|\T_{h-1}^k,\D_k)  \right) \right] \right] \label{eq:se}\\
&=\ex_{\T^{k}_{h-1}}\left[ \ex_{\E \sim \PP(\cdot|\D_k)} \left[ \DD_{\text{KL}}\left(P_{h-1}^{\E}(\cdot|s_{h-1}^k,a_{h-1}^k,b_{h-1}^k) \Vert P_{h-1}^{\met}(\cdot|s_{h-1}^k,a_{h-1}^k,b_{h-1}^k)  \right) \right] \right] \label{eq:se2} \\
&= \sum_{s,a,b} \PP_{\mu,\nu}^{\met}(s_{h-1}^k=s,a_{h-1}^k=a,b_{h-1}^k=b)\cdot  \ex_{k} \left[ \DD_{\text{KL}}\left(P_{h-1}^{\E}(\cdot|s,a,b) \Vert P_{h-1}^{\met}(\cdot|s,a,b)  \right) \right]  \label{eq:se3}\\
&= \ex_{k}\left[ \sum_{s,a,b} d_{h-1,\mu,\nu}^{\met}(s,a,b) \cdot \DD_{\text{KL}}\left(P_{h-1}^{\E}(\cdot|s,a,b) \Vert P_{h-1}^{\met}(\cdot|s,a,b)  \right) \right],
\end{align}
where~\eqref{eq:se} is due to the fact that the prior distribution on $\E$ is a product distribution over the $H$ steps, so that the trajectory $\T_{h-1}^k$  does not affect the posterior distribution of $\E$ with respect to the transition kernel $P_{h-1}$ appeared in the KL-divergence term. Eqn.~\eqref{eq:se2} holds since (i) the distribution of $s_h$ only depends on $(s_{h-1}^k,a_{h-1}^k,b_{h-1}^k)$ and the environment; and (ii) the definition of the mean environment $\met$ ensures that
\begin{align}
P(s_h^k|\T_{h-1}^k,\D_k) &= P(s_h^k|s_{h-1}^k,a_{h-1}^k,b_{h-1}^k,\D_k) \notag \\
&= \ex_{\E \sim \PP(\cdot|\D_k)} \left[ P(s_h^k|s_{h-1}^k,a_{h-1}^k,b_{h-1}^k,\D_k,\E ) \right]    \notag \\
&= P_{h-1}^{\met}(\cdot|s_{h-1}^k,a_{h-1}^k,b_{h-1}^k).
\end{align}
Eqn.~\eqref{eq:se3} is due to the fact that
\begin{align}
    \mathbb{E}_k \big(\mathbb{P}_{\mu,\nu}^{\E}(s_{h-1}^k =s, a_{h-1}^k =a, b_{h-1}^k =b ) \big) =  \mathbb{P}_{\mu,\nu}^{\met}(s_{h-1}^k =s, a_{h-1}^k =a, b_{h-1}^k =b ).
\end{align}
Also note that $\I^{\mu,\nu}_k(\E; s_h^k|\T^{k}_{h-1}) = 0$ when $h=1$, as the state $s_1^k = s_1$ is deterministic. Therefore, we have 
\begin{align}
    \I^{\mu,\nu}_k(\E; \T^{k}_{H+1}) = \sum_{h=1}^H \ex_{k}\left[ \sum_{s,a,b} d_{h,\mu,\nu}^{\met}(s,a,b) \cdot \DD_{\text{KL}}\left(P_{h}^{\E}(\cdot|s,a,b) \Vert P_{h}^{\met}(\cdot|s,a,b)  \right) \right].
\end{align}
This completes the proof of Lemma~\ref{lemma:mikl}.

\end{proof}

Applying Lemma~\ref{lemma:mikl}, we have
\begin{align}
\left| \ex_k\left[V_{1,\mu^*(\E),\nu}^{\E}(s_1) - V_{1,\muts,\nu}^{\met}(s_1) \right] \right| &\le \sqrt{SABH} \cdot \sqrt{H^2 \I_k^{\muts,\nu}(\E; \T^{k}_{H})} \\
    &= \sqrt{SABH^3 \I_k^{\muts,\nu}(\E; \T^{k}_{H})}.
\end{align}

For the second term in the RHS of~\eqref{eq:dec}, we have
\begin{align}
&\left|\ex_k\left[V_{1,\muts,\nu}^{\met}(s_1) - V_{1,\muts,\nu}^{\E}(s_1) \right]\right| \label{eq:ni1} \\
&= \left| \ex_k \left[ \sum_{h=1}^H \ex_{\muts,\nu}^{\met} \left[ \sum_{s'}P_h^{\met}(s'|s_h^{k},a_h^{k},b_h^{k}) V^{\E}_{h+1,\muts,\nu}(s') - \sum_{s'}P_h^{\E}(s'|s_h^{k},a_h^{k},b_h^{k}) V^{\E}_{h+1,\muts,\nu}(s') \right] \right] \right| \label{eq:ni2}\\
&\le H \sum_{h=1}^H \ex_k \left[  \sum_{s,a,b} d^{\met}_{h,\muts,\nu}(s,a,b) \left[ \sum_{s'} \left| P_h^{\met}(s'|s,a,b) - P_h^{\E}(s'|s,a,b) \right| \right] \right] \label{eq:ni4} \\
&\le H \sum_{h=1}^H \ex_k \left[  \sum_{s,a,b} d^{\met}_{h,\muts,\nu}(s,a,b) \sqrt{\frac{1}{2}\DD_{\text{KL}}\left(P_h^{\E}(\cdot|s,a,b) \Vert P_h^{\met}(\cdot|s,a,b)\right)  }  \right] \label{eq:ni5}\\
&\le H^2 \sqrt{\frac{1}{2H} \sum_{h=1}^H \ex_k \left[ \sum_{s,a,b} d^{\met}_{h,\muts,\nu}(s,a,b) \cdot \DD_{\text{KL}}\left(P_h^{\E}(\cdot|s,a,b)\Vert P_h^{\met}(\cdot|s,a,b)\right) \right]  } \label{eq:ni6} \\
&= \sqrt{\frac{H^3}{2} \I_k^{\muts,\nu}(\E; \T^{k}_{H+1}) }, \label{eq:ni7}
\end{align}
where~\eqref{eq:ni2} is due to Lemma~\ref{lemma:diff},~\eqref{eq:ni4} is due to the fact that the value function is always bounded from above by $H$,~\eqref{eq:ni5} follows from the Pinsker's inequality,~\eqref{eq:ni6} follows from Jensen's inequality, and~\eqref{eq:ni7} follows from Lemma~\ref{lemma:mikl}.

Therefore, we obtain
\begin{align}
    \Gamma_k(\muts,\nu,\E) &= \frac{\left(\ex_k\left[V_{1,\mu^*(\E),\nu}^{\E}(s_1) - V_{1,\muts,\nu}^{\E}(s_1) \right] \right)^2}{\I_k^{\muts,\nu}(\E; \T^{k}_{H+1})}\\
    &\le \frac{\left( \left| \ex_k\left[V_1^{\E,*}(s_1) - V_{1,\muts,\nu}^{\met}(s_1) \right] \right| + \left|\ex_k\left[V_{1,\muts,\nu}^{\met}(s_1) - V_{1,\muts,\nu}^{\E}(s_1) \right] \right| \right)^2}{\I_k^{\muts,\nu}(\E; \T^{k}_{H+1})} \\
    &\le \frac{\left(\sqrt{SABH^3 \I_k^{\muts,\nu}(\E; \T^{k}_{H+1})} + \sqrt{\frac{H^3}{2} \I_k^{\muts,\nu}(\E; \T^{k}_{H+1}) } \right)^2}{\I_k^{\muts,\nu}(\E; \T^{k}_{H+1})} \\
    &= 4SABH^3.
\end{align}
This completes the proof of Lemma~\ref{lemma:ts}.

\section{Appendix for General-sum MGs: Proofs of Theorem~\ref{thm:general}} \label{appendix:general}

In the following, we focus on the NE setting where the policy is $\pids^{\text{NE}} = \{\pidskne \}_{k\in[K]}$. We remark that the CCE setting can be analyzed in the same manner by simply replacing product policies to joint policies.

We first present a lemma (analogous to Lemma~\ref{lemma:eq} for zero-sum MGs) that is crucial for the subsequent proofs. The proof of is omitted since it is similar to that of Lemma~\ref{lemma:eq}. 
\begin{lemma} \label{lemma:eq2}
    For any joint policy $\pi$, we have  
$$V_{\pi}^{(i),\meh}(s_1) = \ex_k\left[ V_{\pi}^{(i),\E}(s_1) \right] + \lambda \I_k^{\pi}(\E; \T^k_{H+1}).$$
\end{lemma}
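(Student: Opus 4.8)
The plan is to mirror the proof of Lemma~\ref{lemma:eq} almost verbatim, with the two-player occupancy measure replaced by the general-sum occupancy measure $d_{h,\pi}^{\meh}(s,a) \triangleq \PP_{\pi}^{\meh}(s_h^k = s,\, a_h^k = a)$ over state/joint-action pairs. First I would expand the value in the mean environment by unrolling rewards along the trajectory:
\begin{align*}
V_{1,\pi}^{(i),\meh}(s_1) = \ex_{\pi}^{\meh}\Big[\textstyle\sum_{h=1}^H r_h^{(i),\meh}(s_h,a_h)\Big] = \sum_{h=1}^H \sum_{s,a} d_{h,\pi}^{\meh}(s,a)\, r_h^{(i),\meh}(s,a),
\end{align*}
and then substitute $r_h^{(i),\meh}(s,a) = r_h^{(i)}(s,a) + \lambda\, \ex_k[\DD_{\text{KL}}(P_h^{\E}(\cdot|s,a) \,\Vert\, P_h^{\meh}(\cdot|s,a))]$, splitting the double sum into a ``reward part'' and a ``KL part''.

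For the reward part, the key identity is $d_{h,\pi}^{\meh}(s,a) = \ex_k[d_{h,\pi}^{\E}(s,a)]$. This holds because the probability of any fixed trajectory prefix under $\pi$ is a product of policy probabilities (independent of $\E$) and step-wise kernels $P_1^{\E},\dots,P_{h-1}^{\E}$, and the posterior $\PP(\cdot|\D_k)$ is a product distribution over the $H$ steps — inherited from the product prior $\rho=\prod_h \rho_h$ — so the posterior expectation factorizes and equals the product of the $P_{h'}^{\meh}$'s; this is exactly the step used at~\eqref{eq:heihei1}. Hence $\sum_h\sum_{s,a} d_{h,\pi}^{\meh}(s,a)\, r_h^{(i)}(s,a) = \ex_k\big[\sum_h\sum_{s,a} d_{h,\pi}^{\E}(s,a)\, r_h^{(i)}(s,a)\big] = \ex_k[V_{1,\pi}^{(i),\E}(s_1)]$. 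For the KL part, since $d_{h,\pi}^{\meh}(s,a)$ is deterministic given $\D_k$ I can pull the inner $\ex_k$ outside, obtaining $\lambda\,\ex_k\big[\sum_h \sum_{s,a} d_{h,\pi}^{\meh}(s,a)\, \DD_{\text{KL}}(P_h^{\E}(\cdot|s,a)\,\Vert\, P_h^{\meh}(\cdot|s,a))\big]$, which I then identify with $\lambda\, \I_k^{\pi}(\E;\T^k_{H+1})$ via the general-sum version of Lemma~\ref{lemma:mikl}. Adding the two parts yields the claim.

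The one place that needs genuinely new bookkeeping relative to the zero-sum proof is establishing the general-sum analogue of Lemma~\ref{lemma:mikl} for a \emph{joint} (correlated) policy $\pi$, which carries a shared random seed $\omega$: now $a_h^k$ depends on $(\T^k_{h-1}, s_h^k, \omega)$ rather than just $(\T^k_{h-1}, s_h^k)$. The fix is to run the chain-rule decomposition of $\I_k^{\pi}(\E;\T^k_{H+1})$ conditionally on $\omega$ (legitimate since $\omega$ is drawn independently of $\E$). Then the per-step action terms $\I_k^{\pi}(\E; a_h^k \mid \T^k_{h-1}, s_h^k, \omega)$ vanish — given the seed and current state the joint action is an $\E$-independent function, so it carries no information about $\E$ — the reward terms vanish because $r_h^{(i)}$ is a deterministic function of $(s_h,a_h)$, and the state-transition terms collapse to the occupancy-weighted KL using $P(s_{h+1}^k \mid \T^k_h, \omega, \D_k) = P_h^{\meh}(\cdot\,|\,s_h^k, a_h^k)$, exactly as in the proof of Lemma~\ref{lemma:mikl}. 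I expect this seed-conditioning step to be the main (though still routine) obstacle; everything else is a line-by-line translation of Appendix~\ref{sec:proof_eq}.
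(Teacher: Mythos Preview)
Your proposal is correct and matches the paper's approach exactly: the paper omits the proof, stating only that it is ``similar to that of Lemma~\ref{lemma:eq}'', and your line-by-line translation of Appendix~\ref{sec:proof_eq} (together with the general-sum analogue of Lemma~\ref{lemma:mikl}, which the paper also invokes without proof in Appendix~\ref{appendix:general}) is precisely that. Your additional care in conditioning on the shared random seed $\omega$ to handle correlated joint policies is a sensible detail that the paper does not spell out.
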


For notational convenience, we abbreviate $\pids^{\text{NE}} = \{\pidskne \}_{k\in[K]}$ as $\pi = \{\pi^k \}_{k \in [K]}$, and abbreviate $V^{(i),\E}_{1,\pi}(s)$ as $V^{(i),\E}_{\pi}(s)$ when $h = 1$. Moreover, we let $\mu^{(i),\pi}_{\E} \triangleq (\pi^{(i),\dagger}_{\E}, \pi^{(-i)})$. 

\paragraph{Proof of regret bounds}
Recall that the Bayesian NE regret of $\pi$ takes the form
\begin{align}
&\mathsf{BR}^{\text{NE}}_K (\pi) \\
&= \ex_{\E \sim \rho} \left(\reg^{\text{NE}}_K(\E,\pi) \right) \\ 
&= \ex\left(\sum_{k=1}^K \sum_{i=1}^{\N} V^{(i),\E}_{\mu_{\E}^{(i),\pi^k}}(s_1) - V^{(i),\E}_{\pi^k}(s_1) \right), \\
&= \sum_{k=1}^K \ex_{\D_k} \left[ \sum_{i=1}^{\N} \ex_k \left(  V^{(i),\E}_{\mu_{\E}^{(i),\pi^k}}(s_1) - V^{(i),\E}_{\pi^k}(s_1) \right) \right]
 \end{align}
where $\mu_{\E}^{(i),\pi^k} = (\pi^{k,(i),\dagger}_{\E},  \pi^{k,(-i)})$. For any fixed $\D_k$ and any player $i \in [\N]$, we have 
\begin{align}
&\ex_k \left( V^{(i),\E}_{\mu_{\E}^{(i),\pi^k}}(s_1) - V^{(i),\E}_{\pi^k}(s_1) \right) \\
&=\ex_k \left( V^{(i),\E}_{\mu_{\E}^{(i),\pi^k}}(s_1)\right) - \left[ \ex_k \left(V^{(i),\E}_{\pi^k}(s_1) \right) + \lambda \I_k^{\pi^k}(\E; \T_{H+1}^k)  \right] + \lambda \I_k^{\pi^k}(\E; \T_{H+1}^k) \\
&= \ex_k \left( V^{(i),\E}_{\mu_{\E}^{(i),\pi^k}}(s_1)\right) - V_{\pi^k}^{(i),\meh}(s_1) + \lambda \I_k^{\pi^k}(\E; \T_{H+1}^k), \label{eq:general_0}
\end{align}
where the last step follows from Lemma~\ref{lemma:eq2}.
As $\pi^k$ can be a random policy, we sometimes write $V_{\pi^k}^{(i),\meh}(s_1)$ as $\ex_{\nu \sim \pi^k} [V_{\nu}^{(i),\meh}(s_1)]$ for clarity. For player $i$, we introduce the Thompson sampling (TS) policy $\pits$. The TS policy $\pits$ first samples a realization of the environment $\E = e$ according to the distribution $\E \sim \PP(\cdot|\D_k)$, and
then chooses the best response $\pi_{e}^{k,(i),\dagger}$ with respect to $\pi^{k,(-i)}$ in the environment $e$. Note that each best response $\pi_{e}^{k,(i),\dagger}$ is a pure policy, while the TS policy is a random policy. Since $\pi^k$ is a Nash equilibrium of the MG $\meh$, we have 
\begin{align}
V_{\pi^k}^{(i),\meh}(s_1) =  \ex_{\nu \sim \pi^k} [V_{\nu}^{(i),\meh}(s_1)] \ge \ex_{\nu \sim (\pits \times \pi^{k,(-i)})} [V_{\nu}^{(i),\meh}(s_1)].  \label{eq:general_1} 
\end{align}
According to Lemma~\ref{lemma:eq2}, we have 
\begin{align}
\ex_{\nu \sim (\pits \times \pi^{k,(-i)})} [V_{\nu}^{(i),\meh}(s_1)] =  \ex_{\nu \sim (\pits \times \pi^{k,(-i)})} \left[ \ex_k\left[ V_{\nu}^{(i),\E}(s_1) \right] + \lambda \I_k^{\nu}(\E; \T^k_{H+1}) \right]. \label{eq:general_2}
\end{align}
Substituting~\eqref{eq:general_1}-\eqref{eq:general_2} to~\eqref{eq:general_0}, we have 
\begin{align}
&\ex_k \left( V^{(i),\E}_{\mu_{\E}^{(i),\pi^k}}(s_1) - V^{(i),\E}_{\pi^k}(s_1) \right) \\
&\le \ex_k \left( V^{(i),\E}_{\mu_{\E}^{(i),\pi^k}}(s_1)\right) - \left(\ex_{\nu \sim (\pits \times \pi^{k,(-i)})} \left[ \ex_k\left[ V_{\nu}^{(i),\E}(s_1) \right] + \lambda \I_k^{\nu}(\E; \T^k_{H+1}) \right] \right) + \lambda \I_k^{\pi^k}(\E; \T_{H+1}^k) \\
&= \ex_k \left( V^{(i),\E}_{\mu_{\E}^{(i),\pi^k}}(s_1) - \ex_{\nu \sim (\pits \times \pi^{k,(-i)})} \left[ V_{\nu}^{(i),\E}(s_1) \right] \right) - \lambda \I_k^{\pits \times \pi^{k,(-i)}}(\E; \T_{H+1}^k) + \lambda \I_k^{\pi^k} (\E; \T_{H+1}^k) \\
&= \frac{\ex_k \left( V^{(i),\E}_{\mu_{\E}^{(i),\pi^k}}(s_1) - \ex_{\nu \sim (\pits \times \pi^{k,(-i)})} \left[ V_{\nu}^{(i),\E}(s_1) \right] \right)}{\sqrt{\lambda \I_k^{\pits \times \pi^{k,(-i)}}(\E; \T_{H+1}^k)}} \cdot \sqrt{\lambda \I_k^{\pits \times \pi^{k,(-i)}}(\E; \T_{H+1}^k)} \notag \\
&\qquad\qquad\qquad\qquad\qquad\qquad\qquad\qquad\qquad\qquad - \lambda \I_k^{\pits \times \pi^{k,(-i)}}(\E; \T_{H+1}^k) + \lambda \I_k^{\pi^k} (\E; \T_{H+1}^k) \\
&\le \frac{\left[\ex_k \left( V^{(i),\E}_{\mu_{\E}^{(i),\pi^k}}(s_1) - \ex_{\nu \sim (\pits \times \pi^{k,(-i)})} \left[ V_{\nu}^{(i),\E}(s_1) \right] \right)\right]^2}{4\lambda \I_k^{\pits \times \pi^{k,(-i)}}(\E; \T_{H+1}^k)} +\lambda \I_k^{\pi^k} (\E; \T_{H+1}^k), \label{eq:general_4}
\end{align}
where~\eqref{eq:general_4} follows from the AM–GM inequality. Below, we provide an upper bound on the numerator of the first term in~\eqref{eq:general_4}. 

Let's introduce a new environment $$\meh' = (H,\s,\A, \{P_h^{\meh'}\}_{h\in [H]}, \{r_h^{(i)}\}_{h\in[H],i\in[\N]} ),$$ 
where the transition kernel $P_h^{\meh'}(\cdot |s,a) = \ex_{\E \sim \PP(\cdot|\D_k)} [P_h^{\E}(\cdot|s,a)]$, and we point out that $\meh'$ differs from $\meh$ defined in Section~\ref{sec:general} only in terms of the reward functions. Considering~\eqref{eq:general_4}, we have  
\begin{align}
&\left|\ex_k \left( V^{(i),\E}_{\mu_{\E}^{(i),\pi^k}}(s_1) - \ex_{\nu \sim (\pits \times \pi^{k,(-i)})} \left[ V_{\nu}^{(i),\E}(s_1) \right] \right) \right| \\
&\le  \left|\ex_k \left[ V^{(i),\E}_{\mu_{\E}^{(i),\pi^k}}(s_1) \right] - \ex_{\nu \sim (\pits \times \pi^{k,(-i)})} \left[ V_{\nu}^{(i),\meh'}(s_1) \right] \right|  \notag \\
&\qquad\qquad\qquad\qquad\qquad\qquad\qquad + \left|\ex_{\nu \sim (\pits \times \pi^{k,(-i)})} \left[ V_{\nu}^{(i),\meh'}(s_1) \right] - \ex_k\left[\ex_{\nu \sim (\pits \times \pi^{k,(-i)})} \left[ V_{\nu}^{(i),\E}(s_1) \right] \right] \right|. \label{eq:general_5} 
\end{align}
We then consider the two terms in~\eqref{eq:general_5} separately.

For the first term in~\eqref{eq:general_5}, by recalling $\mu_{\E}^{(i),\pi^k} = (\pi^{k,(i),\dagger}_{\E},  \pi^{k,(-i)})$ as well as the definition of the TS policy $\pits$,  we have
\begin{align}
 &\left|\ex_k \left[ V^{(i),\E}_{\mu_{\E}^{(i),\pi^k}}(s_1) \right] - \ex_{\nu \sim (\pits \times \pi^{k,(-i)})} \left[ V_{\nu}^{(i),\meh'}(s_1) \right] \right| \\
 &= \left|\ex_k \left[ \ex_{\nu \sim (\pi^{k,(i),\dagger}_{\E} \times \pi^{k,(-i)}) }
 \left[V^{(i),\E}_{\nu}(s_1) \right] \right] - \ex_k \left[ \ex_{\nu \sim (\pi_{\E}^{k,(i),\dagger} \times \pi^{k,(-i)})} \left[ V_{\nu}^{(i),\meh'}(s_1) \right] \right] \right| \\
 &= \left|\ex_k \left[ \ex_{\nu \sim (\pi^{k,(i),\dagger}_{\E} \times \pi^{k,(-i)} ) } \left( V^{(i),\E}_{\nu}(s_1) - V_{\nu}^{(i),\meh'}(s_1) \right) \right] \right|. \label{eq:95}
\end{align}
Adapting Lemma~\ref{lemma:diff} to the general-sum MG setting, one can obtain that 
\begin{align}
&V^{(i),\E}_{\nu}(s_1) - V_{\nu}^{(i),\meh'}(s_1) = \sum_{h=1}^H \ex_{\nu}^{\meh}\left[ \Delta(\E, \meh', \nu, s_h^k, a_h^k) \right], \\
&\qquad\qquad\qquad\qquad \text{where} \ \Delta(\E, \meh',\nu, s, a) \triangleq \ex_{s' \sim P_h^{\E}(\cdot|s,a)} [V_{h+1,\nu}^{(i),\E}(s')] - \ex_{s' \sim P_h^{\meh'}(\cdot|s,a)} [V_{h+1,\nu}^{(i),\E}(s')].
\end{align}
We also define the occupancy measure with respect to  any joint policy $\pi$ and environment $e$  as
\begin{align}
d_{h,\pi}^{e}(s,a) \triangleq \PP_{\pi}^{e}(s_h^k = s, a_h^k = a).
\end{align}
Therefore,~\eqref{eq:95} can be upper-bounded as
\begin{align}
&\left|\ex_k \left[ \ex_{\nu \sim (\pi^{k,(i),\dagger}_{\E} \times \pi^{k,(-i)}) } \left( V^{(i),\E}_{\nu}(s_1) - V_{\nu}^{(i),\meh'}(s_1) \right) \right] \right| \\
&= \left|\ex_k \left[ \ex_{\nu \sim (\pi^{k,(i),\dagger}_{\E} \times \pi^{k,(-i)})} \left( \sum_{h=1}^H \ex_{\nu}^{\meh'}\left[ \Delta(\E, \meh', \nu, s_h^k, a_h^k) \right] \right) \right] \right| \\
&\le \sum_{h=1}^H \ex_k \ex_{\nu \sim (\pi^{k,(i),\dagger}_{\E} \times \pi^{k,(-i)})} \left[ \sum_{s,a} d_{h,\nu}^{\meh'}(s,a) \cdot |\Delta(\E, \meh', \nu, s, a)| \right] \\
&= \sum_{h=1}^H \ex_k \ex_{\nu \sim (\pi^{k,(i),\dagger}_{\E} \times \pi^{k,(-i)})} \sum_{s,a} \frac{d_{h,\nu}^{\meh'}(s,a)}{\sqrt{\ex_k \ex_{\nu \sim (\pi^{k,(i),\dagger}_{\E} \times \pi^{k,(-i)})} [d_{h,\nu}^{\meh'}(s,a)]}}  \notag\\
&\qquad\qquad\qquad\qquad \qquad\qquad\qquad\qquad \times  \sqrt{\ex_k \ex_{\nu \sim (\pi^{k,(i),\dagger}_{\E} \times \pi^{k,(-i)})} [d_{h,\nu}^{\meh'}(s,a)]} \cdot |\Delta(\E, \meh', \nu, s, a)| \\
&\le \sqrt{\sum_{h=1}^H \ex_k \ex_{\nu \sim (\pi^{k,(i),\dagger}_{\E} \times \pi^{k,(-i)})} \sum_{s,a} \frac{d_{h,\nu}^{\meh'}(s,a)^2}{\ex_k \ex_{\nu \sim (\pi^{k,(i),\dagger}_{\E} \times \pi^{k,(-i)})} [d_{h,\nu}^{\meh'}(s,a)]} }  \notag \\
&\qquad\qquad\qquad\qquad \times \sqrt{\sum_{h=1}^H \ex_k \ex_{\nu \sim (\pi^{k,(i),\dagger}_{\E} \times \pi^{k,(-i)})} \sum_{s,a} (\ex_k \ex_{\nu \sim (\pi^{k,(i),\dagger}_{\E} \times \pi^{k,(-i)})} [d_{h,\nu}^{\meh'}(s,a)]) \Delta(\E, \meh', \nu, s, a)^2}. \label{eq:103} 
 \end{align}
Note that the first term in~\eqref{eq:103} is upper-bounded by $\sqrt{SAH}$ by using the fact that $d_{h,\nu}^{\meh'}(s,a)^2 \le d_{h,\nu}^{\meh'}(s,a)$. For the second term in~\eqref{eq:103}, we note that 
$$\ex_k \ex_{\nu \sim (\pi^{k,(i),\dagger}_{\E} \times \pi^{k,(-i)})} [d_{h,\nu}^{\meh'}(s,a)] = d^{\meh'}_{h,(\pits \times \pi^{k,(-i)})}(s,a)$$ by the definition of the TS policy $\pits$. Moreover, by following the steps in~\eqref{eq:old1}-\eqref{eq:tan3} for zero-sum MGs, one can upper-bound the second term in~\eqref{eq:103} by
\begin{align}
    \sqrt{\frac{H^2}{2} \sum_{h=1}^H \ex_k \left[ \sum_{s,a} d^{\meh'}_{h,(\pits \times \pi^{k,(-i)})}(s,a) \cdot  \DD_{\text{KL}}\left(P_h^{\E}(\cdot|s,a) \Vert P_h^{\meh'}(\cdot|s,a)\right)  \right] }.
\end{align}
Analogous to Lemma~\ref{lemma:mikl}, one can also show that in the context of general-sum MGs,  
\begin{align}
   \sum_{h=1}^H \ex_k \left[ \sum_{s,a} d^{\meh'}_{h,(\pits \times \pi^{k,(-i)})}(s,a) \cdot  \DD_{\text{KL}}\left(P_h^{\E}(\cdot|s,a) \Vert P_h^{\meh'}(\cdot|s,a)\right)  \right] = \I_k^{(\pits \times \pi^{k,(-i)})}(\E; \T_{H+1}^k). 
\end{align}
Thus, the first term in~\eqref{eq:general_5} satisfies  
\begin{align}
    &\left|\ex_k \left[ V^{(i),\E}_{\mu_{\E}^{(i),\pi^k}}(s_1) \right] - \ex_{\nu \sim (\pits \times \pi^{k,(-i)})} \left[ V_{\nu}^{(i),\meh'}(s_1) \right] \right| \notag \\
    &\qquad\qquad\qquad\qquad\qquad\le \sqrt{\frac{1}{2}SAH^3 \cdot \I_k^{(\pits \times \pi^{k,(-i)})}(\E; \T_{H+1}^k) }.
\end{align}

Next, we consider the second term in~\eqref{eq:general_5}. Note that 
\begin{align}
&\left|\ex_{\nu \sim (\pits \times \pi^{k,(-i)})} \left[ V_{\nu}^{(i),\meh'}(s_1) \right] - \ex_k\left[\ex_{\nu \sim (\pits \times \pi^{k,(-i)})} \left[ V_{\nu}^{(i),\E}(s_1) \right] \right] \right| \\
&=\left|\ex_k \left[ \sum_{h=1}^H \ex_{\nu \sim (\pits \times \pi^{k,(-i)})}  \ex_{\nu}^{\meh'}\left[ \sum_{s'} \left[P_{h}^{\meh'}(s'|s_h^k, a_h^k) - P_{h}^{\E}(s'|s_h^k, a_h^k)  \right] \cdot V_{\nu}^{(i),\E}(s') \right] \right] \right| \\
&\le H \cdot \sum_{h=1}^H \ex_k\left[ \sum_{s,a} d_{h, (\pits \times \pi^{k,(-i)})}^{\meh'} (s,a) \cdot \sqrt{\frac{1}{2}\DD_{\text{KL}}\left(P_h^{\E}(\cdot|s,a) \Vert P_h^{\meh'}(\cdot|s,a)\right)} \right] \\
&\le H^2 \sqrt{ \frac{1}{2H} \sum_{h=1}^H \ex_k \left[\sum_{s,a} d_{h, (\pits \times \pi^{k,(-i)})}^{\meh'} (s,a) \cdot \DD_{\text{KL}}\left(P_h^{\E}(\cdot|s,a) \Vert P_h^{\meh'}(\cdot|s,a)\right) \right] } \label{eq:jen} \\
&\le \sqrt{\frac{H^3}{2} \I_k^{(\pits \times \pi^{k,(-i)})}(\E; \T_{H+1}^k)},
\end{align}
where~\eqref{eq:jen} follows from Jensen's inequality. 

Therefore, we have 
\begin{align}
    \left|\ex_k \left( V^{(i),\E}_{\mu_{\E}^{(i),\pi^k}}(s_1) - \ex_{\nu \sim (\pits \times \pi^{k,(-i)})} \left[ V_{\nu}^{(i),\E}(s_1) \right] \right) \right| \le 2\sqrt{SAH^3 \I_k^{(\pits \times \pi^{k,(-i)})}(\E; \T_{H+1}^k)}. \label{eq:sub}
\end{align}
Substituting~\eqref{eq:sub} to~\eqref{eq:general_4} yields that 
\begin{align}
    \ex_k \left( V^{(i),\E}_{\mu_{\E}^{(i),\pi^k}}(s_1) - V^{(i),\E}_{\pi^k}(s_1) \right) \le \frac{SAH^3}{\lambda} + \lambda \cdot \I_k^{\pi^k}(\E; \T_{H+1}^k).
\end{align}
Finally, we obtain that
\begin{align}
\mathsf{BR}^{\text{NE}}_K (\pi) &= \sum_{k=1}^K \ex_{\D_k} \left[ \sum_{i=1}^{\N} \ex_k \left(  V^{(i),\E}_{\mu_{\E}^{(i),\pi^k}}(s_1) - V^{(i),\E}_{\pi^k}(s_1) \right) \right] \\
&= \frac{SAH^3 KN}{\lambda} + \lambda N \sum_{k=1}^K \ex_{\D_k} \left[\I^{\pi^k}_k(\E; \T_{H+1}^k) \right] \\
&= \frac{SAH^3 KN}{\lambda} + \lambda N \cdot \I^{\pi}(\E; \D_{K+1}) \\
&= 3NS^{3/2}AH^2 \sqrt{K \log(SKH)},
 \end{align}
 where the last step is obtained by setting $\lambda = \sqrt{HK^2/S\log(SKH)}$. This completes the proof of Theorem~\ref{thm:general}.

\section{Proof of Lemma~\ref{lemma:property}}
\begin{proof} \label{appendix:lemma2}

Let $\Pi_A = \{\s \to \Delta(\A) \}^H$ and $\Pi_B = \{\s \to \Delta(\B) \}^H$. 
For any subspace $\Theta_c$, any pair of environment $e,e' \in \Theta_c$, and any policy $(\mu,\nu) \in \Pi_A \times \Pi_B$, we have
\begin{align}
&V_{1,\mu,\nu}^e(s_1) - V_{1,\mu,\nu}^{e'}(s_1) \\
&=\sum_{h=1}^H \ex_{\mu,\nu}^{e'}\left[ \ex_{s' \sim P_h^e(\cdot|s_h,a_h,b_h)} [V_{h+1,\mu,\nu}^e(s')] - \ex_{s' \sim P_h^{e'}(\cdot|s_h,a_h,b_h)} [V_{h+1,\mu,\nu}^e(s') \right]  \label{eq:sss}\\
&=\sum_{h=1}^H \ex_{\mu,\nu}^{e'} \left[[P_h^{e}(\cdot|s_h,a_h,b_h) - P_h^{e'}(\cdot|s_h,a_h,b_h)] \cdot V_{h+1,\mu,\nu}^e(s')  \right] \\
&\le H \sum_{h=1}^H \ex_{\mu,\nu}^{e'} \left[ \big|P_h^{e}(\cdot|s_h,a_h,b_h) - P_h^{e'}(\cdot|s_h,a_h,b_h)\big| \right] \\
&\le H \sum_{h=1}^H \max_{s,a,b} \big|P_h^{e}(\cdot|s,a,b) - P_h^{e'}(\cdot|s,a,b)\big| \\
&\le \epsilon,
\end{align}
where~\eqref{eq:sss} follows from Lemma~\ref{lemma:diff}, and the last inequality follows from the property of $\Theta_c$. Therefore, based on the construction of $\te$ in~\eqref{eq:concrete}, it is clear that 
\begin{align}
        \ex_k[d_{\Pi_A, \Pi_B}(\E,\te)] \le \epsilon \quad \text{and} \quad  \PP\left(d_{\Pi_A, \Pi_B}(\E,\te) > \epsilon \right) = 0.
    \end{align}
This completes the proof of Lemma~\ref{lemma:property}.
\end{proof}

\bibliographystyle{IEEEtran}
\bibliography{biblio}

\begin{thebibliography}{10}
\providecommand{\url}[1]{#1}
\csname url@samestyle\endcsname
\providecommand{\newblock}{\relax}
\providecommand{\bibinfo}[2]{#2}
\providecommand{\BIBentrySTDinterwordspacing}{\spaceskip=0pt\relax}
\providecommand{\BIBentryALTinterwordstretchfactor}{4}
\providecommand{\BIBentryALTinterwordspacing}{\spaceskip=\fontdimen2\font plus
\BIBentryALTinterwordstretchfactor\fontdimen3\font minus
  \fontdimen4\font\relax}
\providecommand{\BIBforeignlanguage}[2]{{%
\expandafter\ifx\csname l@#1\endcsname\relax
\typeout{** WARNING: IEEEtran.bst: No hyphenation pattern has been}%
\typeout{** loaded for the language `#1'. Using the pattern for}%
\typeout{** the default language instead.}%
\else
\language=\csname l@#1\endcsname
\fi
#2}}
\providecommand{\BIBdecl}{\relax}
\BIBdecl

\bibitem{brambilla2013swarm}
M.~Brambilla, E.~Ferrante, M.~Birattari, and M.~Dorigo, ``Swarm robotics: a
  review from the swarm engineering perspective,'' \emph{Swarm Intelligence},
  vol.~7, pp. 1--41, 2013.

\bibitem{shalev2016safe}
S.~Shalev-Shwartz, S.~Shammah, and A.~Shashua, ``Safe, multi-agent,
  reinforcement learning for autonomous driving,'' \emph{arXiv preprint
  arXiv:1610.03295}, 2016.

\bibitem{silver2016mastering}
D.~Silver, A.~Huang, C.~J. Maddison, A.~Guez, L.~Sifre, G.~Van Den~Driessche,
  J.~Schrittwieser, I.~Antonoglou, V.~Panneershelvam, M.~Lanctot \emph{et~al.},
  ``Mastering the game of go with deep neural networks and tree search,''
  \emph{nature}, vol. 529, no. 7587, pp. 484--489, 2016.

\bibitem{bai2020near}
Y.~Bai, C.~Jin, and T.~Yu, ``Near-optimal reinforcement learning with
  self-play,'' \emph{Advances in neural information processing systems},
  vol.~33, pp. 2159--2170, 2020.

\bibitem{jin2021v}
C.~Jin, Q.~Liu, Y.~Wang, and T.~Yu, ``V-learning—a simple, efficient,
  decentralized algorithm for multiagent reinforcement learning,''
  \emph{Mathematics of Operations Research}, 2023.

\bibitem{huang2021towards}
B.~Huang, J.~D. Lee, Z.~Wang, and Z.~Yang, ``Towards general function
  approximation in zero-sum markov games,'' in \emph{International Conference
  on Learning Representations}, 2022.

\bibitem{liu2021sharp}
Q.~Liu, T.~Yu, Y.~Bai, and C.~Jin, ``A sharp analysis of model-based
  reinforcement learning with self-play,'' in \emph{International Conference on
  Machine Learning}, 2021, pp. 7001--7010.

\bibitem{xiong2022self}
W.~Xiong, H.~Zhong, C.~Shi, C.~Shen, and T.~Zhang, ``A self-play posterior
  sampling algorithm for zero-sum markov games,'' in \emph{International
  Conference on Machine Learning}, 2022, pp. 24\,496--24\,523.

\bibitem{qiu2023posterior}
S.~Qiu, Z.~Dai, H.~Zhong, Z.~Wang, Z.~Yang, and T.~Zhang, ``Posterior sampling
  for competitive {RL}: Function approximation and partial observation,'' in
  \emph{Advances in Neural Information Processing Systems}, 2023.

\bibitem{russo2014learning}
D.~Russo and B.~Van~Roy, ``Learning to optimize via information-directed
  sampling,'' in \emph{Advances in Neural Information Processing Systems},
  2014, pp. 1583--1591.

\bibitem{russo2018learning}
------, ``Learning to optimize via information-directed sampling,''
  \emph{Operations Research}, vol.~66, no.~1, pp. 230--252, 2018.

\bibitem{hao2022regret}
B.~Hao and T.~Lattimore, ``Regret bounds for information-directed reinforcement
  learning,'' \emph{Advances in Neural Information Processing Systems},
  vol.~35, pp. 28\,575--28\,587, 2022.

\bibitem{russo2018tutorial}
D.~J. Russo, B.~Van~Roy, A.~Kazerouni, I.~Osband, Z.~Wen \emph{et~al.}, ``A
  tutorial on thompson sampling,'' \emph{Foundations and
  Trends{\textregistered} in Machine Learning}, vol.~11, no.~1, pp. 1--96,
  2018.

\bibitem{nikolov2018information}
N.~Nikolov, J.~Kirschner, F.~Berkenkamp, and A.~Krause, ``Information-directed
  exploration for deep reinforcement learning,'' in \emph{International
  Conference on Learning Representations}, 2019.

\bibitem{lu2023reinforcement}
X.~Lu, B.~Van~Roy, V.~Dwaracherla, M.~Ibrahimi, I.~Osband, Z.~Wen
  \emph{et~al.}, ``Reinforcement learning, bit by bit,'' \emph{Foundations and
  Trends{\textregistered} in Machine Learning}, vol.~16, no.~6, pp. 733--865,
  2023.

\bibitem{kirschner2018information}
J.~Kirschner and A.~Krause, ``Information directed sampling and bandits with
  heteroscedastic noise,'' in \emph{Conference On Learning Theory}, 2018, pp.
  358--384.

\bibitem{hao2021information}
B.~Hao, T.~Lattimore, and W.~Deng, ``Information directed sampling for sparse
  linear bandits,'' in \emph{Advances in Neural Information Processing
  Systems}, 2021, pp. 16\,738--16\,750.

\bibitem{hao2023exploration}
J.~Hao, T.~Yang, H.~Tang, C.~Bai, J.~Liu, Z.~Meng, P.~Liu, and Z.~Wang,
  ``Exploration in deep reinforcement learning: From single-agent to multiagent
  domain,'' \emph{IEEE Transactions on Neural Networks and Learning Systems},
  2023.

\bibitem{berger2003rate}
T.~Berger, ``Rate-distortion theory,'' \emph{Wiley Encyclopedia of
  Telecommunications}, 2003.

\bibitem{arumugam2021deciding}
D.~Arumugam and B.~Van~Roy, ``Deciding what to learn: A rate-distortion
  approach,'' in \emph{International Conference on Machine Learning}, 2021, pp.
  373--382.

\bibitem{arumugam2021value}
------, ``The value of information when deciding what to learn,''
  \emph{Advances in Neural Information Processing Systems}, vol.~34, pp.
  9816--9827, 2021.

\bibitem{arumugam2022deciding}
------, ``Deciding what to model: Value-equivalent sampling for reinforcement
  learning,'' \emph{Advances in Neural Information Processing Systems},
  vol.~35, pp. 9024--9044, 2022.

\bibitem{bai2024pessimistic}
C.~Bai, L.~Wang, J.~Hao, Z.~Yang, B.~Zhao, Z.~Wang, and X.~Li, ``Pessimistic
  value iteration for multi-task data sharing in offline reinforcement
  learning,'' \emph{Artificial Intelligence}, vol. 326, p. 104048, 2024.

\bibitem{zhang2021multi}
K.~Zhang, Z.~Yang, and T.~Ba{\c{s}}ar, ``Multi-agent reinforcement learning: A
  selective overview of theories and algorithms,'' \emph{Handbook of
  reinforcement learning and control}, pp. 321--384, 2021.

\bibitem{bai2020provable}
Y.~Bai and C.~Jin, ``Provable self-play algorithms for competitive
  reinforcement learning,'' in \emph{International conference on machine
  learning}, 2020, pp. 551--560.

\bibitem{qiu2021provably}
S.~Qiu, X.~Wei, J.~Ye, Z.~Wang, and Z.~Yang, ``Provably efficient fictitious
  play policy optimization for zero-sum markov games with structured
  transitions,'' in \emph{International Conference on Machine Learning}, 2021,
  pp. 8715--8725.

\bibitem{xie2020learning}
Q.~Xie, Y.~Chen, Z.~Wang, and Z.~Yang, ``Learning zero-sum simultaneous-move
  markov games using function approximation and correlated equilibrium,'' in
  \emph{Conference on Learning Theory}, 2020, pp. 3674--3682.

\bibitem{chen2022almost}
Z.~Chen, D.~Zhou, and Q.~Gu, ``Almost optimal algorithms for two-player
  zero-sum linear mixture markov games,'' in \emph{International Conference on
  Algorithmic Learning Theory}, 2022, pp. 227--261.

\bibitem{jin2021power}
C.~Jin, Q.~Liu, and T.~Yu, ``The power of exploiter: Provable multi-agent rl in
  large state spaces,'' in \emph{International Conference on Machine Learning},
  2022, pp. 10\,251--10\,279.

\bibitem{liu2023maximize}
Z.~Liu, M.~Lu, W.~Xiong, H.~Zhong, H.~Hu, S.~Zhang, S.~Zheng, Z.~Yang, and
  Z.~Wang, ``Maximize to explore: One objective function fusing estimation,
  planning, and exploration,'' in \emph{Advances in Neural Information
  Processing Systems}, 2023.

\bibitem{mao2023role}
C.~Mao, Q.~Zhang, Z.~Wang, and X.~Li, ``On the role of general function
  approximation in offline reinforcement learning,'' in \emph{The Twelfth
  International Conference on Learning Representations (ICLR)}, 2024.

\bibitem{jin2020sample}
C.~Jin, S.~Kakade, A.~Krishnamurthy, and Q.~Liu, ``Sample-efficient
  reinforcement learning of undercomplete pomdps,'' in \emph{Advances in Neural
  Information Processing Systems}, 2020, pp. 18\,530--18\,539.

\bibitem{hu2019modelling}
S.~Hu, C.-w. Leung, and H.-f. Leung, ``Modelling the dynamics of multiagent
  q-learning in repeated symmetric games: a mean field theoretic approach,''
  \emph{Advances in Neural Information Processing Systems}, vol.~32, 2019.

\bibitem{song2021can}
Z.~Song, S.~Mei, and Y.~Bai, ``When can we learn general-sum markov games with
  a large number of players sample-efficiently?'' in \emph{International
  Conference on Learning Representations}, 2022.

\bibitem{hu2022dynamics}
S.~Hu, C.-W. Leung, H.-f. Leung, and H.~Soh, ``The dynamics of q-learning in
  population games: A physics-inspired continuity equation model,'' in
  \emph{Proceedings of the 21st International Conference on Autonomous Agents
  and Multiagent Systems}, 2022, pp. 615--623.

\bibitem{liu2022sample}
Q.~Liu, C.~Szepesv{\'a}ri, and C.~Jin, ``Sample-efficient reinforcement
  learning of partially observable markov games,'' \emph{Advances in Neural
  Information Processing Systems}, vol.~35, pp. 18\,296--18\,308, 2022.

\bibitem{foster2023complexity}
D.~Foster, D.~J. Foster, N.~Golowich, and A.~Rakhlin, ``On the complexity of
  multi-agent decision making: From learning in games to partial monitoring,''
  in \emph{The Thirty Sixth Annual Conference on Learning Theory}, 2023, pp.
  2678--2792.

\bibitem{hu2023best}
S.~Hu, H.~Soh, and G.~Piliouras, ``The best of both worlds in network
  population games: Reaching consensus and convergence to equilibrium,''
  \emph{Advances in Neural Information Processing Systems}, vol.~36, 2023.

\bibitem{wang2023breaking}
Y.~Wang, Q.~Liu, Y.~Bai, and C.~Jin, ``Breaking the curse of multiagency:
  Provably efficient decentralized multi-agent {RL} with function
  approximation,'' in \emph{Proceedings of Thirty Sixth Conference on Learning
  Theory}, 2023.

\bibitem{cui2023breaking}
Q.~Cui, K.~Zhang, and S.~Du, ``Breaking the curse of multiagents in a large
  state space: {RL} in markov games with independent linear function
  approximation,'' in \emph{The Thirty Sixth Annual Conference on Learning
  Theory}, 2023, pp. 2651--2652.

\bibitem{xiong2023sample}
N.~Xiong, Z.~Liu, Z.~Wang, and Z.~Yang, ``Sample-efficient multi-agent rl: An
  optimization perspective,'' \emph{arXiv preprint arXiv:2310.06243}, 2023.

\bibitem{liu2018graph}
F.~Liu, S.~Buccapatnam, and N.~Shroff, ``Information directed sampling for
  stochastic bandits with graph feedback,'' in \emph{AAAI Conference on
  Artificial Intelligence}, 2018.

\bibitem{kirschner2021asymptotically}
J.~Kirschner, T.~Lattimore, C.~Vernade, and C.~Szepesv{\'a}ri, ``Asymptotically
  optimal information-directed sampling,'' in \emph{Conference on Learning
  Theory}, 2021, pp. 2777--2821.

\bibitem{lu2019confidence}
X.~Lu and B.~Van~Roy, ``Information-theoretic confidence bounds for
  reinforcement learning,'' \emph{Advances in Neural Information Processing
  Systems}, vol.~33, 2019.

\bibitem{moradipari2023improved}
A.~Moradipari, M.~Pedramfar, M.~S. Zini, and V.~Aggarwal, ``Improved bayesian
  regret bounds for thompson sampling in reinforcement learning,'' in
  \emph{Advances in Neural Information Processing Systems}, 2023.

\bibitem{chakraborty2023steering}
S.~Chakraborty, A.~S. Bedi, A.~Koppel, M.~Wang, F.~Huang, and D.~Manocha,
  ``Steering: Stein information directed exploration for model-based
  reinforcement learning,'' in \emph{International Conference on Machine
  Learning}, 2023, pp. 3949--3978.

\bibitem{grimm2020value}
C.~Grimm, A.~Barreto, S.~Singh, and D.~Silver, ``The value equivalence
  principle for model-based reinforcement learning,'' \emph{Advances in Neural
  Information Processing Systems}, vol.~33, pp. 5541--5552, 2020.

\bibitem{grimm2021proper}
C.~Grimm, A.~Barreto, G.~Farquhar, D.~Silver, and S.~Singh, ``Proper value
  equivalence,'' \emph{Advances in Neural Information Processing Systems},
  vol.~34, pp. 7773--7786, 2021.

\bibitem{foster2021statistical}
D.~J. Foster, S.~M. Kakade, J.~Qian, and A.~Rakhlin, ``The statistical
  complexity of interactive decision making,'' \emph{arXiv preprint
  arXiv:2112.13487}, 2021.

\bibitem{xu2023bayesian}
Y.~Xu and A.~Zeevi, ``Bayesian design principles for frequentist sequential
  learning,'' in \emph{International Conference on Machine Learning}, 2023, pp.
  38\,768--38\,800.

\bibitem{cover1999elements}
T.~M. Cover, \emph{Elements of information theory}.\hskip 1em plus 0.5em minus
  0.4em\relax John Wiley \& Sons, 1999.

\bibitem{filar2012competitive}
J.~Filar and K.~Vrieze, \emph{Competitive Markov decision processes}.\hskip 1em
  plus 0.5em minus 0.4em\relax Springer Science \& Business Media, 2012.

\bibitem{hoeting1999bayesian}
J.~A. Hoeting, D.~Madigan, A.~E. Raftery, and C.~T. Volinsky, ``Bayesian model
  averaging: a tutorial,'' \emph{Statistical science}, vol.~14, no.~4, pp.
  382--417, 1999.

\bibitem{osband2021epistemic}
I.~Osband, Z.~Wen, S.~M. Asghari, V.~Dwaracherla, M.~Ibrahimi, X.~Lu, and
  B.~Van~Roy, ``Epistemic neural networks,'' in \emph{Advances in Neural
  Information Processing Systems}, 2023.

\bibitem{blahut1972computation}
R.~Blahut, ``Computation of channel capacity and rate-distortion functions,''
  \emph{IEEE transactions on Information Theory}, vol.~18, no.~4, pp. 460--473,
  1972.

\bibitem{arimoto1972algorithm}
S.~Arimoto, ``An algorithm for computing the capacity of arbitrary discrete
  memoryless channels,'' \emph{IEEE Transactions on Information Theory},
  vol.~18, no.~1, pp. 14--20, 1972.

\end{thebibliography}
\end{document}